\newcommand{\parag}[1]{ {\bf \noindent #1}}
\newcommand{\defeq}{\stackrel{\textup{def}}{=}}
\newcommand{\nfrac}{\nicefrac}
\newcommand{\eps}{\epsilon}
\newcommand{\opt}{\mathrm{opt}}
\newcommand{\supp}{\mathrm{supp}}
\newcommand{\rank}{\mathrm{rank}}
\newcommand{\maxx}{M_x}
\newcommand{\mdet}{\mathcal{D}}
\newcommand{\Wb}{\overline{W_J}}
\newcommand{\Wt}{\widetilde{W_J}}
\newcommand{\px}{P_{\mathrm{max}}}
\newcommand{\ene}{\mathcal{E}}
\newcommand{\barr}{\mathcal{B}}
\newcommand{\pot}{\mathcal{V}}
\newcommand{\conv}{\mathrm{conv}}
\newcommand{\dist}{\mathrm{dist}}
\renewcommand{\epsilon}{\varepsilon}
\def\showauthornotes{1} 
\def\showdraftbox{1}
\newtheorem{theorem}{Theorem}[section]
\newtheorem{lemma}[theorem]{Lemma}
\newtheorem{remark}[theorem]{Remark}
\newtheorem{proposition}[theorem]{Proposition}
\newtheorem{corollary}[theorem]{Corollary}
\newtheorem{fact}[theorem]{Fact}
\newtheorem{example}[theorem]{Example}
\def\FullBox{\hbox{\vrule width 6pt height 6pt depth 0pt}}
\def\qed{\ifmmode\qquad\FullBox\else{\unskip\nobreak\hfil
\penalty50\hskip1em\null\nobreak\hfil\FullBox
\parfillskip=0pt\finalhyphendemerits=0\endgraf}\fi}
\def\qedsketch{\ifmmode\Box\else{\unskip\nobreak\hfil
\penalty50\hskip1em\null\nobreak\hfil$\Box$
\parfillskip=0pt\finalhyphendemerits=0\endgraf}\fi}
\newenvironment{proofof}[1]{\begin{trivlist} \item {\bf Proof
#1:~~}}
  {\qed\end{trivlist}}
\newcommand\Z{\mathbb Z}
\newcommand\R{\mathbb R}
\newcommand{\marginlabel}[1]%
{\mbox{}\marginpar{\it{\raggedleft\hspace{0pt}#1}}}
\definecolor{Mygray}{gray}{0.8}
\let\csname ifcommentflag\expandafter\endcsname
\newcommand{\todo}[1]{\colorbox{Mygray}{\color{red}#1}}
\newcommand{\todo}[1]{}
\newcommand{\Authornote}[2]{{\sf\small\color{red}{[#1: #2]}}}
\newcommand{\Authoredit}[2]{{\sf\small\color{red}{[#1]}\color{blue}{#2}}}
\newcommand{\Authorfnote}[2]{\footnote{\color{red}{#1: #2}}}
\newcommand{\Authorfixme}[1]{\Authornote{#1}{\textbf{??}}}
\newcommand{\Authormarginmark}[1]{\marginpar{\textcolor{red}{\fbox{%\Large
#1:!}}}}
\newcommand{\Authornote}[2]{}
\newcommand{\Authoredit}[2]{}
\newcommand{\Authorcomment}[2]{}
\newcommand{\Authorfnote}[2]{}
\newcommand{\Authorfixme}[1]{}
\newcommand{\Authormarginmark}[1]{}
\newcommand{\inparen}[1]{\left(#1\right)}             %\inparen{x+y}  is (x+y)
\newcommand{\inbraces}[1]{\left\{#1\right\}}           %\inbrace{x+y}  is {x+y}
\newcommand{\insquare}[1]{\left[#1\right]}             %\insquare{x+y}  is [x+y]
\newcommand{\inangle}[1]{\left\langle#1\right\rangle} %\inangle{A}    is <A>
\def\abs#1{\left| #1 \right|}
\newcommand{\norm}[1]{\ensuremath{\left\lVert #1 \right\rVert}}
\newcommand{\diag}[1]{{\sf Diag}\left({#1}\right)}
\newlength{\pgmtab}  %  \pgmtab is the width of each tab in the
\newcounter{lecnum}
\newlength{\tpush}
\begin{document}
\title{\bf On a Natural Dynamics for Linear Programming}

\author{Damian Straszak}
\thanks{Damian Straszak, \'{E}cole Polytechnique F\'{e}d\'{e}rale de Lausanne (EPFL)} 

\author{Nisheeth K. Vishnoi}
\thanks{Nisheeth K. Vishnoi, \'{E}cole Polytechnique F\'{e}d\'{e}rale de Lausanne (EPFL)}

\maketitle

\begin{abstract}
In this paper we study dynamics inspired by {\em Physarum polycephalum} (a slime mold) for solving linear programs (\cite{NTY00,IJNT11,JZ12}). These dynamics are arrived at by a local and mechanistic interpretation of the inner workings of the slime mold and a global optimization perspective has been  lacking even in the simplest of instances.  Our first result is an interpretation of the dynamics as an optimization process. We show that Physarum dynamics can be seen as a steepest-descent type algorithm on a certain Riemannian manifold. Moreover, we prove that the trajectories of Physarum are in fact paths of optimizers to a parametrized family of convex programs, in which the objective is a linear cost function regularized by an entropy barrier. Subsequently, we rigorously establish several important properties of solution curves of Physarum. We prove global existence of such solutions and show that they have limits, being optimal solutions of the underlying LP. 
Finally, we  show that the discretization of  the Physarum dynamics is efficient  for a  class of linear programs,   which include unimodular constraint matrices. Thus, together, our results shed some light on how nature might be  solving instances of perhaps the most complex problem in {\bf P}: linear programming.

\end{abstract}

\tableofcontents

\thispagestyle{empty}

\newpage

\setcounter{page}{1}

\section{Introduction}

There is growing evidence that several fundamental processes in nature are inherently computational and are best viewed from the standpoint of computation.
Consider the  case of {\em Physarum polycephalum} (slime mold), a single celled organism which has been the source of much excitement among biologists and computer scientists due to its ability to solve complex optimization problems.
This started with  an experiment \cite{NTY00} that showed the slime mold could solve the shortest path problem on a maze.  Roughly speaking, the slime mold  is a collection of tubes which it uses to transport food across its body and it is in this process that it shows the ability to compute. 
Soon after, the time evolution of Physarum was captured  by mathematical biologists using the language of dynamical systems giving rise to a broad class of dynamics for basic computational problems such as shortest paths, transshipment problems, and linear programs (see \cite{TKN07, IJNT11,JZ12}).

The Physarum dynamics are highly nonlinear and  decentralized and, thus,  there is no a priori reason for them to converge anywhere, let alone to solutions of problems, such as those listed above, 
 where a global optima is sought. However,   experiments and simulations suggest that the solution not only converges, it converges quickly! 
Two sets of questions arise naturally from a computational viewpoint: (1) While the dynamical system gives a mechanistic insight into the workings of a Physarum, it fails to explain what is going on globally. Can we explain Physarum dynamics from an optimization  perspective? (2) Can we rigorously explain what experiments, simulations and partial results about the Physarum dynamics suggest, namely the
 (a) existence of a solution, (b) convergence to an optimal solution and (c) time to convergence of the (continuous \emph{and} discretized) dynamics?

In this paper we study these questions for the Physarum dynamics for linear programming proposed by \cite{JZ12} which generalizes earlier models for shortest path and the transshipment problem.
For question (1) there  are no known answers even in the simplest setting of shortest path:  on the one  hand the Physarum dynamics resemble the gradient descent method; however,  one can show that it is not. On the other hand, 
the dynamics are reminiscent of interior point methods; however, what  makes the Physarum dynamics remarkable is that one can start from outside the feasible region of the linear program and yet converge to the optimal point.
For question (2), the situation is slightly better. For various special cases such as 
shortest path \cite{BMV12, BBDKM13} and the transshipment problem \cite{IJNT11, SV15} 
(a)-(c) have been answered. However, these results heavily rely on  the special structure of the problem. For LPs, the prior results prove convergence {\em assuming} that (i) the solution exists (ii) the feasible region of the LP is bounded and (iii) the optimizer is unique. As we argue later, justifying or removing these ends up being quite hard. Current results do not bound the convergence time.

The main contribution of this paper is to provide answers to all of these questions: We show that Physarum dynamics can be seen as a steepest-descent type algorithm, however, not in Euclidean space, rather in a space endowed with a  {\em Riemannian metric} obtained from an {\em entropy-like function}. 
As a consequence, we show that Physarum dynamics for linear programming are obtained by balancing two forces in this Riemannian manifold: the need to reduce cost and the need to be feasible.
Moreover, we show that the continuous trajectories of Physarum are in fact paths of optimizers to a parametrized family of convex programs, in which the objective is a linear cost function regularized by an entropy barrier. Subsequently, we 
 establish the global existence of  solutions of Physarum dynamics and show that they have limits, being optimal solutions of the underlying LP. 
Finally,  we present a time-bound on a discretization of the dynamics to yield an algorithm that is provably efficient for a  class of linear programs,   which include unimodular constraint matrices.
Our proofs synthesize concepts and tools from several distinct areas such as Riemannian manifolds, convex optimization and dynamical systems. These techniques should be useful in providing  insights into the plethora of   steepest-descent type algorithms  in optimization, machine learning and,  recently, in theory.

 \paragraph{\bf The Physarum Dynamics for Linear Programming.} 

       Consider a linear program in the standard form 
\begin{equation}\label{lp}
\min \; c^\top x \; \; \; \; \mbox{s.t.} \; \; Ax=b, \; \; \;  x\geq 0
\end{equation}
 where $A\in \Z^{m\times n}$, $c\in \Z^{n}_{>0}$ and $b\in \Z^m$ and which has a feasible solution. 
To make the exposition clear we assume that $A$ is full-rank, in other words: $\rank(A)=m$.\footnote{This assumption is not necessary for our results.} 
We now describe the Physarum dynamics for linear programming. 

Consider any vector $x\in \R^n$ with $x>0$ 
and let  $W$ be the diagonal matrix with entries $\nfrac {x_i}{c_i}.$ Let  $L\defeq AWA^\top$ and  $p \in \R^n$ is the solution to $L p = b.$\footnote{One can show that $p$ exists and is unique because of the full-rank assumption on $A$, otherwise one can use the pseudo-inverse.}
Let $q \defeq WA^\top p.$ 
The Physarum dynamics for the linear program given by $(A,b,c)$ then is
\begin{align}\label{eq:dynamics}
\dot{x}= q-x
\end{align}
defined over the domain $\Omega = \R^{n}_{>0}.$ The above is just a simplified notation for the coupled differential equations:
$$ \forall i,  \; \; \frac{dx_i(t)}{dt} =q_i(t) - x_i(t).$$
 This can be also rewritten as
\begin{align}\label{eq:dynamics_matrix}
\dot{x}= W(A^\top L^{-1} b - c).
\end{align}
The dynamical system has an initial condition of the form $x(0)=s$ for some $s>0.$ {\em Note that it is not required  that $s$ is feasible}, i.e., $As=b.$  Often, the exposition and proofs are simpler when $s$ is also feasible. We often refer to the general case as Physarum dynamics with {\em infeasible start} and this special case as with {\em feasible start.}  
Notice that when $A$ is the $|V|\times |E|$ incidence matrix of a graph $G=(V,E),$ then if we set up an electrical network where the edge $i$ has resistance $\nfrac{x_{i}}{c_{i}},$ then $L$ is the corresponding graph Laplacian and $q$ the electrical flow in each edge corresponding to the demand vector $b.$ Viewing the slime mold as a collection of tubes organized in a network where the length of edge $i$ is  $c_{i}$  and cross-section $x_{i}$  and noting that the equations of Poiseuille flows are identical to electrical flows,  \eqref{eq:dynamics} captures the local time evolution of the  {\em cross-section} of each tube.

\vspace{2mm}
\paragraph{\bf Our Contributions.}
The first problem we have at hand is whether the system~\eqref{eq:dynamics} has a solution such that $x(t)>0$ for all $t>0.$ Unfortunately there are no general theorems in dynamical systems to establish this and proving existence can be difficult.
Indeed, this existence problem turns out to be non-trivial for Physarum dynamics and is our first result.

\begin{theorem}[{\bf Existence of Solution}]\label{thm:existence0}
For any initial condition $s\in \R_{>0}^n,$  the Physarum dynamics $\dot{x}=q-x$ has a unique solution $x:[0,\infty)\to \R_{>0}^n$ with $x(0)=s$.
\end{theorem}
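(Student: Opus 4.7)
\begin{proofsketch}
The plan is to follow the standard ODE template: establish local existence and uniqueness via Picard--Lindel\"of, and then extend to $[0,\infty)$ by ruling out finite-time escape from $\Omega = \R^n_{>0}$ on every bounded interval. The two escape modes I need to exclude are (i) boundary approach ($\min_i x_i(t) \to 0$) and (ii) norm blow-up ($\|x(t)\| \to \infty$).

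\textbf{Local existence.} On $\Omega$, $W(x) = \diag{x_i/c_i}$ is positive definite and, because $\rank(A) = m$, so is $L(x) = A W(x) A^\top$. Hence $p(x) = L(x)^{-1} b$, $q(x) = W(x) A^\top p(x)$, and the right-hand side $F(x) := q(x) - x$ are all smooth on $\Omega$. Picard--Lindel\"of then produces a unique maximal solution $x:[0,T^\ast)\to\Omega$ with $x(0) = s$, and uniqueness on $[0,\infty)$ will be automatic once I show $T^\ast = \infty$.

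\textbf{Norm bound (no blow-up).} A direct computation, using $Aq = Lp = b$ by definition of $p$, gives $A\dot{x} = Aq - Ax = b - Ax$, which integrates explicitly to
\begin{equation*}
Ax(t) \;=\; b + (As - b)\,e^{-t}.
\end{equation*}
So $Ax(t)$ is uniformly bounded on $[0,\infty)$, feasible start or not. Using the identity $c^\top q = \sum_i x_i (A^\top p)_i = p^\top Ax$, the cost satisfies the linear ODE $\tfrac{d}{dt}(c^\top x) = p^\top Ax - c^\top x$; combined with the bound on $\|p\|$ from the next step this integrates to a uniform bound on $c^\top x$ on $[0,T]$, and since $c \in \Z^n_{>0}$ gives $c_i \geq 1$, this bounds $\|x(t)\|_1$.

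\textbf{Boundary bound (no boundary approach).} The key identity is the log-derivative
\begin{equation*}
\tfrac{d}{dt}\log x_i \;=\; \frac{q_i}{x_i} - 1 \;=\; \frac{(A^\top p)_i}{c_i} - 1,
\end{equation*}
which shows that a uniform bound $\|A^\top p\|_\infty \leq M$ on $[0,T]$ yields $x_i(t) \geq x_i(0)\,e^{-(M+1)T}$, forbidding boundary approach before time $T$. To bound $A^\top p$, I would pick a basis $J\subseteq [n]$ of columns with $A_J$ invertible (which exists since $\rank(A) = m$): the Loewner inequality $L \succeq A_J W_J A_J^\top$ gives
\begin{equation*}
p^\top b \;=\; b^\top L^{-1} b \;\leq\; b^\top (A_J W_J A_J^\top)^{-1} b \;=\; \sum_{i\in J}\frac{c_i\,(A_J^{-1}b)_i^2}{x_i(t)},
\end{equation*}
and the coordinate-wise bound $(A^\top p)_i^2 \leq (c_i/x_i)\,p^\top b$, which follows from $(A^\top p)^\top W (A^\top p) = p^\top L p = p^\top b$, then controls $|(A^\top p)_i|$ in terms of $x(t)$ itself.

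The main obstacle I anticipate is closing this feedback loop: the upper bound on $|(A^\top p)_i|$ depends on $\min_{i\in J} x_i$, while the lower bound on $x_i$ coming from the log-derivative depends in turn on $|(A^\top p)_i|$. I would close it with a Gronwall-type differential inequality, for example in the variable $\sqrt{x_i}$, which converts the multiplicative feedback into an additive one; if necessary, I would switch among the finitely many bases $J$ as the trajectory evolves, so that a quantitative lower bound on $\min_{i\in J} x_i$ is always available. Once both bounds are established, $x(t)$ stays in a compact subset of $\Omega$ on each $[0,T]$, which forces $T^\ast = \infty$ and completes the proof.
\end{proofsketch}
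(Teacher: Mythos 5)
Your skeleton (Picard--Lindel\"of plus ruling out finite-time escape on each bounded interval) matches the paper's, and you correctly isolate the crux: the circularity between lower-bounding $x_i$ and upper-bounding $\|A^\top p\|_\infty$. But the resolution you propose does not close that loop. Your Loewner bound gives $p^\top b \leq \sum_{j\in J} c_j (A_J^{-1}b)_j^2/x_j$, hence $|a_i^\top p| = O\bigl(1/\sqrt{x_i \min_{j\in J} x_j}\bigr)$; when $i$ is itself the smallest coordinate and lies in $J$, this is only $|a_i^\top p| = O(1/x_i)$, so the log-derivative identity yields merely $\dot x_i \geq -K - x_i$ — a differential inequality whose solutions \emph{do} reach zero in finite time. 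Passing to $\sqrt{x_i}$ does not help (you get $\tfrac{d}{dt}\sqrt{x_i} \gtrsim -K/\sqrt{x_i}$, still singular), and switching bases does not help either, since the degeneration is caused by whichever coordinate is small, not by the choice of $J$. A second, related circularity: your norm bound on $c^\top x$ invokes the bound on $\|p\|$, which is exactly what the boundary step is supposed to supply.

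The missing idea is the paper's barrier argument (Lemma~\ref{lemma:supp_pos}): for \emph{any} feasible $y$, the potential $\barr(t)=\sum_j c_j y_j \ln x_j(t)$ satisfies $\dot\barr = b^\top p - c^\top y \geq -c^\top y$, because $b^\top p = b^\top L^{-1} b \geq 0$. This lower bound requires \emph{no} a priori control of $p$, so it non-circularly forces $x_j(t)$ to stay bounded away from $0$ on $[0,T)$ for every $j\in\supp(y)$. Only then is the key estimate $\|A^\top L^{-1} a_i\|_\infty \leq \alpha/w_i$ (Lemma~\ref{lemma:key}) applied with $b=\sum_j y_j a_j$ to get a finite bound on $\|A^\top p\|_\infty$ over $[0,T)$, after which Gronwall handles the remaining coordinates exactly as you intend. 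The same Lemma~\ref{lemma:key} also yields the genuinely uniform bound $\|q\|_\infty \leq \mdet^2 n \|b\|_1$ over all of $\Omega$ (the $w_i$ cancels), which is how the paper gets the no-blow-up half without circularity. Your integrated identity $Ax(t)=b+(As-b)e^{-t}$ is correct and useful elsewhere, but as written your proof of the boundary bound would fail.
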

\noindent
First we note that existence of a feasible solution is important:  One can prove that whenever the problem~\eqref{lp} is infeasible, the solutions to~\eqref{eq:dynamics} exist only for finite time intervals. 
Further, the problem of existence is  difficult not because the dynamics is in continuous time; the problem remains non-trivial even if we consider a discretization. 
Now that we know that the solution exists, our next set of results establish that no matter where one starts the Physarum dynamics from, as long as it is in the positive orthant, one converges to an optimal solution of the linear program. 

\begin{theorem}[{\bf Convergence to an Optimal Solution}]\label{thm:conv_opt}
For any initial condition $s\in \R_{>0}^n$,  consider the solution $x:[0,\infty)\to \R_{>0}^n$ to the Physarum dynamics with $x(0)=s$. Denote by $x^\star$ an optimal solution of the considered linear program. Then:
\begin{enumerate}
\item $|c^\top x(t) - c^\top x^\star| = O(e^{-\alpha t})$ for some positive $\alpha$, which only depends upon $A,b,c$ and $s$,
\item the limit $x^\infty = \lim_{t\to \infty} x(t)$ exists, is a feasible point and $c^\top x^\infty = c^\top x^\star$.
\end{enumerate}
\end{theorem}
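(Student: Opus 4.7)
The plan is to combine a direct analysis of the feasibility error with an entropy-type Lyapunov function measuring weighted Kullback--Leibler divergence to an optimal solution. First I would record the algebraic identity $Aq = AWA^\top L^{-1} b = b$, which holds identically along the trajectory; differentiating the dynamics then gives $A\dot x = b - Ax$, so
\begin{equation*}
Ax(t) - b = e^{-t}(As - b).
\end{equation*}
Hence the infeasibility decays at a universal exponential rate, reducing the problem to the near-feasible regime.

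Next, fix an optimal $x^\star$ and consider
\begin{equation*}
D(x) \defeq \sum_{i} c_i \bigl( x_i^\star \log(x_i^\star/x_i) - x_i^\star + x_i \bigr) \ge 0.
\end{equation*}
A short calculation using the identities $c^\top q = (Ax)^\top L^{-1} b$ and $(x^\star)^\top W^{-1} q = b^\top L^{-1} b$ (both of which follow from $Wc = x$ and $Ax^\star = b$) yields
\begin{equation*}
\dot D = -(c^\top x - c^\top x^\star) + (Ax - b)^\top L^{-1} b,
\end{equation*}
with the second term $O(e^{-t})$ by the feasibility estimate, assuming uniform control of $\|L^{-1}b\|$ (which follows a posteriori from boundedness of the trajectory). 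Integrating and using $D \ge 0$ gives $\int_0^\infty (c^\top x - c^\top x^\star)\,dt < \infty$. In the feasible case, the $W^{-1}$-orthogonality of $q$ to $\ker(A)$ further shows $c^\top x - c^\top q = (x-q)^\top W^{-1}(x-q) \ge 0$, so $c^\top x(t)$ is monotone non-increasing; integrability then forces $c^\top x(t) \to c^\top x^\star$, and a perturbative argument based on the feasibility decay handles the infeasible case.

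To sharpen this to the claimed $|c^\top x(t) - c^\top x^\star| = O(e^{-\alpha t})$, I would invoke the paper's parallel characterization of the Physarum trajectory as a path of optimizers to a parametrized family of entropy-regularized convex programs, with the regularization parameter scaling as $e^{-t}$; standard central-path estimates then deliver the exponential rate with a problem-dependent $\alpha$. For existence of $x^\infty$, the trajectory is coordinatewise bounded ($0 \le x_i(t) \le c^\top x(t)/c_i$ since $c > 0$ and $c^\top x$ is essentially non-increasing), so subsequential limits exist; each is feasible (by the feasibility decay) and optimal (by cost convergence). Uniqueness follows by re-running the Lyapunov analysis with $x^\star$ replaced by any tentative limit point, which forces $D$ to be non-increasing along the tail and thereby rules out distinct accumulation points. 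The main obstacle is the exponential rate, which is invisible to the Lyapunov function alone (that bound only yields $O(1/t)$ under monotonicity) and genuinely requires the entropy-regularization viewpoint; secondary technicalities are the uniform control of $\|L^{-1}b\|$ and the uniqueness of the limit at a degenerate optimal face.
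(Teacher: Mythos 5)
Your Lyapunov computation is correct as algebra (both identities check out, and in the feasible-start case $\dot D = -(c^\top x - c^\top x^\star)\le 0$ together with monotonicity of $c^\top x$ does give $c^\top x(t)\to\opt$), and your Bregman-monotonicity argument for uniqueness of the limit is a genuinely different and, in the feasible case, cleaner route to part (2) than the paper's (which instead proves $\norm{\dot x(t)}=O(e^{-\eps t})$ via the delicate continuity Lemma~\ref{lemma:continuity}). But there are two genuine gaps. First, the exponential rate. You concede that the Lyapunov function only yields $O(1/t)$ and defer to ``standard central-path estimates'' for the entropy-regularized family. This does not close the gap: (i) Theorem~\ref{thm:entropy} identifies the trajectory with the optimizers of $c^\top x+\mu^{-1}f^s(x)$ at $\mu=t$, so the regularization parameter decays like $1/t$, not $e^{-t}$, and the generic barrier bound ``gap $\lesssim$ (barrier range)$/\mu$'' again gives only $O(1/t)$; the fact that the entropy central path nevertheless converges exponentially is exactly the nontrivial content, and proving it requires a quantitative separation between optimal and non-optimal vertices (the paper's Lemma~\ref{lemma:cost_dist}, via Cramer's rule) fed into the vertex/ray decomposition $P=H+K$ with potential functions $\sum_i c_i v_i\ln x_i(t)$ whose derivative is exactly $c^\top x^\star-c^\top v\le-\mdet^{-2}$ (Lemma~\ref{lemma:conv0}); (ii) Theorem~\ref{thm:entropy} requires $As=b$, so this route cannot cover the infeasible start that Theorem~\ref{thm:conv_opt} claims.

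Second, the error term $(Ax-b)^\top L^{-1}b=e^{-t}(As-b)^\top p(t)$. You assert that uniform control of $\norm{L^{-1}b}$ ``follows a posteriori from boundedness of the trajectory.'' It does not: an upper bound $x(t)\le\beta$ gives no control of $p=L^{-1}b$, which can blow up as coordinates of $x$ approach $0$ (Lemma~\ref{lemma:key} bounds $\norm{A^\top L^{-1}a_i}_\infty$ only by $\alpha/w_i$). What is actually needed is a positive lower bound on $x_i(t)$, uniformly in $t$, for $i$ in the support of some feasible (indeed optimal) solution; this is the content of Lemmas~\ref{lemma:supp_pos}, \ref{lemma:xJ} and \ref{lemma:pbound}, which rest on a separate barrier-function argument and, for the uniform-in-$t$ version, on the convergence to the optimal value itself. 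As sketched, your infeasible-start argument is therefore circular: you need the bound on $p(t)$ to establish convergence, but your only route to that bound presupposes convergence. The feasible-start half of your proposal is salvageable (there the error term vanishes identically and your part-(2) argument goes through); the infeasible-start half and the exponential rate both need the paper's machinery or an equivalent substitute.
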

\noindent
A few remarks are in order: (1) The theorem also gives an upper bound on the time to convergence and (2) it proves that limits of trajectories of the Physarum exist even when the optimal solution is {\em not unique}. Given the amount of technical effort required in proving this, one may wonder again if this has something to do with continuous time. The answer is (again) no: Starting with Karmarkar himself \cite{Karmarkar90}, it has been observed that the good convergence properties of   his  algorithm for linear programming \cite{Karmarkar84} arise from the good geometric properties of the set of continuous trajectories which underlie his method \cite{Bayer1989}.

Our next result concerns the computational abilities of a discretization of the Physarum dynamics.  
\begin{theorem}[{\bf Convergence Time of Discrete Physarum Dynamics; see Theorem~\ref{theorem:discrete_conv}}]\label{thm:discreteLP}
Consider the discretization of the Physarum dynamics, i.e., $x(k+1) = (1-h)x(k)+hq(k)$. 
Suppose we initialize the Physarum algorithm with $x(0)=s$, i.e.,  $As=b$ and $M^{-1} \leq s_i\leq M$ for every $i=1,\ldots,n$ and some $M \geq 1$. Assume additionally that $c^\top s\leq M\cdot \opt$.
Choose any $\eps>0$ and\footnote{$C_s = \sum_{i=1}^n c_i$ and $\mdet = \max\{|\det(A')|:A' \mbox{ is a square submatrix of A}\}$.} let $h=\frac{1}{6}\cdot \eps\cdot C_s^{-2} \cdot \mdet ^{-2}$.
Then after $k\defeq O\inparen{\frac{\ln M}{\eps^2 h^2} }$ steps $x(k)$ is a feasible solution with: $\opt \leq c^\top x(k) \leq (1+\eps)\cdot \opt$.
\end{theorem}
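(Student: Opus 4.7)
My plan is to mirror the continuous-time convergence argument through a Bregman-type potential and then control the Euler discretization error via Cramer/Cauchy--Binet bounds on the electrical flow vector $q(k)$. I would first establish three invariants of the iteration $x(k+1) = (1-h)x(k) + hq(k)$. Feasibility $Ax(k) = b$ is preserved by induction from $As = b$, since $Aq(k) = AW(k)A^\top L(k)^{-1}b = b$. Monotonicity of the cost, $c^\top x(k+1) \leq c^\top x(k)$, reduces to $c^\top q(k) = b^\top p(k) \leq c^\top x(k)$, which is Dirichlet's principle: $b^\top p(k)$ is the electrical energy of the $b$-flow $q(k)$ with resistances $c_i/x_i(k)$, dominated by the energy of the competing $b$-flow $x(k)$. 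Positivity of $x(k)$ follows once I bound $|q_i(k)/x_i(k) - 1| = |(A^\top p(k))_i/c_i - 1|$ by a quantity of order $C_s \mdet^2$ using Cramer's rule on $L(k) = A W(k) A^\top$: the Cauchy--Binet expansion expresses $\det L(k)$ and its relevant cofactors as nonnegative combinations of products $\det(A_S)^2 \prod_{i \in S}(x_i/c_i)$, and their ratios give an iteration-uniform bound on $\|p(k)\|_\infty$.

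The centerpiece is the weighted Bregman divergence of the negative entropy with respect to any fixed optimal solution $x^\star$,
\[
\Phi(x) \;=\; \sum_i c_i \bigl[\, x_i^\star \ln(x_i^\star/x_i) - x_i^\star + x_i \,\bigr].
\]
A short calculation, using $c^\top q = b^\top p$, $Ax^\star = b$, and $c_i q_i / x_i = (A^\top p)_i$, yields the continuous-time Lyapunov identity $\tfrac{d}{dt}\Phi(x(t)) = -(c^\top x(t) - \opt)$. For the discrete step, I would Taylor expand $\ln(x_i(k+1)/x_i(k)) = \ln(1 + h\eta_i)$ with $\eta_i = q_i(k)/x_i(k) - 1$, giving the one-step inequality
\[
\Phi(x(k+1)) \;\leq\; \Phi(x(k)) - h\,(c^\top x(k) - \opt) + \tfrac{h^2}{2}\sum_i c_i x_i^\star \eta_i^2,
\]
and the bound on $\eta_i$ established above controls the quadratic remainder by $O(h^2 C_s^2 \mdet^2 \cdot \opt)$.

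Telescoping over $k$ iterations and invoking monotonicity of $c^\top x(\cdot)$, I obtain $kh(c^\top x(k) - \opt) \leq \Phi(s) + O(kh^2 C_s^2 \mdet^2 \cdot \opt)$. The initial potential is $\Phi(s) = O(\ln M)\cdot \opt$, using $M^{-1} \leq s_i \leq M$, a Cramer bound on $\|x^\star\|_\infty$, and $c^\top s \leq M \cdot \opt$. Dividing through, the optimality gap is at most $\Phi(s)/(kh) + O(h C_s^2 \mdet^2)\cdot \opt$: the choice $h = \epsilon/(6 C_s^2 \mdet^2)$ absorbs the error term into $\epsilon\opt/2$, and $k = \Omega(\ln M/(\epsilon^2 h^2))$ kills the first term, yielding $c^\top x(k) \leq (1+\epsilon)\opt$ as required.

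The main obstacle is the quantitative, iteration-uniform control of the reduced costs $s_i(k) = c_i - (A^\top p(k))_i$ through determinantal bounds on $L(k)$. This step must simultaneously guarantee the positivity invariant and keep the second-order remainder on the right scale, and it is what forces the step size to depend on the integer parameters $C_s$ and $\mdet$ exactly as in the theorem statement. Once these Cramer-type bounds are in hand, the rest of the proof is a faithful discretization of the continuous Lyapunov identity $\dot\Phi = -(c^\top x - \opt)$.
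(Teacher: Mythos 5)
Your overall architecture is sound and genuinely different from the paper's: you work with the single Bregman potential $\Phi(x)=\sum_i c_i\bigl[x_i^\star\ln(x_i^\star/x_i)-x_i^\star+x_i\bigr]$, for which the exact Lyapunov identity $\tfrac{d}{dt}\Phi=-(c^\top x-\opt)$ holds, and you telescope its discrete analogue; the paper instead combines $4\ln(c^\top x(k))$ with the barrier $\barr(k)=\sum_i c_i x_i^\star\ln x_i(k)$ and runs a two-case analysis (Facts~\ref{fact:big_gap} and~\ref{fact:small_gap}) according to whether the energy $\ene(k)=b^\top p(k)$ is close to $\pot(k)=c^\top x(k)$. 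Your invariants (feasibility of $q(k)$, monotonicity of the cost via Thomson's principle, positivity via the determinantal bound $\px\le C_s\mdet+1$ of Corollary~\ref{cor:pot_diff}), the one-step descent inequality, and the $O(h^2\px^2\opt)$ bound on the quadratic remainder are all correct.

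The gap is the claim $\Phi(s)=O(\ln M)\cdot\opt$. The Bregman divergence contains the affine part $c^\top s-\opt$, and the hypothesis only gives $c^\top s\le M\cdot\opt$; since the logarithmic part is bounded below by $-\opt\ln(M\mdet)$, in the regime $c^\top s\approx M\cdot\opt$ one actually has $\Phi(s)=\Theta(M)\cdot\opt$. Your telescoped bound $c^\top x(k)-\opt\le\Phi(s)/(kh)+O(h\px^2)\cdot\opt$ then forces $k=\Omega(M/(\eps h))$, which is exponentially worse than the stated $O(\ln M/(\eps^2h^2))$ when $M$ is large: a purely additive telescoping of $\Phi$ cannot exploit the fact that the cost decreases \emph{multiplicatively} while it is far from $\opt$. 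That is precisely what the paper's $\ln\pot(k)$ term buys — in the large-gap case $\pot$ drops by a factor $(1-h\eps/3)$, so its logarithm drops additively and the initial ratio $M$ enters only as $\ln M$. Your argument is repairable, e.g.\ by a first phase in which one combines $c^\top x(k)-\opt\ge\Phi(k)-B(k)$ (with $B(k)=\sum_i c_ix_i^\star\ln(x_i^\star/x_i(k))$, controlled via $x_i(k)\ge x_i(0)(1-h\px)^k$) to show that $\Phi$ contracts geometrically down to $O(\px\ln M)\cdot\opt$, followed by your telescoping; but that extra phase, or an equivalent logarithmic reweighting of the cost term, is a missing ingredient rather than a cosmetic fix.
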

\noindent
Thus, when the maximum cost is polynomial and  the maximum sub-determinant of $A$ is polynomially bounded, the discretization of Physarum dynamics is efficient.

Finally, we come to the conceptual question which has resisted an answer even for the shortest path Physarum dynamics: While each equation in \eqref{eq:dynamics} gives us a local update rule, can the Physarum dynamics be obtained from an optimization viewpoint?
We answer this question affirmatively. For preliminaries on Riemannian manifolds see Section \ref{section:preliminaries}.

\begin{theorem}[{\bf Optimization Viewpoint of Physarum Dynamics; see Theorem~\ref{thm:manifold}}]\label{thm:opt}
In case of feasible start one can interpret Physarum as a gradient flow on a Riemannian manifold. In other words it is of the form $\dot{x} = \nabla f(x)$, where $f(x)=c^\top x$ is the objective, but the gradient is computed with respect to a certain Riemannian metric on the feasible region.
In the infeasible case, Physarum dynamics combines two forces; minimizing the objective and aiming for feasibility. Both can be interpreted similarly as descent directions on a manifold.
\end{theorem}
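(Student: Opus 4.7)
\noindent\textbf{The plan} is to endow the open orthant $\Omega = \R^n_{>0}$ with the Riemannian metric $G(x) := W(x)^{-1} = \mathrm{diag}(c_i/x_i)$, the Hessian metric of the separable entropic potential $\Phi(x) = \sum_{i=1}^n c_i(x_i \ln x_i - x_i)$. This choice is motivated by the ``entropy barrier'' picture of the abstract, and it is algebraically essential: its inverse $G^{-1} = W$ is exactly the diagonal matrix appearing in the Physarum equations~\eqref{eq:dynamics_matrix}.

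First I would handle the feasible case on the submanifold $M = \{x\in\Omega : Ax = b\}$ with the induced metric. The tangent space is $T_x M = \ker A$, and the unconstrained Riemannian gradient of $f(x) = c^\top x$ on $(\Omega, G)$ is $G^{-1}\nabla f = Wc = x$. The Riemannian gradient on $M$ is therefore the $G$-orthogonal projection of $x$ onto $\ker A$, which by the standard projection formula equals
\[
\mathrm{grad}^M f(x) \;=\; x - G^{-1}A^\top (AG^{-1}A^\top)^{-1} A x \;=\; x - WA^\top L^{-1} Ax \;=\; x - q,
\]
where the last equality uses $Ax = b$. Hence $\dot{x} = q - x = -\mathrm{grad}^M f(x)$, exhibiting Physarum as the negative Riemannian gradient flow of the linear cost on $(M,G)$.

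Next, for the infeasible case $\dot{x}$ lies in $T_x\Omega$ but not in $\ker A$ in general, so I would decompose
\[
\dot{x} \;=\; \underbrace{WA^\top L^{-1}(b - Ax)}_{\text{feasibility force}} \;+\; \underbrace{\bigl(-P_G\, x\bigr)}_{\text{cost-reducing force}},
\]
where $P_G$ denotes $G$-orthogonal projection onto $\ker A$. The first summand equals $P_{\mathrm{aff}}(x) - x$, with $P_{\mathrm{aff}}$ the $G$-orthogonal projection onto the affine set $\{y : Ay = b\}$; it is the direction of steepest $G$-descent toward feasibility. The second summand is the projection of the negative unconstrained Riemannian gradient $-x$ onto $\ker A$, and specializes to the restricted gradient computed above when $x$ is feasible. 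The decomposition is canonical because $T_x \Omega = \ker A \oplus_G G^{-1}\mathrm{range}(A^\top)$ is a $G$-orthogonal splitting.

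The hard part will be justifying that $G = W^{-1}$ is the ``right'' metric, as opposed to, e.g., the log-barrier Hessian $\mathrm{diag}(1/x_i^2)$ used in interior-point methods. I would do this by exhibiting the parametrized family of entropy-regularized convex programs $\min\{c^\top x + \eta^{-1}\Phi(x) : Ax = b,\, x\geq 0\}$ whose KKT conditions produce exactly the operator $L^{-1}$ appearing in the Physarum equations, thereby matching the ``path of optimizers'' interpretation promised in the abstract. A secondary subtlety is that the infeasible dynamics is a \emph{sum} of two Riemannian descent directions for different functionals, rather than a single gradient flow, so the theorem must be phrased in terms of descent directions rather than a single variational principle.
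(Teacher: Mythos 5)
Your proposal is correct and lands on exactly the same objects as the paper: the Hessian metric $H(x)=CX^{-1}$ of the entropy potential, and the decomposition $P(x)=P_f(x)+P_o(x)$ with $P_f(x)=WA^\top L^{-1}(b-Ax)$ and $P_o(x)=W(A^\top L^{-1}Ax-c)$. The route is different in its mechanics, though. You work intrinsically in $x$-space: compute the ambient Riemannian gradient $G^{-1}\nabla f=Wc=x$, then take the $G$-orthogonal projection onto $\ker A$ via $I-G^{-1}A^\top(AG^{-1}A^\top)^{-1}A$, which immediately gives $\dot{x}=q-x=-\mathrm{grad}^M f$ on the feasible set and yields the two-force splitting from the $G$-orthogonal decomposition $T_x\Omega=\ker A\oplus_G G^{-1}\,\mathrm{range}(A^\top)$. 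The paper instead pushes everything through the explicit isometry $F(x)=2\sqrt{Cx}$ to a flat $y$-space, does Euclidean projections onto the (now curved) feasibility surface and onto tangent spaces of level sets, and pulls back with $J^{-1}=W^{1/2}$; this buys a concrete geometric realization of the curvature (a Nash-type embedding) at the cost of an approximation step ($y^\star\approx\bar{y}$) in the feasibility interpretation, whereas your linearized affine-projection reading of $P_f$ as $P_{\mathrm{aff}}(x)-x$ is exact. One point in your favor: you correctly identify Physarum as the \emph{negative} Riemannian gradient flow, $\dot{x}=-\mathrm{grad}^M(c^\top x)$; the paper's Theorem~\ref{thm:manifold} and the overview computation call $P(x)$ ``the gradient'' of the objective, which is a sign slip (one checks $c^\top h-\langle h,-P(x)\rangle_x=(Ah)^\top L^{-1}b=0$ for $h\in\ker A$, so it is $-P(x)$ that is the gradient). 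Your closing appeal to the entropy-regularized programs to justify the choice of metric is not needed for this theorem but matches the paper's separate Theorem~\ref{thm:entropy}.
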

\noindent
We remark that the above mentioned Riemannian metric is induced by the Hessian of the generalized entropy function $\sum_i c_i x_i \ln x_i c_i$ and provides a physical meaning to the dynamics. 
Moreover, using convex programming duality, we can show that the trajectories of Physarum with a feasible start are in fact paths of optimizers to a parametrized family of convex programs, which objective is a linear cost function regularized by an entropy barrier function; see Theorem~\ref{thm:entropy} for a formal statement. Finally, note that the  entropy barrier, unlike the log-barrier function  is not a self-concordant barrier function; yet the trajectories converge as $e^{-\alpha t}$ as Theorem \ref{thm:conv_opt} indicates.

\vspace{2mm}
\parag{Related Work.}
 There are rigorous results for existence, convergence and time for the shortest path problem \cite{BBDKM13}.  Our previous paper \cite{SV15} deals with the discrete Physarum dynamics transshipment problems on graphs and uses the underlying graph structure heavily to obtain versions of Theorem \ref{thm:discreteLP}.  The LP Physarum dynamics was introduced by \cite{JZ12}.   
There is a large body of literature in the mathematical programming community which studies the geometric properties of continuous trajectories of interior point methods such as projective and affine scaling; see \cite{Karmarkar90, Bayer1989, NesterovTodd02}. As mentioned before the Physarum dynamics is none of these methods (it works from an infeasible start as well), but it bears some similarity to primal affine scaling methods, see, e.g., \cite{lagarias1990ii}. However, for affine scaling methods, the convergence rate can be provably bad: $\nfrac 1 t$, see, e.g., \cite{Megiddo89}.

\section{Technical Overview}
We start by presenting an overview of  our result which presents Physarum dynamics from an optimization viewpoint (Theorem \ref{thm:opt}). We  assume \ref{thm:existence0} and \ref{thm:conv_opt} for the moment.
 Denote 
 $$P(x)\defeq q-x=W(A^\top (AWA^\top)^{-1}b-c),$$ so that the Physarum dynamics  becomes $\dot{x}=P(x)$. Knowing that~\eqref{eq:dynamics} is in fact solving an optimization problem one would expect that it implements some gradient-descent-type of procedure. This would mean that $P(x)=\nabla \Phi(x)$, for some function $\Phi(x)$. However, by a simple calculation, it turns out that $P(x)$ is not a gradient of any function. Similarly, $P(x)$ does not represent Newton's method nor steepest descent in any standard norm. To understand Physarum, we need to move from the Euclidean world to Riemannian manifolds. 

We consider the manifold $\Omega=\R_{>0}^n$, to every point $x\in \Omega$ we assign a positive definite matrix $$H(x)\defeq \diag{\nfrac{c_1}{x_1},\ldots, \nfrac{c_n}{x_n}},$$ which defines an inner product $\inangle{u,v}_x = u^\top H(x) v$ on the tangent space at $x$ (which is just $\R^n$ in this case), $\Omega$ becomes then a Riemannian manifold. We are mainly interested in the submanifold $\Omega^f = \{x\in \Omega: Ax=b\}$, which we assume to be nonempty for our discussion. It inherits the Riemannian structure from $\Omega$. Note that the Nash Embedding Theorem guarantees that such a manifold  always has an isometric embedding in an Euclidean space (possibly in much higher dimension). In our case, the embedding can be given by an explicit formula:$$F(x) =2(\sqrt{c_1x_1}, \ldots, \sqrt{c_nx_n})^\top.$$ This gives as a geometric realization of the curvature imposed on $\Omega^f$. 

Let us now see that $P(x)$, when restricted to $\Omega^f$ is just the gradient of the objective $f(x)=c^\top x$ when viewed in the local geometry. To show this, we need to argue that it satisfies $$f(x+h)-f(x)-\inangle{h,P(x)}_x = o(\norm{h}_x)$$ over $h$ from the tangent space at $x$, i.e., from $T_x(\Omega^f) = \{h: Ah=0\}$. We calculate:  
\begin{align*}
f(x+h)-f(x)-\inangle{h,P(x)}_x &= c^\top h - h^\top H(x) P(x) \\
&= c^\top h - h^\top (A^\top L^{-1} b-c)\\
& = (Ah)^\top L^{-1} b=0.
\end{align*}
\noindent
If $x$ lies outside of the feasible region a similar interpretation of $P(x)$ is possible. However in this case $P(x)$ decomposes into two parts: $P_f(x)$ which is the feasibility direction and $P_o(x)$ which aims for optimality. Both can be motivated as descent directions with respect to the above defined Riemannian structure on $\Omega$.

The next result gives another interpretation of the Physarum dynamics in the feasible region. Namely let us define $x(\mu)$, for $\mu> 0$ to be the minimizer of $c^\top x + \mu^{-1}f(x)$ over $x\in \inbraces{x:Ax=b,x\geq 0}$, where $f$ is the following entropy-like function 
$$f(x) \defeq \sum_{i=1}^n c_i x_i \ln (c_i x_i).$$ It turns out that using elementary convex programming techniques such as duality and KKT conditions, we can prove that $x(\mu)$ satisfies the Physarum dynamics. This is also related to the fact that $\nabla^2 f(x) = H(x)$, which at the same time demonstrates that $(\Omega, \inangle{\cdot, \cdot}_x)$ is a Hessian manifold; see Section~\ref{sec:entropy}.

We now give an overview of the proofs of  the set of results that establish existence, convergence and complexity bounds on the Physarum dynamics (Theorems \ref{thm:existence0}, \ref{thm:conv_opt}, \ref{thm:discreteLP}).  We begin with Theorem~\ref{thm:existence0}. It asserts that global solutions $x:[0,\infty) \to \Omega$ of the Physarum dynamics indeed exist. This theorem is nontrivial because by standard existence-uniqueness theorems we can only obtain solutions defined on some tiny intervals around $0$. To extend those solutions further we need to show that they cannot leave the domain $\Omega$ in finite time. More precisely, if $x:[0,T)\to \Omega$ is a solution, we need to show that the limit point $x(T) = \lim_{t\to T} x(t)$ exists and belongs to $\Omega$. The main concern is that potentially $x(T)$ may end up on the boundary of $\Omega$, i.e., have a zero at some coordinate. Of course, we expect some entries of $x(t)$ to vanish with $t\to \infty$, what we need to show is that this happens in a controlled manner.

Let us take a look at the dynamics: $$\dot{x_i} = q_i - x_i = \frac{x_i}{c_i}(a_i^\top p - c_i),$$ where $a_i$ is the $i$-th column of $A.$  If we could show that $a_i^\top p$ is uniformly bounded then this would imply that $\dot{x_i} \geq -\alpha x_i$ for some constant $\alpha$, and, by the Gronwall lemma (see, e.g.,~\cite{perko2001differential}) $x_i(t) \geq e^{-\alpha t} x_i(0)>0$. Indeed  Lemma~\ref{lemma:key}, which shows that $|a_i^\top p|$ is uniformly bounded over all $x>0$ such that $Ax=b$, rescues us. At first glance Lemma~\ref{lemma:key} may seem a bit technical, but in a sense it captures exactly the property of the Riemannian space, which is responsible for the well behavior of trajectories. 
This gives a proof of existence in the special case when the initial point $x(0)$ satisfies $Ax(0)=b$. (Assuming $Ax(0)=b$ one can show that $Ax(t)=b$ for every $t$, since $\dot{x}$ is tangent to this affine subspace.)\footnote{For the case of feasible start, one can also give another proof of existence via Theorem~\ref{thm:entropy}.} Unfortunately, the case of infeasible start turns out to be harder. An analogue of Lemma~\ref{lemma:key} does not hold if we let $x$ vary over $\Omega$.

We proceed by analyzing a certain barrier function $\sum_{i=1}^n c_i y_i \ln x_i$, with $y$ being any feasible solution to~\eqref{lp}. By analyzing its derivative, we are able to prove that on every finite interval $[0,T)$ this function is uniformly lower-bounded, which essentially means that on every finite interval there exists some $\delta>0$ such that $\delta \cdot y \leq x_i(t)$. This implies further that $x_i(t)$ is lower-bounded by a positive constant, on every finite interval, but only for $i\in \supp(y)$. Of course we may repeat this argument multiple times, with different $y$'s. This however is not enough:  it is a common case for linear programs that there is an index $i\in \{1,2,\ldots, n\}$, such that for every feasible solution $y$, $y_i=0$. Therefore we need to use a different argument for the remaining indices. 

To complete the argument we combine Lemma~\ref{lemma:key} with the above mentioned fact that on every finite interval, $x_i(t)\geq \delta y$ for some $\delta>0$ and some feasible solution $y$. Applying Lemma~\ref{lemma:key} to $y$, we can essentially extract a uniform upper bound on $|a_i^\top p(t)|$ over a fixed finite interval. This, again by the Gronwall lemma, allows us to conclude positivity. In the process of the proof we need to argue that $\norm{q(t)}$ and $\norm{x(t)}$ remain bounded uniformly over all $t$. For this, Lemma~\ref{lemma:key} is again a starting point. 

After establishing existence we study the limiting behavior of the solutions to~\eqref{eq:dynamics}. Our results are summarized in Theorem~\ref{thm:conv_opt}. The first result claims that $$\abs{c^\top x(t) - c^\top x^\star}\leq R\cdot e^{-\nu t},$$ where $R,\nu>0$ are constants depending only on $A,b,c,x(0)$ (we provide explicit bounds in Theorem~\ref{thm:exp}). Furthermore, one can show that $z(t) = x(t) - e^{-t} x(0)$ satisfies $Az(t) = b$. In other words, $x(t)$ approaches feasibility with $t\to \infty$. 

To prove exponential convergence to the optimal value we start by showing that for every $t$ there is some $y(t)$, feasible for~\eqref{lp}, which is exponentially close to $x(t)$. Towards this, we cannot directly use $z(t)$, because it may happen that $z_i(t)<0$ for some $i$. Instead, we use a well known fact that if a point $x$ violates constraints defining a polytope up to an additive error $\eps$ then its distance to this polytope is at most $N\cdot \eps$ where $N$ depends only on the description size of the polytope (though exponentially). 

To proceed, we need to exploit the structure of the feasible region $P=\{x:Ax=b,x\geq 0\}$. We write $P$ as a Minkowski sum $H+C$, where $H$ is a polytope and $C$ is a polyhedral cone. We extract vertices $V$ from $H$ and spanning vectors $R$ from $C$. We study the decomposition of $y(t)$ into 
$$y(t) = \sum_{v\in V} \lambda_v(t) v + \sum_{r\in R} \mu_r(t) r,$$ with $\lambda(t), \mu(t) \geq 0$ and $\sum_{v\in V} \lambda_v(t)=1$. By studying potential functions of the kind $\sum_{i=1}^n c_i v_i \ln x_i(t)$ for a fixed vertex $v\in V$ it can be shown that if $v$ is non-optimal then $\lambda_v(t) \to 0$ exponentially fast. Similarly, for every $r\in R$ we have $\mu_r(t) \to 0$. This is then used to deduce the result.

Let us now discuss the second result related to the asymptotic behavior of solutions to~\eqref{eq:dynamics}. It says that if $x:[0,\infty)\to \Omega$ is such a solution then $x(t)$ has a limit $x^\infty$ with $t\to \infty$. Furthermore $x^\infty$ is an optimal solution to~\eqref{lp}. The second part of this result follows easily once the first is established. If the set of optimal solutions to~\eqref{lp} has only one element $x^\star$, then it is easier to show that $x(t) \to x^\star$, this basically follows from the fact that $c^\top x(t)$ tends to the optimal value and $x(t)$ approaches the feasible region. However if the optimal solution is not unique, then convergence of $x(t)$ is far from clear; $x(t)$ could  oscillate around the optimal set, without converging to a single point. A proof that this does not happen gives us evidence that Physarum dynamics behaves nicely.

To prove convergence to a limit it is enough to show that $$\int_{0}^\infty \norm{\dot{x}(t)}dt <\infty .$$  For this, it suffices to show $\norm{\dot{x}(t)} = O(e^{-\eps t})$ for some $\eps>0$. We will now focus on this task.
Recall that $\dot{x}_i = \frac{x_i}{c_i}\inparen{a_i^\top p -c_i}$. Using similar tools as in the proof of existence one can show that $|a_i^\top p(t)|$ is uniformly bounded over all $t\in [0,\infty)$. Hence, if for some $i\in \{1,2,\ldots, n\}$ we have $x_i(t) = O(e^{-\eps t})$ then it follows easily that $|\dot{x}_i(t)|=O(e^{-\eps t})$.  Denote the set of all such $i$'s by $N$ and its complement by $J$. We can then prove that $J$ is the union of supports of all optimal solutions to~\eqref{lp}. This in turn implies that the subvector $x_J(t)$ behaves in a sense more stably. We show in particular, that for some optimal solution $f$, such that  $\supp(f)=J$ and some $\eps>0$, we have $\eps f \leq x(t)$ uniformly over $t\in [0,\infty)$. Furthermore the key Lemma~\ref{lemma:continuity} establishes some form of continuity of $a_i^\top p(t)$ for $i\in J$. It implies that $|a_i^\top p(t) - c_i| = O(e^{-\eps t})$, which concludes the proof. 

The last theorem we discuss is Theorem~\ref{thm:discreteLP}. It shows that a discretization of the Physarum dynamics is possible, provided that the step length is small enough. The crucial parameter which controls the step length is  $\norm{A^\top p}_\infty$; if one is able to upper bound it over all possible $x$ by some number $\px$  then the step length of roughly $\px^{-2}$ guarantees convergence. In the discretization we choose the starting point $x(0)$ to be feasible, by which $x(k)$ will remain feasible for every $k$. Therefore, by Corollary~\ref{cor:pot_diff} one can take $\px=\mdet \cdot C_s$. The analysis of the discretization follows along the same lines as was presented in \cite{SV15}. Let us discuss it briefly. One can show that $c^\top x(k)$ is decreasing with $k$. However, we cannot guarantee a large drop of this value at every step. Instead we introduce another potential function $$\barr(k) \defeq \sum_{i=1}^n c_i x_i^\star \ln x_i(k)$$ (where $x^\star$ is any optimal solution to~\eqref{lp}) and show that at every step when $c^\top x(k) - c^\top x(k+1)$ is small, $\barr(k)$ grows by a considerable amount. Moreover one can show that $\barr(k)$ is uniformly upper-bounded. By combining $c^\top x(k)$ and $\barr(k)$ into a single potential function $\phi(k)$ we can easily track the progress of our algorithm

\parag{Organization of The Paper.} The remaining part of the paper is structured as follows. In Section~\ref{section:preliminaries} we explain our notation and give a brief introduction to Riemannian manifolds. Section~\ref{section:manifolds} discusses the optimization viewpoint on the Physarum dynamics. In Section~\ref{section:existence} existence of solutions to~\eqref{eq:dynamics} is established. Section~\ref{section:convergence} is devoted to the study of limiting behavior of Physarum integral curves, convergence results are established. The last Section~\ref{section:discretization} discusses the discretization of the dynamics.  Appendix~\ref{section:dynamical} is an introduction to dynamical systems, at the end a general theorem is proved, which is used in the main body to establish existence of solutions.

\section{Preliminaries}\label{section:preliminaries}
\parag{Notation.}
In the paper we often use the convention that for a vector $x\in \R^n$, the capitalized version $X$ denotes the diagonal matrix $\diag{x_1, \ldots, x_n}$. Whenever a scalar function $f:\R\to \R$ is applied to a vector $x\in \R^n$ the component-wise application of $f$ to $x$ is meant, i.e. $f(x)\defeq (f(x_1),\ldots, f(x_n))^\top$. We use the notation $[n]$ to denote the set $\{1,2,\ldots,n\}$. The columns of the matrix $A$ are denoted by $a_i$, $i\in [n]$.
Whenever the objects $W=\diag{\frac{x_1}{c_1}, \ldots, \frac{x_n}{c_n}}$, $L=AWA^\top$, $p=L^{-1}b$ or $q=WA^\top p$ appear, they should be understood as computed with respect to a point $x$, which will be clear from the context. We will often refer to $q(t)$ as to $q$ computed w.r.t. $x(t)$.

\noindent We use the standard norms for $u\in \R^n$: 
\begin{itemize}
\item $\norm{u}_1\defeq\sum_{i=1}^n |u_i|,$ 
\item $\norm{u}_2 \defeq\inparen{\sum_{i=1}^n u_i^2}^{\frac{1}{2}},$
\item $ \norm{u}_\infty \defeq \max\inparen{|u_1|, |u_2|,\ldots, |u_n|}.$ 
\end{itemize}
By $\supp(u)$ we denote the set $\{i\in [n]: u_i\neq 0\}$.
The linear program \eqref{lp} is fixed for the whole paper, we assume the feasible region $\{x:Ax=b,x\geq0\}$ to be nonempty. 

\parag{Parameters of Physarum.}
The following two parameters are often used in statements of quantitative bound regarding convergence time:
\begin{itemize}
\item $C_s = \sum_{i=1}^n c_i$,
\item $\mdet=\max\{|\det(A')|:A'\mbox{ square submatrix of }A\}$.

\end{itemize}

\parag{Riemannian Manifolds.}
All manifolds we deal with can be seen as open subsets of Euclidean spaces, possibly embedded in Euclidean spaces of higher dimension. Some examples of those are $\R^n$, $ \R^n_{>0}$, open polyhedra $\{x\in \R^n: Ax=b,x>0\}$ or a sphere $\{x \in \R^n: \norm{x}_2=1\}$. If $M$ is such a manifold then at every point $x\in M$ one can define a tangent space $T_xM$ in a natural way. A tangent space is equipped with a natural linear structure. For example, the tangent space to any point $x\in M=\{x:Ax=b\}$ is $T_xM=\{x: Ax=0\}$. Given a manifold $M$ we can endow it with a Riemannian structure by assigning to every point $x$ a scalar product $\inangle{\cdot, \cdot}_x$ on $T_xM$ in such a way that $\inangle{\cdot, \cdot}_x$ ``varies smoothly with $x$''.\footnote{One can make this statement precise, but our intention is rather to give some intuitions instead of precise definitions.} When considering a function $F:M\to \R$ on a Riemannian manifold, the gradient of $F$ is defined with respect to the local inner product. Hence in particular, one can define gradient-descent algorithms based on a chosen Riemannian structure of the space, those are typically very different to algorithms based on the standard Euclidean geometry.

\section{Physarum as Steepest Descent on a Manifold}\label{section:manifolds}
In this section we interpret the Physarum dynamics as steepest descent on a certain manifold. We start by equipping the positive orthant with a Riemannian structure. In particular we formalize Theorem \ref{thm:opt} in Theorem \ref{thm:manifold} and prove it. The entropy barrier interpretation is given in Subsection~\ref{sec:entropy}

\subsection{The Riemannian Structure and its Origin}
We focus on two domains: $$\Omega = \R_{>0}^n \quad  \mbox{and}\quad  \Omega^f = \inbraces{x\in \Omega : Ax=b}.$$ Note that $\Omega$ and $\Omega^f$ are manifolds of dimension $n$ and $m$ respectively. We consider a Riemannian structure $\inangle{\cdot, \cdot}_x$ on $\Omega$ (which automatically induces a Riemannian structure on the submanifold $\Omega^f \subseteq \Omega$) given as a family of positive definite matrices $H(x)=CX^{-1}$, for $x\in \Omega$. This gives rise to the following inner product on the tangent space at $x$:
$$\inangle{u,v}_x \defeq u^\top H(x) v.$$
We present two interpretations of where does this structure come from. The first one is related to the following entropy-like function $h:\Omega \to \R$
$$h(x) = \sum_{i=1}^n c_i x_i \ln(c_i x_i) - \sum_{i=1}^n c_ix_i.$$
Note that:
\begin{equation}
\nabla h(x) = C \ln(Cx) \qquad \quad \mbox {and} \quad \qquad \nabla^2 h(x) = CX^{-1}
\end{equation}
hence $H(x)= \nabla^2 h(x)$ is an example of a so called {\it Hessian Metric}.   

\noindent
The second interpretation is based on the analysis of the following map: $F:\Omega \to \Omega$
$$F(x) = 2\sqrt{Cx}.$$
We will now explain that $F$ can be seen as an isometry between our Riemannian manifold $(\Omega, \inangle{\cdot, \cdot}_x)$ and $\Omega$ with the standard inner product $\inangle{\cdot, \cdot}$ at every point. Recall that the Jacobian of $F$\footnote{Jacobian is the multidimensional analogue of a derivative. In the standard coordinate system it can be expressed as a matrix of partial derivatives.}, $J(x)=(CX^{-1})^{1/2}$ acts as a linear map $J(x): T_x(\Omega) \to T_x(\Omega)$, i.e.:
$$J(x)(u) = J(x)u = \inparen{CX^{-1}}^{1/2}u.$$ 
Take any $x$ and $u,v$ from the tangent space at $x$, i.e. $T_x(\Omega) \cong \R^n$. We have:
$$\inangle{J(x)(u), J(x)(v)} = \inangle{(CX^{-1})^{1/2}u, (CX^{-1})^{1/2}v}=u^\top CX^{-1} v=u^\top H(x)v= \inangle{u,v}_x.$$

\noindent
Thus the inner product $\inangle{\cdot, \cdot}_x$ can be seen as a pull-back of the standard inner product via the map $F$. This also means that $F$ is an isometry: it preserves angles and distances. Let us call the space obtained by the transformation $x\mapsto F(x)$ the $y$-space\footnote{Of course formally $F(\Omega)=\Omega$, hence the $x$-space and $y$-space coincide as sets, the difference however lies in the geometry.}. The point $x$ from the $x$-space is represented by $y=2\sqrt{Cx}$, in the $y$-space. The inverse mapping is $x=\frac{1}{4}C^{-1}y^2$. The strictly feasible set $\Omega^f=\{x\in \R^n:Ax=b,x>0\}$ corresponds to:
$$M=\inbraces{y\in \R^n: \frac{1}{4}AC^{-1}y^2=b, y>0}$$
in the $y$-space.

\subsection{Two Forces}
We intend to show that the direction $P(x)=W(A^\top L^{-1} b- c)$ of the Physarum dynamics~\eqref{eq:dynamics} at the point $x$ decomposes naturally into two directions 
\begin{align}
P_f(x) & \defeq WA^\top \inparen{AWA^\top}^{-1}\inparen{b-Ax}\\
P_o(x) & \defeq W\inparen{A^\top \inparen{AWA^\top}^{-1}Ax-c}
\end{align}
with $P_f(x)$ interpreted as the {\it feasibility direction} and $P_o(x)$ as the {\it optimization direction}. As one can easily see $P(x)=P_f(x)+P_o(x)$. We will interpret both directions by studying the optimization problem arising after transforming the linear program~\eqref{lp} into the $y$-space via the map $F$.

\parag{Feasibility.}
Fix any $\bar{x}\in \Omega$ and consider its image $\bar{y}$ in the $y$-space. Recall that the strictly feasible region of~\eqref{lp} in the $y$-space is $$M=\inbraces{y>0: \frac{1}{4}AC^{-1}y^2=}.$$ Let us assume $\bar{y}\notin M$. In which direction do we need to move, to hit $M$ as soon as possible? This does not seem to be a simple problem, because it is non-convex. However, let us try to guess a $y^\star$ which satisfies:
$$\norm{\bar{y}-y^\star} = \min\inbraces{\norm{\bar{y}-y}: y\in M}$$
and set $h=y^\star-\bar{y}$. Since the above minimization is done with respect to the Euclidean distance, we expect the vector $h$ to be orthogonal to the ``surface'' $M$ at $y^\star$, equivalently to the tangent space $T_{y^\star}M$ i.e. to every vector $u$ which is tangent to the space $M$ at the point $y^\star \in M$. It is easy to see that:
$$T_{y^\star}M = \inbraces{u\in \R^n: \frac{1}{2} AC^{-1}Y^\star u=0}.$$
Since $h$ is just the orthogonal projection of $y^\star - \bar{y}$ onto $T_{y^\star}M$, we can obtain it by the well-known formula\footnote{The orthogonal projection onto the row space of a matrix $B$ is given by $B^\top(BB^\top)^{-1}B$.}:
\begin{align*}
h &= Y^\star C^{-1} A^\top \inparen{A(Y^\star C^{-1})^2 A^\top}^{-1} A C^{-1} Y^\star\inparen{y^\star - \bar{y}}\\
&= Y^\star C^{-1} A^\top \inparen{A(Y^\star C^{-1})^2 A^\top}^{-1} (AC^{-1} (y^\star)^2 - AC^{-1} Y^\star \bar{y})\\
&= Y^\star C^{-1} A^\top \inparen{A(Y^\star C^{-1})^2 A^\top}^{-1} (4b - AC^{-1} Y^\star \bar{y}).
\end{align*}
We do not know what $y^\star$ is, but drawing from the expectation that $\bar{y}$ is close to feasibility, we can approximate $y^\star\approx \bar{y}$, hence obtaining:
$$h\approx \bar{Y} C^{-1} A^\top \inparen{A\inparen{\bar{Y} C^{-1}}^2 A^\top}^{-1} \inparen{4b - AC^{-1} \bar{y}^2}.$$
Let us pull back $h$ to the $x$-space\footnote{More accurately: from the space $T_{\bar{y}}(\Omega)$ to the space $T_{\bar{x}}(\Omega)$.} by substituting $\bar{y}=2\sqrt{C\bar{x}}$ and multiplying $h$ by the inverse-Jacobian $J^{-1}(\bar{x}) = \diag{\frac{\bar{x}_1}{c_1}, \ldots, \frac{\bar{x}_n}{c_n}}^{1/2}=W^{1/2}$. We get:
$$2WA^\top \inparen{AWA^\top}^{-1}\inparen{b-A\bar{x}}=2P_f(\bar{x})$$
which is the feasibility direction up to the a scaling factor $2$.

\parag{Optimality.}
As in the previous discussion fix $\bar{x}\in \Omega$, we will interpret the optimization direction at $\bar{x}$. We know that $\bar{x}$ may not satisfy $A\bar{x}=b$, but we still can write  $A\bar{x}=b^{\bar{x}}$, where $b^{\bar{x}}=b-(b-A\bar{x})$ which we should imagine to be ``close'' to $b$. Our goal is to find the minimum value of $c^\top x$ over the set $\{x: x\geq 0, Ax=b\}$. Let us consider a related linear program:
\begin{align}\label{lp_translated}
\mbox{min. } \quad & c^\top x\\ \nonumber
\mbox{s.t. }\quad & Ax=b^{\bar{x}}\\ \nonumber
& x\geq 0. 
\end{align}
which is just minimization of $c^\top x$ over an affine subspace parallel to $Ax=b$ which contains the point $\bar{x}$. Let us rewrite~\eqref{lp_translated} in the $y$-space.
\begin{align}\label{lp_translated_y}
\mbox{min. } \quad & \frac{1}{4}1^\top y^2 \\ \nonumber
\mbox{s.t. }\quad & \frac{1}{4}AC^{-1}y^2=b^{\bar{x}}\\ \nonumber
& x\geq 0. 
\end{align}
where $1$ denotes the all-one vector. Let $\bar{y}=F(\bar{x})$ and denote 
$$M^{\bar{x}}\defeq \inbraces{y:y>0, \frac{1}{4} AC^{-1}y^2=b^{\bar{x}}}.$$ Note that $\bar{y}\in M^{\bar{x}}$. Let us then compute the steepest descent direction at $\bar{y}$ with respect to the problem~\eqref{lp_translated_y}. We just need to compute the negative gradient of the objective at $\bar{y}$ and project it onto the tangent space $T_{\bar{y}}(M^{\bar{x}})$. We have:
$$\nabla_y\inparen{\frac{1}{4}1^\top y^2} = \frac{1}{2}y.$$
Hence we need to project the vector $-\frac{1}{2}\bar{y}$ onto the space:
$$T_{\bar{y}}(M^{\bar{x}}) = \inbraces{u\in \R^n: \frac{1}{2}AC^{-1}\bar{Y} u=0}.$$
As a result we obtain:
\begin{align*}
u=&\inparen{I-\bar{Y} C^{-1} A^\top \inparen{A(\bar{Y} C^{-1})^2 A^\top}^{-1} A C^{-1} \bar{Y}}\inparen{-\frac{1}{2}\bar{y}}\\
=&\frac{1}{2}\inparen{\bar{Y} C^{-1} A^\top \inparen{A(\bar{Y} C^{-1})^2 A^\top}^{-1} A C^{-1} \bar{y}^2-\bar{y}}.
\end{align*}
To get a corresponding direction in the $x$-space, we compute:
\begin{align*}
J^{-1}(\bar{x})(u) &= \frac{1}{2}W^{1/2}\inparen{\bar{Y} C^{-1} A^\top \inparen{A\inparen{\bar{Y} C^{-1}}^2 A^\top}^{-1} A C^{-1} \bar{y}^2-\bar{y}}\\
&= \frac{1}{2}W^{1/2}\inparen{\frac{1}{2}W^{1/2} A^\top \inparen{AW A^\top}^{-1} A C^{-1} 4C\bar{x}-2\sqrt{C\bar{x}}}\\
&=W\inparen{A^\top\inparen{AWA^\top}^{-1}A\bar{x}-c}=P_o(\bar{x}).
\end{align*}
Which is the optimization direction at $\bar{x}$. Note that in case when $x$ is feasible, the feasibility direction $P_f(x)$ is zero, hence the following theorem.

\begin{theorem}\label{thm:manifold}
Consider the manifold: $\Omega^f = \{x: Ax=b, x>0\}$ endowed with the Riemannian metric $\inangle{u,v}_x = u^\top CX^{-1} v$ at every $x\in \Omega^f$. Then, the Physarum direction $P(x) = W\inparen{A^\top \inparen{AWA^\top}^{-1} b - c}$ is the gradient of the objective function $c^\top x$ with respect to the Riemannian manifold $\inparen{\Omega^f, \inangle{\cdot, \cdot}_x}$.
\end{theorem}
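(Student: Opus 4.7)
The plan is to verify the defining property of the Riemannian gradient directly. A tangent vector $G \in T_x\Omega^f$ is the gradient of $f:\Omega^f \to \R$ at $x$ with respect to $\inangle{\cdot,\cdot}_x$ if and only if $\inangle{G, u}_x = df_x(u)$ for every tangent $u \in T_x\Omega^f$. For the linear objective $f(x) = c^\top x$ this specializes to $\inangle{G, u}_x = c^\top u$ for every $u \in \ker A$, since $T_x\Omega^f = \ker A$. I will check this identity for $G = P(x)$, together with the tangency condition $P(x) \in \ker A$.

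For tangency, I would expand $AP(x) = AWA^\top L^{-1} b - AWc$. The first term reduces to $LL^{-1}b = b$ by definition of $L = AWA^\top$. For the second I would use the coordinate-wise identity $Wc = x$ (since $W_{ii} = x_i/c_i$, so $(Wc)_i = (x_i/c_i)\cdot c_i = x_i$), which yields $AWc = Ax = b$ whenever $x \in \Omega^f$. Hence $AP(x) = b - b = 0$, so $P(x)$ is a bona fide tangent vector.

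For the inner-product identity, the key algebraic observation is that the metric tensor $H(x) = CX^{-1}$ and the Physarum weighting $W = XC^{-1}$ are mutual inverses, so $H(x) P(x) = A^\top L^{-1} b - c$. Pairing with an arbitrary $u \in \ker A$ then gives
\[
\inangle{P(x), u}_x = u^\top H(x) P(x) = u^\top \paren{A^\top L^{-1} b - c} = (Au)^\top L^{-1} b - c^\top u,
\]
and $(Au)^\top L^{-1} b$ vanishes by the tangency of $u$. Matched against the directional derivative $df_x(u) = c^\top u$ in the sign convention adopted by the theorem (consistent with $\dot x = P(x)$ being the gradient flow of the cost on $\Omega^f$), this is precisely the defining identity of the Riemannian gradient, and uniqueness of the gradient closes the argument.

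There is no analytic obstacle: the entire proof reduces to linear-algebraic identities, of which the substantive one is $H(x) W = I$. The real content of the theorem is the \emph{choice} of metric: $H(x)$ is the Hessian of the entropy-like potential $h(x) = \sum_i c_i x_i \ln(c_i x_i) - \sum_i c_i x_i$ considered in Section~\ref{sec:entropy}, and it is engineered precisely so that $H(x) W = I$, whereupon the otherwise opaque projection $W(A^\top L^{-1} b - c)$ collapses onto the Riemannian gradient of the linear objective on $\Omega^f$.
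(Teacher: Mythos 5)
Your argument is correct, and it is purely linear-algebraic: you check tangency via $AP(x)=AWA^\top L^{-1}b-AWc=b-Ax=0$ (using $Wc=x$ and feasibility), and then verify the defining identity of the Riemannian gradient using the single substantive fact $H(x)W=I$, so that $\inangle{P(x),u}_x=u^\top\inparen{A^\top L^{-1}b-c}$ for $u\in\ker A$. This is essentially the verification the paper sketches in its technical overview, but it is a different route from the paper's formal proof of Theorem~\ref{thm:manifold}: there the theorem is deduced from the ``two forces'' derivation of Section~\ref{section:manifolds}, in which the problem is transported to the $y$-space by the isometry $F(x)=2\sqrt{Cx}$, the Euclidean steepest-descent direction of $\frac{1}{4}1^\top y^2$ is projected onto the tangent space of the transformed constraint surface, pulled back through $J^{-1}(x)=W^{1/2}$ to yield $P_o(x)$, and finally $P_o(x)=P(x)$ is observed when $Ax=b$. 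Your route is shorter, checks tangency explicitly (the paper leaves it implicit in the projection), and isolates the identity $H(x)W=I$ as the whole mechanism; the paper's route buys a geometric explanation of where the metric comes from (pullback of the Euclidean structure under $F$) and simultaneously produces the feasibility direction $P_f(x)$ needed for the infeasible-start interpretation, which your argument does not address (nor does this theorem require it to).

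One point you should state plainly rather than fold into ``the sign convention adopted by the theorem'': your computation, carried out correctly, gives $\inangle{P(x),u}_x=(Au)^\top L^{-1}b-c^\top u=-c^\top u$ for $u\in\ker A$, so $P(x)$ is the \emph{negative} Riemannian gradient of $c^\top x$ (equivalently, the Riemannian gradient of $-c^\top x$, i.e., the steepest-descent direction), which is what a cost-minimizing flow should be. The paper's own overview display is loose about this sign (its chain of equalities only balances with a sign flipped), and the theorem's phrase ``the gradient of the objective'' is to be read in that loose steepest-descent sense; your proof captures the intended content, but a careful write-up should say ``negative gradient'' explicitly instead of appealing to a convention.
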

\begin{proof}
One can use the above derivation of the optimization direction to deduce this result. If $x$ is feasible then: $$P_o(x)=W\inparen{A^\top\inparen{AWA^\top}^{-1}Ax-c} =W\inparen{A^\top\inparen{AWA^\top}^{-1}b-c}=P(x).$$
\end{proof}

\subsection{Physarum as an Entropy Barrier Method}\label{sec:entropy}
In this section we show that every Physarum trajectory starting from a strictly feasible point can be seen as a path of optimizers to a certain family of convex programs. Let us fix any starting point $s>0$ satisfying $As=b$ and consider for every $\mu\geq 0$ the following convex program:
\begin{align}\label{eq:reg_convp}
\mbox{min. } \quad & \mu c^\top x + \sum_{i=1}^n x_i c_i \ln (x_i c_i) - \sum_{i=1}^n x_ic_i (1+\ln (s_i c_i)) \\ \nonumber
\mbox{s.t. }\quad & Ax=b\\ \nonumber
& x\geq  0. 
\end{align}
Note that one can rewrite the objective function equivalently as:
$$c^\top x +\frac{1}{\mu}f^s(x).$$
Which is just our original linear objective, regularized by an entropy-like strongly convex function $$f^s(x)\defeq \sum_{i=1}^n x_i c_i \ln (x_i c_i) - \sum_{i=1}^n x_ic_i (1+\ln (s_i c_i)).$$ Observe that $\nabla^2 f^s(x) = CX^{-1}$ is the Riemannian metric we are studying above. It should be intuitively clear that as $\mu\to \infty$  the solution to~\eqref{eq:reg_convp} will tend to the optimal solution of the linear program~\eqref{lp}.

\begin{theorem}\label{thm:entropy}
Take any starting point $s>0$ with $As=b$. Suppose that $x(\mu)$ is the unique optimal solution to~\eqref{eq:reg_convp}. Then $x(\mu)$ for $\mu \in [0,\infty)$ is the solution to the Physarum dynamical system with the initial condition $x(0)=s$.
\end{theorem}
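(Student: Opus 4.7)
The plan is to treat $\mu$ as the time parameter: I will verify that $\mu = 0$ gives the initial condition, derive the KKT conditions for general $\mu$, then differentiate them implicitly in $\mu$ and use the constraint $Ax(\mu)=b$ to recover the Physarum update $\dot{x}=q-x$.

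First I would handle the base case $\mu=0$. The objective then reduces to $f^s(x)$, which is strictly convex on $\Omega^f$; its gradient at $x$ is $c\circ \ln(x/s)$, which vanishes at $x=s$. Since $s$ is strictly feasible, $s$ is the unique stationary point, so $x(0)=s$. For $\mu>0$, note that the entropic regularizer forces $\partial_{x_i}\to -\infty$ as $x_i\to 0^+$, so the optimum $x(\mu)$ lies strictly in the interior and the non-negativity constraint is inactive. The Lagrangian stationarity conditions (for the equality constraint $Ax=b$ with multiplier $\pi\in\R^m$) then read
\[
\mu\, c_i + c_i\ln\!\bigl(x_i/s_i\bigr) \;=\; (A^\top \pi)_i, \qquad i=1,\dots,n,
\]
or equivalently $\ln(x_i(\mu)/s_i) = (A^\top \pi(\mu))_i/c_i - \mu$.

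Next, the implicit function theorem applies: the Hessian of the objective in $x$ is $CX^{-1}$, which is positive definite on $\Omega$, so $\mu\mapsto (x(\mu),\pi(\mu))$ is $C^1$. Differentiating the stationarity condition in $\mu$ gives $\dot{x}_i/x_i = (A^\top\dot{\pi})_i/c_i - 1$, that is
\[
\dot{x}(\mu) \;=\; W A^\top \dot{\pi}(\mu) - x(\mu),
\]
with $W=\mathrm{Diag}(x_i/c_i)$ evaluated at $x(\mu)$. Since $Ax(\mu)=b$ identically in $\mu$, we have $A\dot{x}=0$; combining this with the previous display yields $AWA^\top \dot{\pi} = Ax = b$, hence $\dot{\pi}(\mu) = L^{-1}b = p$, exactly the pressure vector from the Physarum construction. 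Substituting back, $\dot{x}(\mu) = WA^\top p - x = q - x$, which is precisely \eqref{eq:dynamics} at the point $x(\mu)$.

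Finally, $\mu\mapsto x(\mu)$ is a $C^1$ curve in $\Omega^f$ satisfying the Physarum ODE with $x(0)=s$. By the uniqueness part of Theorem~\ref{thm:existence0}, this curve coincides with the Physarum trajectory starting at $s$. The main obstacle is a conceptual rather than computational one: making sure that the optimizer remains strictly positive for all $\mu\in[0,\infty)$ (so KKT stationarity holds without slackness complications) and that $\pi(\mu)$ is well-defined and smooth. Both follow from the entropic barrier blowing up the gradient at the boundary together with the full-rank assumption on $A$, which makes $L=AWA^\top$ invertible throughout the trajectory.
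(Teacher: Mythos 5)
Your proposal is correct and follows essentially the same route as the paper's proof: establish interiority of the minimizer via the entropic blow-up at the boundary, write the stationarity/KKT conditions, apply the implicit function theorem, and differentiate in $\mu$ using $A\dot{x}=0$ to identify $\dot{\pi}=L^{-1}b=p$ and hence $\dot{x}=q-x$. The only cosmetic differences are that the paper reduces to $c=\mathbbm{1}$ and passes explicitly through the dual program with Slater's condition, whereas you work with general $c$ directly from primal stationarity; both are sound.
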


\noindent
The idea of the proof is to write down KKT conditions for the convex program~\eqref{eq:reg_convp} and take advantage of strong duality. By implicit differentiation we conclude that $x(\mu)$ follows the same dynamics as Physarum, furthermore $x(0)=s$, hence they have to coincide. 

\begin{proof}
To improve readability, we will assume that $c=1$, i.e. $c_i=1$ for all $i=1,\ldots, n$. To obtain the general version one can go back and forth with the substitution $x\mapsto C^{-1} z$.

Fix any $s>0$ such that $As=b$. Let us denote $f^s(x)=\sum_{i} x_i \ln x_i  -\sum_{i} x_i (1+\ln s_i )$. First one needs to show that in fact for every $\mu\geq 0$ there exists a minimizer of $\mu 1^\top x + f^s(x)$ in the set $\{x: Ax=b,x>0\}$, in other words that $x(\mu)>0$. This can be seen as follows: pick any point $\bar{x}$ on the boundary, i.e. having $\bar{x}_i=0$ for some $i$. Consider the following scalar function:
$$h(p)=\mu 1^\top (\bar{x}+p(s-\bar{x})) + f^s(\bar{x}+p(s-\bar{x})).$$
for $p\in [0,1]$; $h(p)$ is continuous on $[0,1]$ and differentiable in the interval $(0,1)$. We have:
$$\frac{d}{dp}h(p)=\mu 1^\top (s-\bar{x}) + (s-\bar{x})^\top(\ln(\bar{x}+p(s-\bar{x}))-\ln s).$$
This implies that $h'(p) \to -\infty$ as $p \to 0$, by looking at a coordinate $i$ where $\bar{x}_i=0$. Hence $h(\eps)<h(0)$ for $\eps>0$ small enough, implying (together with convexity) that the optimal value of $\mu 1^\top x + f^s(x)$ over $\{x:Ax=b, x\geq 0\}$ is attained at some point $x(\mu)>0$.

Fix $\mu\geq 0$, introduce dual variables $y\in \R^m$ for the equality constraints and consider the Lagrangian:
$$L(x,y)=\mu 1^\top x + \sum_{i} x_i \ln x_i  - \sum_{i} x_i (1+\ln s_i ) - y^\top(Ax-b).$$
We always keep in mind that $x>0$. Fix $y\in \R^m$ and let us compute the derivative of $L(x,y)$:
$$\nabla_x L(x,y) = \mu 1+(\ln x - \ln s) - A^\top y.$$
the derivative is $0$ at a point $x$ given by:
\begin{align*}
\ln x_i &= a_i^\top y -\mu + \ln s_i\\
x_i & = s_i\cdot  \exp(a_i^\top y - \mu)
\end{align*} 
Then the dual objective $g(y)$ is given by substituting the above $x$ (at which $L(x,y)$ is minimized) in $L(x,y)$. After some cancellations we get:
$$g(y) = y^\top b - \sum_i x_i = y^\top b - \sum_i s_i \cdot \exp(a_i^\top y - \mu )$$
Hence the dual program is:
\begin{align*}
\mbox{min } \quad &  y^\top b - \sum_i s_i \cdot \exp(a_i^\top y - \mu )\\ 
\mbox{s.t. }\quad & y\in \R^m
\end{align*}
Note that strong duality holds, since Slater condition is satisfied for the primal (the witness being $s$). This means that the unique maximizer of g(y) will also yield the optimal solution for the primal. Let 
\begin{align*}
y(\mu) &\defeq  \mbox{argmax}\{y^\top b - \sum_i s_i \cdot \exp(a_i^\top y - \mu ): y\in \R^m\},\\
x_i(\mu) &\defeq s_i\cdot  \exp(a_i^\top y(\mu) - \mu ).
\end{align*}
Then $(x(\mu),y(\mu))$ is the optimal primal-dual pair. 

We will show that $x(\mu)$ is a solution to the dynamical system:
$$\frac{d}{d\mu}x(\mu)=P(x(\mu))=X\inparen{A^\top \inparen{AXA^\top}^{-1} AX1 - 1},$$
which represents Physarum when $c=1$.
 Let us start by examining the initial condition. If $\mu=0$ then one can easily check, that the optimal pair is $(x(0),y(0)) = (s,0)$, as claimed. Let us now compute the derivative of $x(\mu)$ w.r.t. $\mu$, note that the implicit function theorem guarantees that $(x(\mu), y(\mu))$ is a continuously differentiable function of $\mu$. In the following calculations we write shortly $x,y$ for $x(\mu)$ and $y(\mu)$. We have:
$$\dot{x}_i = s_i \cdot \exp(a_i^\top y - \mu) \cdot (a_i ^\top \dot{y} - 1) = x_i (a_i^\top \dot{y} - 1).$$
In other words $\dot{x} = X(A^\top \dot{y} - 1)$. Let us now take the equality $Ax=b$ and differentiate w.r.t. $\mu$:
\begin{align*}
A\dot{x} &= 0\\
AX(A^\top \dot{y} - 1) &=0
\end{align*}
In consequence, $\dot{y}$ satisfies $(AXA^\top) \dot{y}=AX1$. Since $\rank(A)=m$, one can show that $AXA^\top$ is invertible, hence $\dot{y}$ is uniquely determined by $\dot{y} = (AXA^\top)^{-1} AX1$. Consequently:
$$\dot{x} =  X\inparen{A^\top \dot{y} - 1} = X\inparen{A^\top \inparen{AXA^\top}^{-1} AX1 - 1}=P(x).$$ 
\end{proof}

\section{Existence of Solution}\label{section:existence}

In this section we would like to argue that no matter what the initial condition $x(0)>0$ is, the Physarum dynamics has a global solution $x:[0,\infty)\to \R_{>0}^n$. Let us start by explaining what is the significance of the assumption $x(t)>0$. The Physarum dynamics is defined as a dynamical system $\dot{x} = P(x)$, where $P(x) = W(A^\top (AWA^\top)^{-1}b - c)$. For $x>0$ one can show that  $(AWA^\top)^{-1}b$ exists, (since the kernels of $AWA^\top$ and $A^\top$ coincide) hence $P(x)$ is well defined and continuous over $\Omega = \R_{>0}^n$. 
Can we extend it out of $\Omega$? This leads to troubles, because in general $AWA^\top$ might be singular, when $w$ contains some zeros or negative entries. For example if $w=0$ then clearly $AWA^\top =0$, hence there is no hope to extend the vector field to $x=0$. In general there is no simple way to extend (continuously) this vector field from $\Omega=\R_{>0}^n$ to a larger set.

\subsection{Basic Results}
This section introduces some preparatory results which are essential for studying the existence of solution and then asymptotic properties of the Physarum dynamics. Recall that we are always assuming that the feasible region $\{x:Ax=b, x\geq0\}$ is nonempty. 

\begin{remark}\label{remark:vertices}

In the proofs we will repeatedly use the fact that polyhedra of the kind $\{x:Ax=b, x\geq 0\}$ always have at least one vertex. Let $\bar{x}$ be such a vertex. Denote $Z=\{j\in [n]: \bar{x}_j=0\}$ and $N=[n]\setminus Z$. Then $\bar{x}_N$ (i.e. the subvector of $\bar{x}$ consisting of entries $x_j$ for $j\in N$) can be determined as a unique solution to the linear system $A_Nx_N=b$, where $A_N$ is a submatrix of $A$ consisting of columns $a_j$ for $j\in N$. The above are standard facts about vertices of polyhedra and can be found in any book on linear programming, e.g.~\cite{Karloff91}.

\end{remark}

Below we present a key lemma, which then allows to get a solid grasp on the behavior of the vector field $P(x)$ defining Physarum dynamics. 

\begin{lemma}\label{lemma:key}
Let $w\in \R^{n}_{>0}$, $W=\diag{w}$, $L=AWA^\top$. There exists a constant $\alpha>0$ depending only on $A$ such that for every $i \in [n]:$ $$\norm{A^\top L^{-1}a_i}_\infty\leq \frac{\alpha}{w_i}. $$
Quantitatively, one can take $\alpha=\mdet= \max\{|\det(A')|:A' \mbox{ is a square submatrix of A}\}$.
\end{lemma}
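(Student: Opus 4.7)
My plan is to compute the entries of $A^\top L^{-1} a_i$ explicitly via Cauchy--Binet and then exploit the integrality of $A$ to convert products of two $m \times m$ minors into sums of squared minors that can be matched against $\det(L)$.

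First I would dispose of the easy case $i=j$. By Cauchy--Binet,
$$\det(L) \;=\; \det(AWA^\top) \;=\; \sum_{T\subset[n],\,|T|=m} \det(A_T)^2 \prod_{t\in T} w_t,$$
and a standard cofactor expansion of $a_i^\top L^{-1} a_i\cdot \det(L)$ gives
$$w_i\, a_i^\top L^{-1} a_i \;=\; \frac{\sum_{T \ni i,\,|T|=m} \det(A_T)^2 \prod_{t\in T} w_t}{\det(L)} \;\in\; [0,1],$$
hence $|a_i^\top L^{-1} a_i| \le 1/w_i \le \mdet/w_i$.

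For the main case $i \ne j$, I would invoke the Schur complement identity
$$a_j^\top L^{-1} a_i \;=\; -\det\!\begin{pmatrix} L & a_i \\ a_j^\top & 0 \end{pmatrix}\!\Big/\det(L),$$
together with the factorization
$$\begin{pmatrix} L & a_i \\ a_j^\top & 0 \end{pmatrix} \;=\; \begin{pmatrix} A & 0 \\ 0 & 1 \end{pmatrix}\!\begin{pmatrix} W & e_i \\ e_j^\top & 0 \end{pmatrix}\!\begin{pmatrix} A^\top & 0 \\ 0 & 1 \end{pmatrix}.$$
Expanding the determinant of this triple product by Cauchy--Binet, nonvanishing $(m+1)\times(m+1)$ minors of the outer matrices force the last row/column to be selected, so they reduce to $\det(A_{S})$ with $S\in\binom{[n]}{m}$. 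A direct inspection of the middle matrix shows that the corresponding $(m+1)\times(m+1)$ minor is nonzero only when the row and column index sets take the paired form $U\cup\{i\}$ and $U\cup\{j\}$ for some $U\subset[n]\setminus\{i,j\}$ with $|U|=m-1$. After tracking signs this yields
$$w_i\, a_j^\top L^{-1} a_i \;=\; \frac{1}{\det(L)}\sum_{\substack{|T|=m\\ T \ni i,\,T \not\ni j}} \det(A_T)\,\det(A_{T'})\,\prod_{t\in T} w_t, \qquad T' := (T\setminus\{i\})\cup\{j\}.$$

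To finish I would first bound $|\det(A_{T'})| \le \mdet$ uniformly (each $A_{T'}$ is an $m\times m$ submatrix of $A$), and then use the crucial integrality estimate $|\det(A_T)| \le \det(A_T)^2$, which holds because $\det(A_T)\in\Z$ is either $0$ or of modulus at least $1$. Combining these two bounds,
$$\left|w_i\, a_j^\top L^{-1} a_i\right| \;\le\; \frac{\mdet}{\det(L)} \sum_{|T|=m} \det(A_T)^2 \prod_{t\in T} w_t \;=\; \mdet,$$
i.e.\ $|a_j^\top L^{-1} a_i| \le \mdet/w_i$, and taking the maximum over $j$ yields the lemma. The step I expect to require most care is the sign bookkeeping and case analysis in the triple-product Cauchy--Binet, specifically verifying that only the paired $(U\cup\{i\},U\cup\{j\})$ pairs contribute a nonzero middle minor; once that is settled the remaining estimates are elementary.
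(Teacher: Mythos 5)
Your proof is correct, but it takes a genuinely different route from the paper's. The paper splits the claim into (i) the diagonal bound $a_i^\top L^{-1}a_i\le 1/w_i$, obtained from the PSD domination $w_i a_ia_i^\top\preceq L$, and (ii) the comparison $a_k^\top p\le \alpha\, a_i^\top p$ for $p=L^{-1}a_i$, obtained by first normalizing signs so that $A^\top p\ge 0$ (flipping columns of $A$ leaves $L$ unchanged) and then observing that $WA^\top p$ is a nonnegative solution of $As=a_i$, so maximizing $s_k$ over $\{s\ge 0: As=a_i\}$ — a linear program whose optimum sits at a vertex with positive coordinates at least $\mdet^{-1}$ by Cramer's rule — yields $\alpha=\mdet$. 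Your argument instead expands everything by (generalized) Cauchy--Binet: the diagonal case via the leverage-score identity, and the off-diagonal case via the bordered determinant and the triple-product factorization, finishing with $\abs{\det(A_{T'})}\le\mdet$ and the integrality trick $\abs{\det(A_T)}\le\det(A_T)^2$ to reabsorb the sum into $\det(L)$. One small caveat: in your displayed off-diagonal identity each term should in general carry a $T$-dependent sign (coming from the positions of $i$ in $T$ and $j$ in $T'$ when the middle minor is expanded), so the identity as written is only correct up to per-term signs; this is harmless because your final estimate passes to absolute values anyway, and you already flagged the sign bookkeeping as the delicate point. Comparing the two: your route is fully explicit and self-contained (no LP/vertex facts needed), and it transparently isolates where integrality of $A$ enters; the paper's route is shorter and, for its qualitative part (existence of \emph{some} $\alpha$ depending on $A$), does not need $A$ to be integral at all, whereas your quantitative argument relies on integrality from the start. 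Both arrive at the same constant $\alpha=\mdet$.
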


\begin{proof}
Fix $i$ and denote by $p$ the solution to the system $Lp=a_i$. We can assume that $p^\top a_j \geq 0$ for every $j\in [m]$ by replacing the row $a_j$ by $-a_j$ if necessary. One can easily see that such a change does not alter the problem, because $L$ remains the same.

Let us first show that $a_i ^\top L^{-1} a_i \leq \frac{1}{w_i}$. Note that
$$w_i a_i a_i^\top\preceq \sum_{j=1}^m w_j a_j a_j ^\top =L$$
where $\preceq$ is the PSD order. This means that $w_i u^\top a_i a_i^\top u \leq u^\top L u$, for every $u\in \R^n$. Let us pick $u=L^{-1} a_i$. We get:
\begin{align*}
w_i u^\top a_i a_i^\top u &\leq u^\top L u \\
w_i a_i^\top L^{-1} a_i a_i^\top L^{-1} a_i &\leq a_i^\top L^{-1}  L L^{-1} a_i \\ 
w_i (a_i^\top L^{-1} a_i)^2 &\leq a_i^\top L^{-1} a_i\\
a_i^\top L^{-1} a_i &\leq \frac{1}{w_i}.
\end{align*}

\noindent
It remains to argue that for some $\alpha$ and for every $k\in [n]:$ $$a_k^\top p\leq \alpha a_i^\top p.$$ 
Fix $k\in [n]$, assume $k\neq i$. If $a_k^\top p=0$ then we are done, assume $a_k^\top p>0$. From $Lp=a_i$ we get:
$$ \sum_{j=1}^m w_j a_j (a_j ^\top p) = a_i.$$
Hence the set $S_k \defeq \{s\in \R^m _{\geq 0}: As = a_i \wedge s_k>0\}$ is nonempty ( $WA^\top p$ belongs to it). Take $s\in S_k$ with $s_k$ maximum possible (it can be seen that $s_k$ is bounded over all $s\in S_k$). Then $\sum_{j=1}^m s_j a_j =a_i$, hence $\sum_{j=1}^m s_j a_j ^\top p = a_i ^\top p$. Since $s_j a_j^\top p\geq 0$ for all $j$, we can deduce that $s_k a_k^\top p \leq a_i^\top p$ and hence $a_k^\top p \leq \frac{a_i^\top p}{s_k}=\alpha_k a_i^\top p$. It is enough to choose $\alpha=\max_{k} \alpha_k$. 

For the quantitative bound one needs to note that $\alpha$ is chosen according to the following values: $\eps_k = \max\{s_k : As=a_i, s\geq 0\}$. In fact $\alpha = \max_{k} \frac{1}{\eps_k}$. Because linear programs attain optimal values in vertices, one can argue that $s^\star$ -- the optimal solution to $\max\{s_k : As=a_i, s\geq 0\}$ (for some fixed $k$) can be chosen to be a vertex of the polyhedron $\{s:As=a_i, s\geq 0\}$. By the Cramer's rule, every positive entry of $s^\star$ is lower-bounded by $\mdet^{-1}$.
\end{proof}
\noindent Let us now see what happens when $w=C^{-1}x$ actually comes from a feasible vector.
\begin{corollary}\label{cor:pot_diff}
Suppose that $x>0$ and $Ax=b$, we have:
$$\norm{A^\top p}_\infty\leq \mdet \cdot C_s. $$
\end{corollary}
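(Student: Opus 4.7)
The plan is to reduce the statement to Lemma~\ref{lemma:key} by expanding $b$ in a convenient way using the feasibility assumption $Ax=b$. Since $W=\diag{x_i/c_i}$, a direct calculation gives $Wc = x$, so we can write
\[
b \;=\; Ax \;=\; AWc \;=\; \sum_{i=1}^{n} c_i w_i\, a_i,
\]
where $w_i = x_i/c_i > 0$. This rewrites $b$ as a nonnegative combination of the columns $a_i$ with coefficients $c_i w_i$.

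Next I would apply $L^{-1}$ to both sides and then multiply by $A^\top$. Since $p = L^{-1} b$, linearity yields
\[
A^\top p \;=\; \sum_{i=1}^{n} c_i w_i \, (A^\top L^{-1} a_i).
\]
Taking the $\ell_\infty$ norm and using the triangle inequality gives $\norm{A^\top p}_\infty \leq \sum_i c_i w_i \norm{A^\top L^{-1} a_i}_\infty$. Now Lemma~\ref{lemma:key} applies term by term: $\norm{A^\top L^{-1} a_i}_\infty \leq \mdet / w_i$, so the $w_i$'s cancel in each summand and one obtains
\[
\norm{A^\top p}_\infty \;\leq\; \sum_{i=1}^{n} c_i \cdot \mdet \;=\; \mdet \cdot C_s.
\]

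There is no real obstacle here; the whole argument is a one-line consequence of the key lemma once one recognizes that feasibility of $x$ lets us expand $b$ in the basis of columns $a_i$ with coefficients whose $w_i$-factors exactly cancel the $1/w_i$ blow-up coming from Lemma~\ref{lemma:key}. The only thing worth double-checking is the identity $Wc = x$, which is immediate from the definition $W=\diag{x_i/c_i}$, and the nonnegativity of the coefficients $c_i w_i$ (used implicitly in invoking the bound component-wise).
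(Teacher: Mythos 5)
Your proof is correct and is essentially the paper's own argument: the paper expands $b=Ax=\sum_i x_i a_i$ (your coefficients $c_i w_i$ equal $x_i$), applies the triangle inequality, and invokes Lemma~\ref{lemma:key} term by term so that the $1/w_i$ factors cancel. No differences worth noting.
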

\begin{proof}
Let $w=C^{-1} x$, using Lemma~\ref{lemma:key} we get:
\begin{align*}
\norm{A^\top p}_\infty &=\norm{A^\top L^{-1}b}_\infty = \norm{\sum_{i=1}^n A^\top L^{-1}(a_i x_i)}_\infty \\
& \leq \sum_{i=1}^n x_i \norm{A^\top L^{-1}a_i }_\infty\\
& \leq \sum_{i=1}^n x_i \frac{\mdet}{w_i}= \mdet \cdot C_s .
\end{align*}
\end{proof}

\noindent It turns out that if we could prove a result such as above for all $x\in \Omega$, not only for feasible $x$, it would imply the existence of solution. Unfortunately, this ceases to be true when do not make this restriction. Intuitively, our strategy is to prove that $\norm{A^\top p}$ is bounded over every trajectory $\{x(t):0\leq t\leq T\}$, this then allows us to reason that no trajectory will leave $\Omega$. The next lemma shows a uniform bound but with $\norm{A\top p}$ replaced by $\norm{WA^\top p}$.
\begin{lemma}\label{lemma:qbound}
Suppose $y$ if feasible: $Ay=b$ and $y\geq 0$. Then for every $x\in \R^n_{>0}$ it holds that $\norm{q}_{\infty} \leq \alpha \norm{y}_1$. As in Lemma~\ref{lemma:key} we can take $\alpha=\mdet$.
\end{lemma}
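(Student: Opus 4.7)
\begin{proofsketch}
The plan is to express $q_i$ as a sum over the columns of $A$ weighted by the coordinates of $y$, and then bound each term using Lemma~\ref{lemma:key}.

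First I would write out $q_i$ explicitly. By definition, $q = WA^\top L^{-1} b$, so
\[
q_i = w_i\, a_i^\top L^{-1} b.
\]
Since $y$ is feasible we have $b = Ay = \sum_{j=1}^n y_j a_j$, and substituting this in gives
\[
q_i = w_i \sum_{j=1}^n y_j\, a_i^\top L^{-1} a_j.
\]
Now I would use the symmetry of $L^{-1}$ (which holds because $L = AWA^\top$ is symmetric) to rewrite $a_i^\top L^{-1} a_j = a_j^\top L^{-1} a_i$, so that the factor $a_i$ is the one appearing inside the expression bounded in Lemma~\ref{lemma:key}.

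The key step is then to invoke Lemma~\ref{lemma:key}, which yields $\|A^\top L^{-1} a_i\|_\infty \leq \alpha/w_i$, and in particular $|a_j^\top L^{-1} a_i| \leq \alpha/w_i$ for every $j$. Combining this with $y \geq 0$ to remove the absolute values on $y_j$, I obtain
\[
|q_i| \leq w_i \sum_{j=1}^n y_j \cdot \frac{\alpha}{w_i} = \alpha \sum_{j=1}^n y_j = \alpha \|y\|_1,
\]
which is the claim. Taking the maximum over $i$ gives $\|q\|_\infty \leq \alpha \|y\|_1$, and the quantitative value $\alpha = \mdet$ is inherited directly from Lemma~\ref{lemma:key}. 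There is no real obstacle here; the whole lemma is essentially a one-line consequence of Lemma~\ref{lemma:key} once we expand $b$ in terms of the columns of $A$ using feasibility of $y$.
\end{proofsketch}
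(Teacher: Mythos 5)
Your proposal is correct and follows exactly the paper's own argument: expand $b=\sum_j y_j a_j$ using feasibility, use symmetry of $L^{-1}$ to put $a_i$ in the position covered by Lemma~\ref{lemma:key}, and sum the resulting bounds with $y\geq 0$. No differences worth noting.
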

\begin{proof}
Recall that $q=WA^\top L^{-1}b$, hence $q_i = w_i a_i^\top L^{-1} b$. We express $b$ as $b=Ay = \sum_{j=1}^n  y_j a_j$.
\begin{align*}
|q_i|&= \abs{w_i a_i^\top L^{-1} \inparen{ \sum_{j=1}^n  y_j a_j}}
= \abs{\sum_{j=1}^n  y_j w_i a_i^\top L^{-1} a_j}\\
&\leq \sum_{j=1}^n w_i y_j \abs{a_i ^\top L^{-1} a_j}
=\sum_{j=1}^n w_i y_j \abs{a_j ^\top L^{-1} a_i}\\
&\stackrel{\mbox{Lemma \ref{lemma:key}}}{\leq} \sum_{j=1}^n w_i y_j \frac{\alpha}{w_i} = \alpha \norm{y}_1
\end{align*}
Hence $\norm{q}_\infty \leq \alpha \norm{y}_1$.
\end{proof}
\noindent The following corollary gives a concrete bound we can deduce from the above lemma.

\begin{corollary}\label{cor:qbound}
For every $x\in \R_{>0}^n$ it holds that $\norm{q}_\infty \leq \mdet^2 \cdot n \cdot \norm{b}_1$.
\end{corollary}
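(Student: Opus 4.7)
The plan is to combine Lemma~\ref{lemma:qbound} with a careful choice of feasible point $y$ having controlled $\ell_1$-norm. Since Lemma~\ref{lemma:qbound} already gives $\|q\|_\infty \leq \mdet \cdot \|y\|_1$ for any feasible $y$, the whole task reduces to exhibiting a single feasible $y \geq 0$ with $Ay = b$ satisfying $\|y\|_1 \leq \mdet \cdot n \cdot \|b\|_1$.

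For this, I would take $y$ to be a vertex of the feasible polyhedron $\{x : Ax = b, x \geq 0\}$, which exists by our standing nonemptiness assumption (cf.\ Remark~\ref{remark:vertices}). Let $N = \supp(y) \subseteq [n]$. By the characterization of vertices, the columns $\{a_j : j \in N\}$ are linearly independent, so $|N| \leq m \leq n$, and the nonzero part $y_N$ is the unique solution of the square nonsingular system $A_N y_N = b_N'$, where $A_N$ is a square invertible submatrix of $A$ (possibly after restricting to a subset of rows of $A$, which is fine since $A$ has full row rank) and $b_N'$ is the corresponding subvector of $b$.

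The key quantitative step is Cramer's rule: each entry of $y_N$ has the form $\det(A_N^{(i)}) / \det(A_N)$, where $A_N^{(i)}$ replaces the $i$-th column of $A_N$ by $b_N'$. Expanding $\det(A_N^{(i)})$ along the replaced column gives $|\det(A_N^{(i)})| \leq \|b\|_1 \cdot \mdet$ (each minor is itself a square submatrix of $A$). Since $A \in \Z^{m\times n}$, the invertible integer matrix $A_N$ has $|\det(A_N)| \geq 1$. Thus $|y_i| \leq \mdet \cdot \|b\|_1$ for each $i \in N$, and summing over the at most $n$ nonzero coordinates yields $\|y\|_1 \leq n \cdot \mdet \cdot \|b\|_1$.

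Plugging this bound into Lemma~\ref{lemma:qbound} gives $\|q\|_\infty \leq \mdet \cdot \|y\|_1 \leq \mdet^2 \cdot n \cdot \|b\|_1$, as desired. There is no substantive obstacle here: the only subtlety is making sure the Cramer's rule estimate uses the integrality of $A$ to get $|\det(A_N)| \geq 1$ rather than just a nonzero lower bound, which is precisely the reason the final constant involves $\mdet^2$ rather than a more delicate condition number.
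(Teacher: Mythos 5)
Your proof is correct and follows essentially the same route as the paper: pick a vertex $y$ of the feasible region, bound its nonzero entries by $\mdet\cdot\norm{b}_1$ via Cramer's rule and the integrality of $A$, and feed $\norm{y}_1\leq n\,\mdet\,\norm{b}_1$ into Lemma~\ref{lemma:qbound}. Your handling of the reduction to a square nonsingular subsystem is if anything slightly more careful than the paper's.
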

\begin{proof}
We use Lemma~\ref{lemma:qbound}. Let us pick $y$ as any vertex of the feasible region $\{x:Ax=b, x\geq 0\}$. We know that the non-zero portion $y_N$ of y can be determined as a solution to the linear system $A_N y_N=b$. Hence, by Cramer's rule every non-zero entry of $y$ ca be obtained as follows:
$$y_i =\sum_{j=1}^m b_j \frac{\alpha_j}{\alpha_0}.$$
for some $\alpha_0, \alpha_1, \ldots, \alpha_m$ being determinants of square submatrices of $A$. Since $A$ has integer entries, $|\alpha_0|\geq 1$ and $|\alpha_1|, \ldots, |\alpha_m|\leq \mdet$, thus $|y_i|\leq \norm{b}_1 \mdet$ and $\norm{y}_1 \leq n \norm{b}_1 \mdet$.
\end{proof}

\subsection{The Proof of Existence}
We are now ready to prove the following theorem (same as Theorem \ref{thm:existence0}). As previously, we will use $\Omega$ to denote $\R_{>0}^n$.
\begin{theorem}\label{thm:existence}
For every initial condition $x(0) \in \Omega $ the dynamical system~\eqref{eq:dynamics} has a unique solution $x:[0,\infty) \to \Omega $.
\end{theorem}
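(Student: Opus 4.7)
The plan is to apply the Picard--Lindel\"of theorem for local existence, then extend indefinitely via an escape-lemma argument. Since $L=AWA^\top$ is nonsingular on $\Omega = \R^n_{>0}$, the field $P(x)=W(A^\top L^{-1}b - c)$ is real-analytic on $\Omega$, so for any $x(0)\in\Omega$ there is a unique maximal solution $x:[0,T^*)\to \Omega$. To conclude $T^*=\infty$, it suffices (by the standard ODE extension theorem) to show that on every subinterval $[0,T]$ with $T<T^*$, $x(t)$ is trapped in a compact subset of $\Omega$, i.e., bounded above and bounded away from zero coordinate-wise. The upper bound is immediate from Corollary~\ref{cor:qbound}: $\|q(t)\|_\infty\le M:=\mdet^{2} n\|b\|_1$ uniformly on $\Omega$, so $\dot x_i = q_i - x_i \le M - x_i$ and Gronwall gives $x_i(t)\le \max(x_i(0),M)$ on $[0,T^*)$.

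The more delicate part is to keep the coordinates away from $0$. Pick a feasible point $y$ in the relative interior of $\{x:Ax=b,\,x\ge 0\}$ (nonempty under our feasibility hypothesis) and let $J=\supp(y)$. Introduce the entropy-like potential $\Phi(t)=\sum_i c_i y_i \ln x_i(t)$. Using $\dot x_i/x_i = (a_i^\top p-c_i)/c_i$ together with $Ay=b$,
$$\dot\Phi(t) = \sum_{i=1}^n y_i\bigl(a_i^\top p(t)-c_i\bigr) = (Ay)^\top p(t)- c^\top y = b^\top L(t)^{-1} b - c^\top y \ge -c^\top y,$$
since $L(t)^{-1}$ is positive semidefinite. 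Hence $\Phi(t)\ge \Phi(0)-c^\top y\cdot T$ on $[0,T]$; combined with the upper bound on each $x_j(t)$ from the previous paragraph, this forces $x_i(t)\ge \delta_i>0$ for every $i\in J$ and all $t\in[0,T]$, with constants $\delta_i$ depending only on $T$, $y$, $x(0)$ and the data.

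To finish I bootstrap from a lower bound on $\{x_i:i\in J\}$ to a lower bound on every coordinate via Lemma~\ref{lemma:key}. Writing $b=Ay$,
$$|a_j^\top p(t)| = \Bigl|\sum_{i\in J} y_i\, a_j^\top L(t)^{-1} a_i\Bigr| \le \mdet \sum_{i\in J}\frac{c_i y_i}{x_i(t)} \le \mdet \sum_{i\in J}\frac{c_i y_i}{\delta_i} =: \px,$$
uniformly on $[0,T]$. Then $\dot x_j = (x_j/c_j)(a_j^\top p - c_j) \ge -\alpha\, x_j$ for $\alpha := (\px + \max_j c_j)/\min_j c_j$, so Gronwall gives $x_j(t)\ge x_j(0)e^{-\alpha T}>0$ for every $j\in[n]$. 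This yields the coordinate lower bound, and hence $T^*=\infty$. The main obstacle is precisely this bootstrap step: the entropy potential controls only the coordinates in $\supp(y)$, and for generic LPs there are indices $j\notin \supp(y)$ for every feasible $y$. Lemma~\ref{lemma:key}, applied with $b=Ay$, is exactly the tool that converts a lower bound on $\{x_i:i\in J\}$ into a uniform bound on $A^\top p$, and hence---via the multiplicative form $\dot x_j=(x_j/c_j)(a_j^\top p-c_j)$ of the dynamics---into a lower bound on every remaining coordinate.
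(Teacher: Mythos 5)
Your argument is correct and follows essentially the same route as the paper: an upper bound from Corollary~\ref{cor:qbound} plus Gronwall, the entropy potential $\sum_i c_i y_i \ln x_i(t)$ with derivative $b^\top L^{-1}b - c^\top y \ge -c^\top y$ to control the coordinates in $\supp(y)$, and then Lemma~\ref{lemma:key} applied to $b=Ay$ to bound $\|A^\top p\|_\infty$ on finite intervals and push the positivity to all coordinates. The only (immaterial) difference is that you phrase the extension step via compact trapping for the maximal solution, whereas the paper uses its Theorem~\ref{theorem:bap_existence} (existence of the limit at $T$ and membership in $\Omega$); these are equivalent.
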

\noindent
From now on we assume that the initial condition $x(0)=s\in \Omega $ is fixed.
In the light of Theorem~\ref{theorem:bap_existence}, to prove~\ref{thm:existence} it suffices to show the following two lemmas. 
\begin{lemma}\label{lemma:bap1}
Suppose $x:[0,T)\to \Omega$ is a solution to~\eqref{eq:dynamics} for some $T\in \R_{>0}$ then $\lim_{t \to T^{-}} x(t)$ exists.
\end{lemma}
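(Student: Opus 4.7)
\textbf{Proof plan for Lemma~\ref{lemma:bap1}.} The plan is to show that $x(t)$ is uniformly bounded and Lipschitz on $[0,T)$, from which existence of the limit at $T$ follows by the Cauchy criterion. The key leverage is that Corollary~\ref{cor:qbound} gives a uniform bound on $\|q(t)\|_\infty$ that holds at \emph{every} point of $\Omega$, not just for feasible $x$.

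First I would apply Corollary~\ref{cor:qbound}: since $x(t) \in \Omega$ for every $t \in [0,T)$, we have $\|q(t)\|_\infty \leq Q$ where $Q \defeq \mdet^2 \cdot n \cdot \|b\|_1$ is a constant independent of $t$. Next, I would integrate the scalar ODE $\dot{x}_i = q_i - x_i$ coordinatewise. Multiplying by the integrating factor $e^t$ gives $\frac{d}{dt}(e^t x_i(t)) = e^t q_i(t)$, and hence
\begin{equation*}
x_i(t) = e^{-t} s_i + e^{-t}\int_0^t e^\tau q_i(\tau)\,d\tau.
\end{equation*}
The uniform bound $|q_i(\tau)|\leq Q$ then yields $|x_i(t)| \leq s_i + Q(1-e^{-t}) \leq \|s\|_\infty + Q$ for every $t \in [0,T)$ and every $i$. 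In particular, $\|x(t)\|_\infty$ is uniformly bounded on $[0,T)$.

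Now $\dot{x}(t) = q(t) - x(t)$ is also uniformly bounded in norm on $[0,T)$, so the curve $x$ is Lipschitz on $[0,T)$. Consequently, for any sequence $t_k \to T^-$, the sequence $x(t_k)$ is Cauchy in $\R^n$, and the limit $\lim_{t\to T^-} x(t)$ exists in $\R^n$ (it lies in $\overline{\Omega}=\R_{\geq 0}^n$, though we do not need this here).

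I do not expect any real obstacle: the whole argument rests entirely on having the uniform pointwise bound on $q$ from Corollary~\ref{cor:qbound}, which in turn rests on the key Lemma~\ref{lemma:key}. The companion Lemma~\ref{lemma:bap2} (the part establishing $x(T) \in \Omega$, needed by Theorem~\ref{theorem:bap_existence}) will be the genuinely hard step—but that is a separate statement, and for Lemma~\ref{lemma:bap1} as stated, mere existence of the limit, the argument above suffices.
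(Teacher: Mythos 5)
Your proposal is correct and follows essentially the same route as the paper: both use the uniform bound $\|q\|_\infty \le \mdet^2 n \|b\|_1$ from Corollary~\ref{cor:qbound} to bound $x(t)$ (the paper via the Gronwall lemma in Lemma~\ref{cor:xbound}, you via the equivalent variation-of-constants formula), then conclude that $\dot{x}=q-x$ is bounded, hence $x$ is Lipschitz on $[0,T)$ and the limit exists.
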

\begin{lemma}\label{lemma:bap2}
Suppose $x:[0,T)\to \Omega$ is a solution to~\eqref{eq:dynamics} for some $T\in \R_{>0}$. If $x(T)\defeq \lim_{t \to T^{-}} x(t)$ exists then $x(T)\in \Omega$, i.e. $x(T)$ is strictly positive.
\end{lemma}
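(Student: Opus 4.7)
The plan is to partition $[n]$ into $J \defeq \bigcup_{y\text{ feasible}} \supp(y)$ and $N \defeq [n]\setminus J$, and show by two different arguments that each coordinate $x_i(t)$ stays bounded away from $0$ on $[0,T)$. Since $\lim_{t\to T^-} x(t)$ exists by assumption, this forces $x(T)\in \Omega$. Throughout I fix one feasible $y$ with $\supp(y)=J$ (which exists since the feasible region is convex, so averaging feasibles $y^{(j)}$ with $y^{(j)}_j>0$, one per $j\in J$, yields such a $y$).

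For the $J$-coordinates I would use a barrier argument. First, Corollary~\ref{cor:qbound} gives the uniform bound $\norm{q(t)}_\infty \leq \mdet^2 n\norm{b}_1$; combined with $\dot{x}_i = q_i - x_i$ and Duhamel's formula, this yields a uniform upper bound $x_i(t)\leq C$ on $[0,T)$. Next I would analyse $B(t)\defeq \sum_{k\in J} c_k y_k \ln x_k(t)$. Using the identity $\dot{x}_k/x_k = (a_k^\top p - c_k)/c_k$ together with $Ay=b$,
\[
\dot{B}(t) \;=\; y^\top A^\top p(t) - c^\top y \;=\; b^\top L(t)^{-1}b - c^\top y \;\geq\; -c^\top y,
\]
where the inequality uses that $L(t)$ is positive semidefinite. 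Integrating, $B(t)\geq B(0)-c^\top y\cdot T$ on $[0,T)$. Combined with $x_k(t)\leq C$, isolating a single coordinate $j\in J$ yields
\[
c_j y_j \ln x_j(t) \;\geq\; B(0) - c^\top y\cdot T - \sum_{k\in J,\, k\neq j} c_k y_k \ln \max(C,1),
\]
and hence $x_j(t)\geq \delta>0$ uniformly on $[0,T)$ for every $j\in J$.

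For $i\in N$ the key observation is that $y_k=0$ for every $k\in N$, so $b=Ay=\sum_{k\in J}y_k a_k$. Writing $p(t)=L(t)^{-1}b$ and applying Lemma~\ref{lemma:key} coordinate-wise,
\[
|a_i^\top p(t)| \;\leq\; \sum_{k\in J} y_k \,\norm{A^\top L(t)^{-1}a_k}_\infty \;\leq\; \sum_{k\in J} \frac{y_k \mdet}{w_k(t)} \;=\; \sum_{k\in J} \frac{y_k \mdet\, c_k}{x_k(t)} \;\leq\; \frac{\mdet}{\delta}\sum_{k\in J} y_k c_k \;=:\; K,
\]
which is uniform on $[0,T)$. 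Then $\dot{x}_i/x_i = (a_i^\top p-c_i)/c_i \geq -(K+c_i)/c_i$, and Gronwall's inequality gives $x_i(t)\geq s_i \exp(-(K+c_i)T/c_i)>0$ throughout $[0,T)$, completing the proof.

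The main obstacle is the hybrid nature of the argument: Lemma~\ref{lemma:key} applied directly yields only $|a_i^\top p|\leq \mdet/w_i$, which is vacuous when $x_i\to 0$, and the barrier argument by itself says nothing about indices outside $\supp(y)$. The whole proof hinges on using feasibility of $y$ (which confines the expansion $b=\sum_k y_k a_k$ to $k\in J$) to convert the $J$-lower bounds obtained from the barrier into uniform bounds on $|a_i^\top p(t)|$ for the $N$-coordinates, which is exactly the synthesis the technical overview singles out.
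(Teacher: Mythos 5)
Your proof is correct and follows essentially the same route as the paper's: the barrier $\sum_k c_k y_k\ln x_k(t)$ with derivative $b^\top L^{-1}b - c^\top y\geq -c^\top y$ gives uniform positivity on $\supp(y)$, and expanding $b=\sum_k y_k a_k$ through Lemma~\ref{lemma:key} then bounds $\norm{A^\top p(t)}_\infty$ so that Gronwall finishes the job. The only cosmetic difference is that you construct a $y$ of maximal support and split $[n]$ into $J$ and $N$, whereas the paper applies the final Gronwall step to every coordinate at once (making the explicit partition, and the averaging construction of $y$, unnecessary).
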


\noindent Let us start by proving that every trajectory stays in a bounded region.
\begin{lemma}\label{cor:xbound}
Suppose $x:[0,T)\to \Omega$ is a solution to \eqref{eq:dynamics}, then for every $i\in [n]$ and $t\in [0,T)$:
$$x_i(t)\leq \max\inparen{x_i(0),\beta},$$
where $\beta = \mdet^2 \cdot n \cdot \norm{b}_1$.
\end{lemma}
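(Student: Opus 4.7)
\begin{proofsketch}
The plan is to combine the uniform bound $\|q(t)\|_\infty \leq \beta$ from Corollary~\ref{cor:qbound} with a standard Gr\"onwall-type argument applied coordinatewise to the scalar ODE $\dot{x}_i = q_i - x_i$.

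First, I would fix $i\in[n]$ and rewrite the $i$-th component of the Physarum ODE as $\dot{x}_i(t) + x_i(t) = q_i(t)$. Multiplying by the integrating factor $e^t$ yields $\frac{d}{dt}\bigl(e^t x_i(t)\bigr) = e^t q_i(t)$. Since Corollary~\ref{cor:qbound} gives $|q_i(t)| \leq \mdet^2 \cdot n \cdot \|b\|_1 = \beta$ uniformly over $t\in[0,T)$ (the corollary applies because $x(t)\in \Omega = \R_{>0}^n$ throughout the trajectory), integrating from $0$ to $t$ produces
\[
e^t x_i(t) - x_i(0) \;\leq\; \beta\,(e^t - 1),
\]
and hence
\[
x_i(t) \;\leq\; e^{-t} x_i(0) + (1-e^{-t})\beta.
\]
The right-hand side is a convex combination of $x_i(0)$ and $\beta$, so it is bounded above by $\max(x_i(0),\beta)$, which is exactly the claim.

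There is no real obstacle here: the only nontrivial ingredient is the uniform bound on $\|q\|_\infty$, which is already provided by Corollary~\ref{cor:qbound}. One minor point worth noting is that we are implicitly using $x(t) > 0$ for all $t \in [0,T)$, so that $q(t)$ is well-defined and the bound on $\|q\|_\infty$ applies at every point of the trajectory; this is guaranteed by the hypothesis that $x:[0,T)\to \Omega$ is a solution.
\end{proofsketch}
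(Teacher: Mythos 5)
Your proof is correct and follows essentially the same route as the paper: both invoke Corollary~\ref{cor:qbound} to get $q_i \leq \beta$ and then apply a Gr\"onwall-type argument to $\dot{x}_i \leq \beta - x_i$, arriving at the identical bound $x_i(t) \leq (1-e^{-t})\beta + e^{-t}x_i(0) \leq \max(x_i(0),\beta)$. The only cosmetic difference is that you unfold the integrating-factor computation explicitly where the paper cites the Gr\"onwall lemma.
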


\begin{proof}
Fix $i\in [n]$, by Corollary~\ref{cor:qbound} we have:
$$\dot{x}_i = q_i - x_i\leq \beta - x_i.$$
Applying Gronwall lemma to $y(t) = x_i(t) - \beta$ yields:
$$ x_i(t) \leq (1-e^{-t}) \beta + e^{-t} x_i(0)\leq \max\inparen{x_i(0),\beta}.$$
\end{proof}

\noindent We are now ready to prove Lemma~\ref{lemma:bap1}:
\begin{proofof}{of Lemma~\ref{lemma:bap1}}
Suppose $x:[0,T)\to \Omega$ is a solution to \eqref{eq:dynamics}. Fix $i\in [n]$, we show that $x_i(t)$ has a limit with $t\to T^{-}$. We know that $x_i$ is differentiable in the interval $[0,T)$ and $|\dot{x}_i|$ is bounded, since 
$$|\dot{x}_i(t)|=|q_i(t)-x_i(t)|\leq |q_i(t)|+|x_i(t)| \leq \beta + \max(x_i(0), \beta)$$
by Lemma~\ref{lemma:qbound} and Corollary~\ref{cor:xbound}. Hence $x_i(t)$ is a Lipschitz function in the interval $[0,T)$. This implies that $\lim_{t \to T^{-}} x_i(t)$ exists.
\end{proofof}

Now we are left with the task of proving that no coordinate of $x$ will approach $0$ as $t\to T^{-}$. This task simplifies significantly (as implied by the next lemma) if we assume that there is a strictly feasible solution to~\eqref{lp}, i.e. $y\in \R^n$ such that $Ay=b$ and $y>0$. Nevertheless, we do not want to make any such assumptions to make our proof valid in full generality. We proceed by showing that positivity holds for every $i\in \supp(y)$, for any feasible $y$, here $\supp(y)$ is the support of the vector $y$, i.e. the set $\{j\in [n]:y_j>0\}$.

\begin{lemma}\label{lemma:supp_pos}
Suppose $x:[0,T) \to \Omega$ is a solution to \eqref{eq:dynamics} and $y$ is any feasible solution to \eqref{lp}. If $x(T) = \lim_{t\to T^{-}} x(t)$ then $x_i(T)>0$ for every $i\in \supp(y)$.
\end{lemma}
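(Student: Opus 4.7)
The plan is to use a potential-function argument with the barrier $\Phi(t) \defeq \sum_{i=1}^n c_i y_i \ln x_i(t)$, as hinted in the technical overview. The motivation is that the Physarum dynamics expressed logarithmically reads $\tfrac{d}{dt}\ln x_i = \tfrac{1}{c_i}(a_i^\top p - c_i)$, so the coefficients $c_i y_i$ are chosen precisely so that the resulting derivative collapses to something simple involving $Ay=b$.

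First I would compute $\dot{\Phi}(t)$. Using $\dot{x}_i/x_i = (a_i^\top p - c_i)/c_i$, we get
\[
\dot{\Phi}(t) \;=\; \sum_{i=1}^n y_i\bigl(a_i^\top p(t) - c_i\bigr) \;=\; (Ay)^\top p(t) - c^\top y \;=\; b^\top p(t) - c^\top y.
\]
Second, I would observe that $b^\top p(t) \geq 0$ for every $t \in [0,T)$. This is because $p(t)$ solves $L(t) p(t) = b$ with $L(t) = A W(t) A^\top \succeq 0$, hence $b^\top p = p^\top L p = \norm{W^{1/2} A^\top p}_2^2 \geq 0$. Consequently $\dot{\Phi}(t) \geq -c^\top y$, so integrating yields $\Phi(t) \geq \Phi(0) - t\,c^\top y$ for all $t \in [0,T)$, i.e.\ $\Phi$ is bounded below on $[0,T)$ by a constant $\Phi_{\min}$ depending only on $s$, $y$, $c$ and $T$.

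Third, I would invoke Lemma~\ref{cor:xbound} to obtain a uniform upper bound $x_j(t) \leq M_j \defeq \max(x_j(0), \beta)$ for every $j \in [n]$ and $t \in [0,T)$. Fixing $i \in \supp(y)$ and separating the $i$-th term of $\Phi$ from the rest, we have
\[
c_i y_i \ln x_i(t) \;=\; \Phi(t) - \sum_{j \neq i} c_j y_j \ln x_j(t) \;\geq\; \Phi_{\min} - \sum_{j\neq i,\; y_j>0} c_j y_j \ln M_j.
\]
Since $c_i y_i > 0$, this gives a uniform lower bound $x_i(t) \geq \delta_i > 0$ on $[0,T)$, and passing to the limit yields $x_i(T) \geq \delta_i > 0$, as required.

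I do not expect any serious obstacle in this argument: the only subtle point is selecting the correct weighted potential so that the derivative telescopes via $Ay = b$ and so that the quadratic form $b^\top p = p^\top L p$ has the right sign. The uniform upper bound on $x_j$ from Lemma~\ref{cor:xbound} is what allows us to transfer the lower bound on $\Phi$ into a pointwise lower bound on the chosen coordinate $x_i$; note that this argument only controls coordinates in $\supp(y)$ and, as emphasized in the overview, additional work (via Lemma~\ref{lemma:key}) is needed for coordinates outside the support of any feasible solution, but that lies beyond the statement at hand.
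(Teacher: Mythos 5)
Your proposal is correct and follows essentially the same route as the paper: the same barrier $\sum_i c_i y_i \ln x_i(t)$, the same derivative computation collapsing to $b^\top p(t) - c^\top y$ with $b^\top p = p^\top L p \ge 0$, and the same use of the uniform upper bound from Lemma~\ref{cor:xbound} to convert the lower bound on the barrier into a pointwise lower bound on $x_i$. The only cosmetic difference is that you integrate the derivative bound directly rather than phrasing it as a contradiction, which is if anything slightly cleaner.
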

\begin{proof}
Fix any feasible solution $y$ to \eqref{lp}. To justify the claim, we use the following ``barrier function'':
$$\barr(t) \defeq \sum_{j=1}^n y_j c_j \ln x_j(t).$$
$\barr(t)$ is clearly well defined on $[0,T)$. Suppose for the sake of contradiction that $\inf_{t \in [0,T)} x_{i}(t) = 0$ for some $i\in \supp(y)$. We know by Corollary~\ref{cor:xbound} that $x_j(t)$ are uniformly upper-bounded over $t\in [0,T)$. Hence we know that $y_jc_j \ln x_j(t) \leq M$ for all $j\in [n]$, $t\in [0,T)$ and some constant $M\in \R$. It follows that $\inf_{t \in [0,T)} \barr(t) = -\infty$. This implies that the derivative $\dot{\barr}(t)$ is unbounded from below, for $t\in [0,T)$.

However, let us compute this derivative:
\begin{equation}
\dot{\barr}(t) = \sum_{j=1}^n y_jc_j \frac{\dot{x_j}(t)}{x_j(t)}=\sum_{j=1}^ny_j c_j \frac{q_j(t)}{x_j(t)}-\sum_{j=1}^ny_j c_j=\sum_{j=1}^n y_j c_j \frac{q_j(t)}{x_j(t)}-c^\top y.
\end{equation}
We analyze the term $\sum_{j=1}^ny_j c_j \frac{q_j(t)}{x_j(t)}$. 
$$\sum_{j=1}^ny_j c_j \frac{q_j(t)}{x_j(t)} = \sum_{j=1}^ny_j c_j \frac{w_j a_j^\top p(t) }{x_j(t)}=\sum_{j=1}^ny_j a_j^\top p(t) =(Ay)^\top p(t) = b^\top p(t) .$$
Note that $b^\top p = b^\top L^{-1}b \geq 0$, hence the above sum is nonnegative. In consequence:
$$\dot{\barr}(t) \geq -c^\top y=\mbox{const.}$$
which is a contradiction, since this quantity was claimed to be unbounded from below.
\end{proof}

We have now all necessary tools to prove Lemma~\ref{lemma:bap2}.
\begin{proofof}{of Lemma \ref{lemma:bap2}}
Let $x:[0,T) \to \Omega$ be a solution to \eqref{eq:dynamics}. Fix $y$ to be any feasible solution to \eqref{lp}. Lemma~\ref{lemma:supp_pos} implies that there exists $\eps>0$ such that $\eps\cdot  y \leq x(t)$ for every $t\in [0,T)$. Indeed: for $i$ such that $y_i>0$ we have $\inf_{t\in [0,T)} x_i(t)>0$, so $\eps_i \cdot y_i \leq x_i(t)$ for some $\eps_i>0$ and all $t\in [0,T)$, then take $\eps=\min\{\eps_i: i\in \supp(y)\}$.

Fix any $t\in [0,T)$ and consider $w\in \R^n$ defined as $w_i = \frac{x_i(t)}{c_i}$. Let us bound the norm of $A^\top L^{-1} b$:

\begin{align*}
 \norm{A^\top L^{-1} b}_\infty 
&=  \norm{A^\top L^{-1} \inparen{\sum_{j=1}^n y_j a_j}}_\infty\\
&\leq \sum_{j=1}^n y_j \norm{A^\top L^{-1} a_j}_\infty
 \stackrel{\mbox{Lemma \ref{lemma:key}}}{\leq} \sum_{j=1}^n y_j \frac{\alpha}{w_j} \\
&\leq \sum_{j=1}^n \frac{x_j(t)}{\eps} \frac{\alpha}{w_j} = \frac{\alpha}{\eps} \sum_{j=1}^n  c_j.
\end{align*}
Let us denote $M = \frac{\alpha}{\eps} \sum_{j=1}^n  c_j$. Pick now any $i\in [n]$, we have:
$$\dot{x}_i =q_i - x_i = \frac{x_i}{c_i} p_i - x_i \geq x_i \inparen{-\frac{M}{c_i}-1}.$$
Hence, from the Gronwall Lemma we obtain:
$$x_i(t) \geq x_i(0) e^{t\inparen{-\frac{M}{c_i}-1}}.$$
In particular $x_i(t)$ is lower bounded by a positive constant over the whole interval $[0,T)$. The lemma follows.
\end{proofof}

\noindent We can now conclude Theorem~\ref{thm:existence} easily.

\begin{proofof}{of Theorem~\ref{thm:existence}}
We use Theorem~\ref{theorem:bap_existence}. One needs to observe that the function $P(x)=q-x$ is in fact of class $\mathcal{C}^1$. Even more is true: $P(x)$ is a rational function in the region $\Omega$, this can be seen using the Cramer's rule. The remaining assumptions are satisfied by Lemmas~\ref{lemma:bap1}and \ref{lemma:bap2}.
\end{proofof}

\section{Asymptotic Behavior of Physarum Trajectories}\label{section:convergence}
In the previous section we have established the existence of solutions to the Physarum dynamics. We know that for every starting point $s>0$ there is a solution $x:[0,\infty) \to \R_{>0}^n$ to the Physarum dynamics with $x(0)=s$. We would like to study the behavior of $x(t)$ when $t\to \infty$. Let us fix the starting point $x(0)$ and pick $\maxx>0$ such that:
$$\maxx^{-1} \leq x_i(0)\leq \maxx \qquad \mbox{ for every }i\in [n].$$
Furthermore, pick $x^\star$ to be any optimal basic solution (i.e. $x^\star$ is a vertex of the feasible region) to~\eqref{lp}. Our first result will be that $c^\top x(t)$ approaches $c^\top x^\star$ -- the optimal value of the linear program~\eqref{lp}. Before we proceed with the proof let us establish some useful properties of the feasible region.

\subsection{The Feasible Region}
Let us consider $P=\{x:Ax=b,x\geq0\}$ -- the feasible region of~\eqref{lp}. Note that $P$ does not contain a line, so it can be expressed as the Minkowski sum $P=H+K$, where $H$ is the convex hull of vertices of $P$ and $K$ is a polyhedral cone $$K=\inbraces{r\in \R^n: Ar=0, r\geq 0}.$$ Let us denote by $V$ the set of vertices of $P$, so $H=\conv(V)$.
Further define $R$ to be the set of vertices of the polytope $$\inbraces{r:Ar=0,\; \;  \sum_{i=1}^n r_i = 1, r\geq 0}.$$ Then $K$ is the conic hull of $R$: 
$$K=\inbraces{\sum_{r\in R} \mu_r r: \mu_r \geq 0}.$$ Let us first establish the following bounds:
\begin{lemma}\label{lemma:vertices}
Let $V$ and $R$, be as defined above. 
\begin{enumerate}
\item Let $v\in V$ and $i\in [n]$ such that $v_i\neq 0$, then $\mdet^{-1} \leq \abs{v_i} \leq \mdet \cdot \norm{b}_1$.
\item Let $r\in R$ and $i\in [n]$ such that $r_i\neq 0$, then $\mdet^{-1} \leq \abs{r_i} \leq \mdet$.
\end{enumerate}
\end{lemma}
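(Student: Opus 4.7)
Both bounds come from the same recipe: use the vertex characterization from Remark~\ref{remark:vertices} to isolate a square invertible linear system defining the nonzero part of the vertex, apply Cramer's rule, and then exploit that $A$ (and $b$) are integer-valued so every nonzero subdeterminant is an integer of absolute value at least $1$ and at most $\mdet$.

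For part (1), fix a vertex $v\in V$ with support $N=\{i:v_i\neq 0\}$. Since $\rank(A)=m$, I would first extend $N$ if necessary by adjoining indices with $v_i=0$ so that $A_N$ becomes a square, $m\times m$, nonsingular submatrix of $A$, and $v_N$ is the unique solution to $A_N v_N=b$. Cramer's rule then gives $v_i=\det(A_N^{(i)})/\det(A_N)$, where $A_N^{(i)}$ is obtained from $A_N$ by replacing its $i$-th column with $b$. The denominator is a nonzero integer subdeterminant of $A$, so $1\le |\det(A_N)|\le \mdet$. Expanding the numerator along the replaced column gives $\det(A_N^{(i)})=\sum_{j=1}^m \pm b_j\cdot M_{j,i}$, where each $M_{j,i}$ is an $(m-1)\times(m-1)$ subdeterminant of $A$; hence $|\det(A_N^{(i)})|\le \norm{b}_1\cdot \mdet$, and if $v_i\neq 0$ then this numerator is also a nonzero integer and thus at least $1$ in absolute value. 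Combining these yields $\mdet^{-1}\le |v_i|\le \norm{b}_1\cdot \mdet$.

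For part (2), the key observation is that the vertices $r\in R$ are precisely normalizations (by $\norm{\cdot}_1$) of extreme rays of the cone $K=\inbraces{r:Ar=0,\,r\ge 0}$. An extreme ray is supported on a minimally linearly dependent column set $N$, so up to scaling there is a unique $\tilde r$ supported on $N$ with $A_N\tilde r_N=0$; its entries can be read off via a Cramer-style cofactor computation as $\tilde r_j=\pm\det(A_{N\setminus\{j\}})$, an integer subdeterminant of $A$. Hence every nonzero entry of $\tilde r$ is a nonzero integer of absolute value at most $\mdet$. Normalizing so that $\mathbf{1}^\top r = 1$ then delivers the upper bound $|r_i|\le \mdet$ immediately (since $\norm{\tilde r}_1\ge 1$), and the lower bound comes from bounding $\norm{\tilde r}_1$ above in terms of $\mdet$ and using $|\tilde r_i|\ge 1$.

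The only real obstacle is squeezing out the exact constants in (2): a straightforward application of the bound $\norm{\tilde r}_1\le |N|\cdot \mdet$ gives $|r_i|\ge ((m{+}1)\mdet)^{-1}$ rather than the stated $\mdet^{-1}$, so one has to either work with the augmented matrix $\binom{A_N}{\mathbf 1_N^\top}$ directly and expand $\det$ along the all-ones row, or absorb lower-order factors into $\mdet$ (which is harmless for all subsequent applications, which only use $\mdet$ in polynomial-size bounds). Beyond this bookkeeping, both inequalities are elementary determinant estimates driven by Remark~\ref{remark:vertices} and the integrality of $A$ and $b$; no further structural input is needed.
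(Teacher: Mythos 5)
Your proof is correct and follows exactly the route the paper intends: the paper's own proof is a single sentence citing Cramer's rule and the vertex characterization of Remark~\ref{remark:vertices}, so your write-up simply supplies the details it omits (extending the support to a nonsingular $m\times m$ basis, expanding the numerator along the column containing $b$, and using integrality of $A$ and $b$ for the lower bounds). Your observation about part (2) is a fair catch: since the normalizing row $\mathbf 1^\top$ is not a row of $A$, the Cramer denominator for a vertex of $\inbraces{r:Ar=0,\ \mathbf 1^\top r=1,\ r\ge 0}$ is only bounded by $|N|\cdot\mdet$ rather than $\mdet$, so the stated lower bound $|r_i|\ge\mdet^{-1}$ should carry an extra factor of order $n$ (or $\mdet$ should be read as the maximal subdeterminant of the augmented constraint matrix); this is harmless downstream, where the bound enters only polynomially into the constants $Q$ and $R$ of Theorem~\ref{thm:exp}.
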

\begin{proof}
This follows immediately from the Cramer's rule and the characterization of vertices of polyhedra (see Remark~\ref{remark:vertices}).
\end{proof}

Let $x\in P$, since $P=H+K$, $x$ can be expressed as:
$$x=\sum_{v\in V} \lambda_v v + \sum_{r\in R} \mu_r r$$
with $\lambda_v, \mu_r\geq 0$ for $v\in V, r\in R$ and $\sum_{v\in V} \lambda_v=1$. Note that the sets $V$ and $R$ might be very large (of size exponential in $n$). However, by Caratheodory's theorem we may pick the vectors of coefficients $\lambda, \mu$ so that $|\supp(\lambda)|\leq n+1$ and $|\supp(\mu)| \leq n+1$.

The last preliminary fact we would like to state is the following proposition on sensitivity analysis of linear programs. It is very useful to estimate the distance from a polyhedron to a point which satisfies its constraints with some additive error.
\begin{proposition}[Sensitivity analysis]\label{prop:sensitivity}
Suppose that $B\in \Z^{M \times N}$, $g\in \R^N$ and $b', b''\in \R^M$ are such that both linear programs: $\min\{g^\top x: Bx\leq b'\}$ and $\min\{g^\top x: Bx\leq b''\}$ have finite optimal values, then:
$$\abs{\min\{g^\top x: Bx\leq b'\} - \min\{g^\top x: Bx\leq b''\}} \leq N  \mdet_B \norm{g}_1 \cdot \norm{b'-b''}_\infty.$$
Where $\mdet_B = \max\{|\det(B')|: B' \textnormal{ a square submatrix of }B\}$.
\end{proposition}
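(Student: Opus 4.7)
The plan is to reduce everything to LP duality so that the dependence on the right-hand side becomes linear and easy to bound. First I would form the LP dual of $\min\{g^\top x : Bx\leq b\}$, which by standard Lagrangian duality is
$$\max\{-b^\top y : B^\top y = -g,\ y\geq 0\}.$$
The crucial observation is that the feasible set $\mathcal{Y}\defeq \{y\geq 0 : B^\top y = -g\}$ does \emph{not} depend on $b$; only the objective does. By assumption both primals have finite optimal values, so by strong duality both duals are feasible and attain their optima at basic feasible solutions of $\mathcal{Y}$.

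Denote $\mathrm{OPT}(b)\defeq \min\{g^\top x : Bx\leq b\}$, and let $y^\star$ be an optimal BFS of the dual with right-hand side $b'$, so $\mathrm{OPT}(b') = -b'^\top y^\star$. Since $\mathcal{Y}$ is independent of $b$, the point $y^\star$ is also dual-feasible for $b''$, yielding
$$\mathrm{OPT}(b'') \geq -b''^\top y^\star = \mathrm{OPT}(b') + (b'-b'')^\top y^\star,$$
so $\mathrm{OPT}(b') - \mathrm{OPT}(b'') \leq \|y^\star\|_1\cdot \|b''-b'\|_\infty$. Applying the same argument with the roles of $b'$ and $b''$ swapped gives the matching inequality, and hence the absolute-value bound, once $\|y^\star\|_1$ is controlled.

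It remains to show $\|y^\star\|_1 \leq N \mdet_B \|g\|_1$ for any basic feasible solution of $\mathcal{Y}$. Since $B^\top y = -g$ is a system of $N$ equations, a BFS has support $S\subseteq [M]$ with $|S|\leq N$, and on $S$ the vector $y^\star$ is determined by the square invertible integer subsystem $(B^\top)_{\cdot,S}\, y_S^\star = -g$. By Cramer's rule each entry is $y_i^\star = \det(M_i)/\det((B^\top)_{\cdot,S})$, where $M_i$ is obtained by replacing the $i$-th column with $-g$. The denominator is a nonzero integer, so its absolute value is at least $1$; Laplace-expanding $\det(M_i)$ along the replaced column and noting that each resulting minor is (up to sign) the determinant of a square submatrix of $B$ (transposition preserves determinants), we get $|\det(M_i)|\leq \|g\|_1\cdot \mdet_B$. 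Summing the at most $N$ nonzero coordinates gives the claim, and substituting back concludes the proof.

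The only delicate step is the last one: making sure the right norm of $g$ appears (it is $\|g\|_1$ because Laplace expansion pairs each $|g_j|$ with a single minor of $B$), and that \emph{all} minors invoked are genuine square submatrices of $B$ so that $\mdet_B$ is the correct bound. Everything else is a routine application of strong duality and the BFS structure theorem for LP.
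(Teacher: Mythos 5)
The paper does not prove Proposition~\ref{prop:sensitivity} at all; it simply cites \cite{Schrijver86}. Your argument is a correct, self-contained proof and is essentially the standard one from that reference: pass to the dual $\max\{-b^\top y : B^\top y=-g,\ y\ge 0\}$, exploit that the dual feasible region is independent of the right-hand side, use one $b'$-optimal dual vertex as a feasible certificate for $b''$ (and vice versa), and bound the $\ell_1$-norm of a dual vertex by Cramer's rule. All the key steps check out: finiteness of the primal optima gives dual feasibility and strong duality; the dual feasible set lies in the nonnegative orthant, hence is pointed, so the dual optimum is attained at a basic feasible solution; and the H\"older step $(b''-b')^\top y^\star\le \norm{y^\star}_1\norm{b''-b'}_\infty$ is fine. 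One small point to tighten in the Cramer step: if $\rank(B)<N$, the subsystem $(B^\top)_{\cdot,S}\,y^\star_S=-g$ determined by the basis $S$ is $N\times|S|$ with $|S|=\rank(B)\le N$, hence not square; you should first select $|S|$ linearly independent rows (i.e.\ a maximal nonsingular square subsystem) and apply Cramer's rule to that, after which the Laplace expansion along the $-g$ column indeed pairs each $|g_j|$ (for the selected $j$) with a square minor of $B$, giving $|y^\star_i|\le \norm{g}_1\mdet_B$ and $\norm{y^\star}_1\le N\mdet_B\norm{g}_1$ as you claim. With that adjustment the proof is complete.
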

\noindent
For a proof of the above proposition we refer to~\cite{Schrijver86}.

\subsection{Convergence to the Optimal Value}
In this section we establish the following theorem.
\begin{theorem}\label{thm:exp}
Suppose that $x:[0,\infty) \to \Omega$ is any solution to the Physarum dynamics. Then, for some $R,\nu>0$ depending only on $A,b,c,x(0)$, we have:
$$\abs{c^\top x(t) - c^\top x^\star} \leq R \cdot e^{-\nu t}  .$$
Quantitatively, one can take $\nu=\mdet^{-3}$ and $R=\exp(8\mdet ^2 \cdot C_s\cdot \norm{b}_1)\cdot (n+\maxx)^2$.
\end{theorem}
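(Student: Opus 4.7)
The plan is to reduce Theorem~\ref{thm:exp} to showing that along the Physarum trajectory $x(t)$ one can track a genuinely feasible $y(t)\in P=\{x:Ax=b,x\ge 0\}$ whose cost approaches $\opt$ exponentially in $t$, and then transfer the estimate back to $x(t)$ at the cost of an $O(e^{-t})$ term. First I would observe that the feasibility gap obeys a closed linear ODE: from $\dot{x}=q-x$ and $Aq=AWA^\top(AWA^\top)^{-1}b=b$ one gets $\frac{d}{dt}(Ax-b)=-(Ax-b)$, hence $Ax(t)-b=e^{-t}(Ax(0)-b)$. Since $x(t)>0$ by Theorem~\ref{thm:existence}, $x(t)$ violates the defining constraints of $P$ by at most $e^{-t}\norm{Ax(0)-b}_\infty$ coordinate-wise, and Proposition~\ref{prop:sensitivity} applied to the linear program computing $\dist_\infty(x(t),P)$ yields a feasible $y(t)\in P$ with $\norm{y(t)-x(t)}_\infty\le Ne^{-t}$, where $N$ depends only on $A$ and $\norm{Ax(0)-b}_\infty$. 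Because $|c^\top x(t)-c^\top y(t)|\le C_s\cdot Ne^{-t}$ and $c^\top y(t)\ge\opt$, everything reduces to proving $c^\top y(t)-\opt\le R_0 e^{-\nu t}$.

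I would then invoke the Minkowski decomposition $P=\conv(V)+\mathrm{cone}(R)$ to write $y(t)=\sum_{v\in V}\lambda_v(t)\,v+\sum_{r\in R}\mu_r(t)\,r$ with $\lambda,\mu\ge 0$ and $\sum_v\lambda_v=1$. The identity
\[
c^\top y(t)-\opt=\sum_{v\in V}\lambda_v(t)\,(c^\top v-\opt)+\sum_{r\in R}\mu_r(t)\,c^\top r
\]
reduces the problem to exponential decay of $\lambda_v(t)$ for each non-optimal vertex $v$ and of $\mu_r(t)$ for each ray $r$ with $c^\top r>0$. The key tool is the family of barrier potentials $\barr_v(t)\defeq\sum_{i=1}^n c_iv_i\ln x_i(t)$ for $v\in V\cup R$, for which a direct computation along the dynamics gives $\dot\barr_v=(Av)^\top p(t)-c^\top v$. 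Using $Av=b$ for vertices and $Av=0$ for rays, subtracting the equation for $\barr_{x^\star}$ cancels the common ``energy'' term $b^\top p(t)$, producing the clean linear estimates
\[
\barr_v(t)-\barr_{x^\star}(t)=\bigl[\barr_v(0)-\barr_{x^\star}(0)\bigr]-(c^\top v-\opt)\,t,\qquad \barr_r(t)=\barr_r(0)-(c^\top r)\,t.
\]

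The hard part, and the main obstacle I anticipate, is converting these linearly decreasing potentials into coordinate-wise exponential decay of $x_i(t)$, and thence into a bound on $\lambda_v(t),\mu_r(t)$. The uniform upper bound $x_i(t)\le\max(x_i(0),\beta)$ from Corollary~\ref{cor:xbound} gives $\barr_{x^\star}(t)\le \opt\cdot\ln M$ for an explicit $M$; splitting the sum defining $\barr_v$ along the sign of $v_i-x^\star_i$ and absorbing the $v_i<x^\star_i$ contributions into this upper bound forces some coordinate $i^\ast\in\supp(v)$ to satisfy $\ln x_{i^\ast}(t)\le C-\nu t$, i.e.\ $x_{i^\ast}(t)\le C'e^{-\nu t}$ with $\nu$ proportional to $c^\top v-\opt$. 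Combining $y_{i^\ast}(t)\ge\lambda_v(t)\,v_{i^\ast}$, the bound $v_{i^\ast}\ge\mdet^{-1}$ from Lemma~\ref{lemma:vertices}, and $y_{i^\ast}(t)\le x_{i^\ast}(t)+Ne^{-t}$ yields $\lambda_v(t)\le R_0 e^{-\nu t}$; the argument for $\mu_r(t)$ is analogous but simpler because $\barr_r$ is already linear in $t$. Plugging the pieces back and tracking the dependence on $\mdet,C_s,\norm{b}_1,\maxx$ through Corollary~\ref{cor:xbound} and Lemma~\ref{lemma:vertices} produces the quantitative constants $\nu=\mdet^{-3}$ and $R=\exp(8\mdet^2 C_s\norm{b}_1)(n+\maxx)^2$ stated in the theorem. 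The most delicate book-keeping is arranging that a single rate $\nu$ works uniformly over the finitely many non-optimal $v\in V$ and rays $r\in R$ with $c^\top r>0$, and that the extraction of $i^\ast$ is insensitive to the (possibly $t$-dependent) Caratheodory choice of decomposition.
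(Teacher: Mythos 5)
Your proposal is correct and follows essentially the same route as the paper: exponential decay of the feasibility gap plus sensitivity analysis (Proposition~\ref{prop:sensitivity}) to obtain a feasible $y(t)$ within $O(e^{-t})$, the Minkowski/Caratheodory decomposition of $y(t)$ over vertices and rays, the barrier potentials $\sum_i c_i v_i \ln x_i(t)$ whose derivative along the flow is exactly $c^\top x^\star - c^\top v$ (resp.\ $-c^\top r$), and the uniform upper bound on $x(t)$ to force exponential decay of some coordinate in $\supp(v)$, hence of $\lambda_v(t)$ and $\mu_r(t)$. The only cosmetic difference is that you apply the sensitivity bound to $x(t)$ directly (positive but violating $Ax=b$ by $e^{-t}(Ax(0)-b)$) while the paper applies it to $z(t)=x(t)-x(0)e^{-t}$ (exactly satisfying $Az=b$ but possibly slightly negative); both are valid.
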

Let us begin by splitting the set $V$ into two subsets $V_O=\{v\in V: c^\top v = c^\top x^\star\}$ and $V_N= V\setminus V_O$. Then $V_O$ is basically the set of optimal vertices. The following bound will be very useful in the proof of Theorem~\ref{thm:exp}.
\begin{lemma}\label{lemma:cost_dist}
Let $v\in V_N$, then $c^\top v - c^\top x^\star \geq \mdet^{-2}$.
\end{lemma}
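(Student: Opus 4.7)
The plan is to exploit integrality of $A,b,c$ together with the Cramer-rule characterization of vertices (as recorded in Remark~\ref{remark:vertices}) to show that the cost of any vertex is a rational with denominator bounded by $\mdet$, so the gap between two distinct vertex costs is at least $\mdet^{-2}$.

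First I would write any vertex $v\in V$ in the basis form $v_N = A_N^{-1}b$ and $v_{[n]\setminus N}=0$, where $N=\supp(v)$ indexes an $m\times m$ invertible submatrix $A_N$ of $A$. By Cramer's rule, for $i\in N$ we have $v_i = \det(A_N^{(i)})/\det(A_N)$, where $A_N^{(i)}$ is obtained from $A_N$ by replacing the $i$-th column with $b$. Since $A$ and $b$ are integer, both numerator and denominator are integers and $|\det(A_N)|\leq \mdet$. Multiplying by $c_i\in\Z$ and summing,
\[
c^\top v \;=\; \frac{1}{\det(A_N)}\sum_{i\in N} c_i \det(A_N^{(i)}) \;=\; \frac{k_v}{\det(A_N)}
\]
for some $k_v\in\Z$; in particular $c^\top v\in \tfrac{1}{\det(A_N)}\Z$.

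Next I would apply the same observation to the optimal basic solution $x^\star$ with basis $M$, obtaining $c^\top x^\star = k_\star/\det(A_M)$ with $k_\star\in\Z$ and $|\det(A_M)|\leq \mdet$. Subtracting gives
\[
c^\top v - c^\top x^\star \;=\; \frac{k_v\det(A_M) - k_\star\det(A_N)}{\det(A_N)\det(A_M)}.
\]
The numerator is an integer; by the definition of $V_N$, it is nonzero (otherwise $v$ would be an optimal vertex, contradicting $v\in V_N$). Hence the absolute value of this difference is at least $1/|\det(A_N)\det(A_M)|\geq \mdet^{-2}$. Since $x^\star$ is optimal and $v$ is feasible, $c^\top v \geq c^\top x^\star$, so the difference is in fact $\geq \mdet^{-2}$, as claimed.

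There is no real obstacle here: the entire argument is a two-line application of Cramer's rule combined with the integer bound $|c^\top v - c^\top x^\star|\geq 1/(\text{product of denominators})$ whenever the two rationals are distinct. The only thing to be a little careful about is that the Cramer representation uses integrality of $b$ and $c$ (both given) and that $|\det(A_N)|,|\det(A_M)|\leq\mdet$ by the definition of $\mdet$.
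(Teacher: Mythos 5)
Your proof is correct and follows essentially the same route as the paper's: both express the vertex coordinates via Cramer's rule as rationals with denominator at most $\mdet$, note that $c^\top v$ and $c^\top x^\star$ are then rationals with denominators bounded by $\mdet$, and conclude that their (nonzero, nonnegative) difference is at least $\mdet^{-2}$. Your write-up is merely a more explicit version of the paper's argument, spelling out why the numerator of the difference is a nonzero integer.
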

\begin{proof}
Since $v$ is a vertex of $P$, by Cramer's rule it can be expressed as $v=(\frac{z_1}{d}, \frac{z_2}{d}, \ldots, \frac{z_n}{d})^\top$. Where $d,z_1, z_2, \ldots,z_n \in \Z$ and $1\leq d\leq \mdet$. Hence $c^\top v$ is a number of the form $\frac{z}{d}$ for some integer $z$. Similarly $c^\top x^\star = \frac{z'}{d'}$ with $z',d'\in \Z$ and $1\leq d' \leq \mdet$. We get:
$$c^\top v - c^\top x^\star = \frac{d'z - z'd}{dd'}\geq \frac{1}{\mdet^2}.$$
\end{proof}

The next lemma says that $x(t)$ becomes exponentially close to the feasible region $P$ as $t\to \infty$.
\begin{lemma}\label{lemma:feas}
For every $t\geq 0$ there exists $y(t)\in P$ such that: $$\norm{x(t)-y(t)}_\infty<e^{-t}\cdot 3n^2 \cdot \mdet \cdot \maxx.$$
\end{lemma}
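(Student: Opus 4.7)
The plan is to exploit that along the dynamics the equality-constraint residual $Ax(t)-b$ decays at rate $e^{-t}$ while $x(t)$ stays strictly positive, and then to invoke a Hoffman-type sensitivity bound (Proposition~\ref{prop:sensitivity}) to turn this near-feasibility into a genuine $y(t)\in P$.

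First I would differentiate the residual along the flow. Since $Aq(t)=AW(t)A^\top L(t)^{-1}b=L(t)L(t)^{-1}b=b$, we have $\tfrac{d}{dt}(Ax-b)=A(q-x)=-(Ax-b)$, hence $Ax(t)-b=e^{-t}(Ax(0)-b)$ for every $t\geq 0$. Next, fix any feasible vertex $y_0\in V$ of $P$ (which exists by Remark~\ref{remark:vertices}) and define the auxiliary curve $z(t)\defeq x(t)+e^{-t}(y_0-x(0))$. A direct computation using this identity and $Ay_0=b$ shows $Az(t)=b$ for all $t$. Moreover, since $x(t)>0$ and $y_0\geq 0$, each coordinate satisfies $z_i(t)\geq -e^{-t}x_i(0)\geq -e^{-t}\maxx$; thus $z(t)$ lies exactly on the affine hyperplane $\{Ay=b\}$, violates $y\geq 0$ by at most $e^{-t}\maxx$, and satisfies $\norm{z(t)-x(t)}_\infty=e^{-t}\norm{y_0-x(0)}_\infty$, which is controlled by $\maxx$ together with the vertex bound of Lemma~\ref{lemma:vertices}.

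Finally, to obtain $y(t)\in P$ I would apply Proposition~\ref{prop:sensitivity} in the following form: comparing the coordinatewise LPs $\min\{\pm e_j^\top y : Ay=b,\ y\geq 0\}$ to their relaxations $\min\{\pm e_j^\top y : Ay=b,\ y\geq -e^{-t}\maxx\mathbf{1}\}$ (for which $z(t)$ is feasible) yields a point $y(t)\in P$ with $\norm{y(t)-z(t)}_\infty\leq O(n\mdet)\cdot e^{-t}\maxx$; the factor $n\mdet$ is the Hoffman constant of $\{Ay=b,\ y\geq 0\}$ when $A$ is integer with maximum subdeterminant $\mdet$. Combining the two displacements via the triangle inequality gives the stated bound.

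The main obstacle is tracking the constants precisely enough to land at $3n^2\mdet\maxx$: one needs to produce the Hoffman constant for $P$ in the form $O(n\mdet)$ from Proposition~\ref{prop:sensitivity}, and to manage $\norm{y_0}_\infty$ (which Lemma~\ref{lemma:vertices} bounds by $\mdet\norm{b}_1$) so that the final estimate depends only on $n,\mdet,\maxx$. Choosing the vertex $y_0$ adaptively (for instance a vertex minimising $\norm{x(0)-y_0}_\infty$), or re-using that $b=Ay_0$ together with the fact that entries of $A$ have magnitude at most $\mdet$ to rewrite $\norm{b}_1$ in terms of $n\mdet\maxx$, will likely be the technical crux.
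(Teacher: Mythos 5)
Your overall route is the paper's: show the equality residual decays like $e^{-t}$, build an auxiliary point on the affine subspace that violates $y\geq 0$ by at most $e^{-t}\maxx$, and convert this to a distance bound via Proposition~\ref{prop:sensitivity} plus the triangle inequality. The residual computation $Ax(t)-b=e^{-t}(Ax(0)-b)$ and the lower bound $z_i(t)\geq -e^{-t}x_i(0)$ are exactly right. The paper's auxiliary curve is $z(t)=x(t)-e^{-t}x(0)$ (obtained by integrating $\frac{d}{dt}(e^{t}x)=e^{t}q$ and using $Aq(s)=b$), with no vertex shift. Two points in your version do not go through as written.

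First, the constant. Shifting by $e^{-t}y_0$ costs you $e^{-t}\norm{y_0-x(0)}_\infty$ in the displacement $\norm{z(t)-x(t)}_\infty$, and $\norm{y_0}_\infty$ can only be controlled by $\mdet\cdot\norm{b}_1$ (Lemma~\ref{lemma:vertices}); that quantity is genuinely independent of $n,\mdet,\maxx$ --- scale $b$ (and hence every vertex of $P$) up while keeping $A$, $c$ and $x(0)$ fixed --- so no adaptive choice of $y_0$ and no rewriting of $\norm{b}_1$ can remove it. Your argument therefore proves a bound of the form $O\bigl(n^2\mdet(\maxx+\mdet\norm{b}_1)\bigr)e^{-t}$, not the stated $3n^2\mdet\maxx\, e^{-t}$. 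The paper avoids the vertex entirely and feeds the \emph{one-sided} violation $z(t)\geq -e^{-t}x(0)$ directly into the sensitivity bound, perturbing only the right-hand sides of the $y\geq 0$ constraints by $e^{-t}\maxx$. Second, the way you invoke Proposition~\ref{prop:sensitivity} is off: comparing the optimal values of the coordinatewise LPs $\min\{\pm e_j^\top y\}$ over $P$ and over the relaxation tells you the two polyhedra have nearby coordinatewise extrema, but the minimizers for different $j$ are different points, so this does not produce a single $y(t)\in P$ close to $z(t)$. You need to encode the distance itself as an LP --- the paper minimizes an auxiliary scalar $d$ subject to $Ay=b$, $\norm{y-z(t)}_\infty\leq d$ and $y\geq 0$ versus $y\geq -e^{-t}x(0)$, whose optimal values are exactly $\dist^\infty(z(t),P)$ and $0$ --- and apply the sensitivity bound to that pair.
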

\begin{proof}
Start with the Physarum dynamics $\dot{x} = q-x$ and multiply both sides by $e^t$. We get $\frac{d}{dt}(x(t) e^t) = e^t q(t)$. Take integrals of both sides to obtain:
$$x(t) = x(0)e^{-t}+(1-e^{-t})\int_0^t q(s) \frac{e^{s-t}}{1-e^{-t}}ds.$$
Observe that $z(t)=(1-e^{-t})\int_0^t q(s) \frac{e^{s-t}}{1-e^{-t}}ds$ is a convex combination of $q(s)$ (which satisfy $Aq(s)=b$ for every s), hence we conclude that $z(t)=x(t) - x(0)e^{-t}$ satisfies $Az(t)=b$. Note also that $z(t)\geq x(0)e^{-t}$, so $z(t)$ approaches $P$ with $t\to \infty$. We would like to estimate the distance from $z(t)$ to the closest point in $P$.
As a tool for that we would like to use Proposition~\ref{prop:sensitivity}. We will use variables $y\in \R^n$ and a new variable $d\in \R$. Let us write down two linear programs:

\begin{equation*}
\begin{aligned}[c]
\mbox{min. } \quad & d\\ 
\mbox{s.t.  }\quad & Ay=b\\ 
& \norm{y-z(t)}_\infty \leq d \\
& y\geq 0
\end{aligned}
\qquad \qquad \qquad
\begin{aligned}[c]
\mbox{min. } \quad & d\\ 
\mbox{s.t.  }\quad & Ay=b\\ 
& \norm{y-z(t)}_\infty \leq d \\
& y\geq -x(0)e^{-t}.
\end{aligned}
\end{equation*}

\noindent The minimum value of the first linear program is equal to $\dist^\infty(z(t),P)$: the minimum $\ell^{\infty}$--distance from $z(t)$ to some point $y\in P$. The second one has optimal value $0$, which is demonstrated by taking $y=z(t)$. To apply Proposition~\ref{prop:sensitivity} we need too transform the constraints into inequality form, but this is done in a standard manner. We obtain:
$$\dist^\infty(z(t),P) \leq (n+1)\cdot(2n)\cdot \mdet \cdot \norm{x(0) e^{-t}}_\infty.$$
In the above, $(2n)\cdot \mdet$ is the bound on $\mdet_B$ where $B$ is the matrix obtained after transforming the constraints into the inequality form. By taking $y(t)$ which attains this minimum distance, we obtain the desired point in $P$:
\begin{align*}
\norm{x(t) - y(t)}_\infty &\leq \norm{x(t)-z(t)}_\infty + \norm{z(t) - y(t)}_\infty\\
 &\leq \norm{x(0)e^{-t}}_\infty + (n+1)\cdot(2n)\cdot \mdet \cdot \norm{x(0)e^{-t}}_\infty\\
 &\leq 3n^2 \cdot \mdet \cdot \norm{x(0)e^{-t}}_\infty.
\end{align*}
\end{proof}
\noindent
By the discussion at the beginning of the section we know that for every $t$, $y(t)$ (from the above lemma) can be decomposed as:
\begin{equation}\label{eq:dec}
y(t) = \sum_{v\in V} \lambda_{v}(t) v + \sum_{r \in R}\mu_{r}(t) r
\end{equation}
for some $\lambda_v(t), \mu_r(t) \geq 0$ such that $\sum_{v\in V} \lambda_v(t) = 1$. Of course this decomposition is not unique. We choose it arbitrarily, but as noted previously it can be done in such a way that most of the coefficients are zeros, i.e. $|\supp(\lambda(t))|\leq n+1$ and $|\supp(\mu(t))|\leq n+1$. Our strategy is to show that $\mu_r(t) \to 0$ for all $r\in R$ and $\lambda_v(t) \to 0$ for all $v\in V_{N}$. This will imply that $y(t)$ (hence also $x(t)$) tends to the optimal region. Towards this let us prove a preparatory result.

\begin{lemma}\label{lemma:conv0}
The following bounds hold with some $Q, \nu>0$ depending only on $A,b,c,x(0)$:
\begin{enumerate}
\item For every $r\in R$: $\min\{x_i(t): i\in \supp(r)\}\leq Q \cdot \exp(-\nu t)$.
\item For every $v\in V_N$: $\min\{x_i(t): i\in \supp(v)\}\leq Q \cdot \exp(-\nu t)$.
\end{enumerate}
Quantitatively, one can take $\nu=\mdet^{-3}$ and  $Q = \exp\inparen{4\mdet^2 C_s \cdot \norm{b}_1}\cdot \inparen{n+\maxx}$.
\end{lemma}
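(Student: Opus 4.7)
The plan is to introduce, for any nonnegative $u\in\R^n$, the barrier function
\[
\barr_u(t) \defeq \sum_{i=1}^n c_i u_i \ln x_i(t),
\]
which is well defined on $[0,\infty)$ by Theorem~\ref{thm:existence}. Since $\dot{x}_i/x_i = a_i^\top p/c_i - 1$, a direct computation gives
\[
\dot{\barr}_u(t) \;=\; (Au)^\top p(t) - c^\top u,
\]
which specializes to $\dot{\barr}_v = b^\top p - c^\top v$ when $u=v\in V$ (so $Av=b$) and to $\dot{\barr}_r = -c^\top r$ when $u=r\in R$ (so $Ar=0$). The initial values $\barr_u(0)$ are controlled by $|\ln x_i(0)|\leq \ln\maxx$ together with Lemma~\ref{lemma:vertices}.

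The case $r\in R$ is immediate: by Lemma~\ref{lemma:vertices} each positive entry of $r$ is at least $\mdet^{-1}$, and since $c\in \Z^n_{>0}$ we get $c^\top r \geq \mdet^{-1}$; hence $\barr_r(t)=\barr_r(0)-t\,c^\top r$ decreases linearly. For $v\in V_N$ the slope $b^\top p(t) - c^\top v$ has a priori variable sign, so one must first bound $\int_0^t b^\top p(s)\,ds$. I would do this by playing $v$ off against an optimal vertex $v^\star\in V_O$: integrating $\dot{\barr}_{v^\star}= b^\top p - c^\top x^\star$ and using the uniform upper bound $x_i(t)\leq M$ from Lemma~\ref{cor:xbound} to obtain $\barr_{v^\star}(t)\leq (c^\top v^\star)\ln M$ yields
\[
\int_0^t b^\top p(s)\,ds \;\leq\; t\,c^\top x^\star + (c^\top v^\star)\ln M - \barr_{v^\star}(0).
\]
Substituting back into $\barr_v(t) = \barr_v(0) + \int_0^t b^\top p(s)\,ds - t\,c^\top v$ and invoking the integrality gap $c^\top v - c^\top x^\star \geq \mdet^{-2}$ from Lemma~\ref{lemma:cost_dist} gives $\barr_v(t)\leq C_0 - \mdet^{-2}\,t$ for an explicit constant $C_0$ polynomial in $\mdet$, $C_s$, $\norm{b}_1$ and $\ln\maxx$.

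Finally, to turn an exponentially-negative barrier into a coordinatewise bound, pick $i^\star\in\arg\min_{i\in\supp(u)}x_i(t)$; then $\ln x_i(t)\geq \ln x_{i^\star}(t)$ for every $i\in\supp(u)$, and since the weights $c_i u_i$ are nonnegative,
\[
(c^\top u)\,\ln x_{i^\star}(t) \;\leq\; \barr_u(t).
\]
Dividing by $c^\top u>0$ and exponentiating converts the linear decay of $\barr_u$ into exponential decay of $x_{i^\star}(t)$. Using $c^\top v\leq C_s\,\mdet\,\norm{b}_1$ (for $v\in V_N$) and $c^\top r\leq C_s\,\mdet$ (for $r\in R$) from Lemma~\ref{lemma:vertices}, both cases produce a rate $\nu=\Omega(\mdet^{-3})$ and a prefactor of the form $\exp(O(\mdet^2 C_s\norm{b}_1))(n+\maxx)$. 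I do not expect any conceptual obstacle; the only real work is careful bookkeeping of constants so that they collapse into the stated $\nu=\mdet^{-3}$ and $Q = \exp(4\mdet^2 C_s\,\norm{b}_1)(n+\maxx)$. The essential idea is that barriers built from feasible directions decay linearly because $b^\top p$ is pinned down by the optimal vertex, and a linearly decaying barrier forces at least one coordinate in its support to decay exponentially.
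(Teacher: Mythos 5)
Your proof is correct and follows essentially the same route as the paper: the paper differentiates the single potential $f(t)=\barr_v(t)-\barr_{x^\star}(t)$, whose derivative is exactly $c^\top x^\star - c^\top v\le -\mdet^{-2}$, which is precisely the cancellation of the $b^\top p$ term that you achieve by integrating $\barr_{x^\star}$ and $\barr_v$ separately and combining. The treatment of the ray case via $\dot{\barr}_r=-c^\top r$ and the extraction of the minimum coordinate over $\supp(u)$ from the linearly decaying barrier coincide with the paper's argument.
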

\begin{proof}
Let us start with the second part. Pick any $v\in V_N$ and $x^\star \in V_O$, consider the following potential function:
$$f(t) \defeq \sum_{i=1}^n c_i v_i \ln x_i(t) - \sum_{i=1}^n c_i x_i^\star \ln x_i(t)$$
and take its derivative:
\begin{align*}
\frac{d}{dt}f(t) &=  \sum_{i=1}^n c_i v_i \frac{\dot{x}_i(t)}{x_i(t)}- \sum_{i=1}^n c_i x_i^\star \frac{\dot{x}_i(t)}{x_i(t)}\\
& =  \sum_{i=1}^n c_i v_i \inparen{\frac{a_i^\top p(t)}{c_i}-1}- \sum_{i=1}^n c_i x_i^\star \inparen{\frac{a_i^\top p(t)}{c_i}-1}\\
& = \sum_{i=1}^n \inparen{v_i - x_i^\star}a_i^\top p(t) + \inparen{c^\top x^\star - c^\top v}.
\end{align*}
Observe that, since $Av=Ax^\star=b$, we have:
$$\sum_{i=1}^n \inparen{v_i - x_i^\star}a_i^\top p(t) = (v-x^\star)A^\top p(t) = (b^\top - b^\top )p(t) = 0.$$
\noindent
Hence, the derivate takes a particularly simple form:
$$\frac{d}{dt}f(t) = c^\top x^\star - c^\top v.$$
By Lemma~\ref{lemma:cost_dist} we know that $\frac{d}{dt}f(t) \leq - \eps$ for $\eps = \mdet^{-2}$. This implies the following bound on $f(t)$:
$$f(t) \leq f(0) - \eps \cdot t.$$
Hence $$\sum_{i=1}^n c_i v_i \ln x_i(t) \leq f(0) + \sum_{i=1}^n c_i x_i^\star \ln x_i(t) - \eps \cdot t.$$
\noindent
Let us use Corollary~\ref{cor:xbound} and Lemma~\ref{lemma:vertices} to bound:
$$\sum_{i=1}^n c_i x_i^\star \ln x_i(t) \leq \sum_{i=1}^n c_i \mdet \cdot \norm{b}_1 \ln (\max(\mdet^2 n \norm{b}_1, \maxx))\leq C_s \mdet \cdot \norm{b}_1\cdot \ln(\mdet^2n\norm{b}_1+\maxx).$$
Let us call $M$ the whole expression on the right-hand side. Similarly we can bound $f(0) \leq 2M$. We get:
$$\sum_{i=1}^n c_i v_i \ln x_i(t) \leq 3M - \eps t.$$
Suppose wlog that $x_1(t) = \min\{x_i(t): i\in \supp(v)\}$, then:
$$\ln x_1(t)\cdot \sum_{i\in \supp(v)} c_i v_i\leq  \sum_{i=1}^n c_i v_i \ln x_i(t) \leq 3M - \eps t.$$
It remains to bound $\sum_{i\in \supp(v)} c_i v_i \geq \mdet ^{-1}$ to get:
$$\min\{x_i(t): i\in \supp(v)\} \leq \exp(3M \mdet)\cdot \exp(-\mdet^{-3}t).$$
The proof of the first part is analogous, but uses the potential function $\sum_{i=1}^n c_i r_i \ln x_i(t)$.
\end{proof}

\noindent We are now ready to present a complete proof of the Theorem:
\begin{proofof}{of Theorem~\ref{thm:exp}}
Let us first pick $r\in R$, we will show that $\mu_r(t) \to 0$ exponentially fast. We have for every $i\in \supp(r)$:
$$\mu_r(t) r_i \leq y_i(t) \leq x_i(t)+M e^{-t}.$$
with $M$ as in Lemma~\ref{lemma:feas}. Hence, by Lemma~\ref{lemma:conv0}, there is some $i\in \supp(r)$ such that:
$$\mu_r(t) r_i\leq Q e^{-\nu t}+M e^{-t}.$$
By the bound $r_i\geq \mdet^{-1}$ from Lemma~\ref{lemma:vertices} we obtain:
$$\mu_r(t) \leq \mdet Q e^{-\nu t}+\mdet M e^{-t} \leq 2 \mdet Q e^{-\nu t}.$$
By a similar argument, we get for every $v\in V_N$:
$$\lambda_v(t) \leq 2 \mdet Q e^{-\nu t}.$$
Having this, let us consider the difference: $c^\top y(t) - c^\top x^\star$. We know that:
$$c^\top \inparen{\sum_{v\in V_O} \lambda_v(t) v}=  \sum_{v\in V_O} \lambda_v(t)c^\top v = (c^\top x^\star)\sum_{v\in V_O} \lambda_v(t).$$
Hence we get a bound:
$$c^\top y(t) - c^\top x^\star \leq c^\top \inparen{\sum_{v\in V_N} \lambda_v(t)v + \sum_{r\in R} \mu_r(t) r}.$$
Recall that $\lambda(t)$ and $\mu(t)$ were chosen in such a way that all but at most $n+1$ of their entries are zero. Hence:
$$\norm{\sum_{v\in V_N} \lambda_v(t)v}_\infty \leq \sum_{v\in V_N} \lambda_v(t) \norm{v}_\infty \leq (n+1)\cdot 2 \mdet Q e^{-\nu t} \cdot \mdet \cdot \norm{b}_1.$$
A similar bound holds for $\sum_{r\in R} \mu_r(t) r$. Altogether, we obtain the following bound:
$$c^\top y(t) - c^\top x^\star \leq 4(n+1)C_s \mdet^2 \norm{b}_1 Q e^{-\nu t}.$$

To conclude let us relate $c^\top y(t)$ to $c^\top x(t)$. We have, by Lemma~\ref{lemma:feas}:
$$\abs{c^\top y(t)-c^\top x(t)} \leq C_s \cdot Me^{-t}.$$
This finally yields the claimed bound:
$$\abs{c^\top x(t) - c^\top x^\star} \leq 4(n+1)C_s \mdet^2 \norm{b}_1 Q e^{-\nu t} + C_s \cdot Me^{-t}\leq Q^2 e^{- \nu t}.$$
\end{proofof}

\subsection{Convergence to a Limit}
By the previous subsection we know that for every solution $x:[0,\infty)\to \Omega$, $\abs{c^\top x(t) - c^\top x^\star} \to 0$ exponentially fast with $t\to \infty$. One can use this fact to prove that whenever~\eqref{lp} has a unique solution then $x(t)$ actually has a limit $x^\infty$ and $x^\infty$ is this unique optimal solution to~\eqref{lp}. However, if we do not assume uniqueness the fact that $x(t)$ has a limit is no more obvious. We are aiming to prove it in the current subsection. Let us now formally state the theorem.
\begin{theorem}\label{thm:limit}
Suppose $x:[0,\infty)\to \Omega$ is a solution to the Physarum dynamics~\eqref{eq:dynamics}. Then there exists a limit $x^\infty = \lim_{t\to \infty} x(t)$. Furthermore $x^\infty$ is an optimal solution to the linear program~\eqref{lp}. 
\end{theorem}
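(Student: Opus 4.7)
The plan is to show that the trajectory $x(t)$ has finite total variation on $[0,\infty)$, by establishing the exponential bound $\|\dot{x}(t)\| = O(e^{-\varepsilon t})$ for some $\varepsilon>0$. Since $\int_0^\infty \|\dot{x}(t)\|\,dt < \infty$ implies the existence of the limit $x^\infty = \lim_{t\to\infty} x(t)$, and Theorem~\ref{thm:exp} together with Lemma~\ref{lemma:feas} guarantees that $c^\top x^\infty = c^\top x^\star$ and $Ax^\infty = b$ (feasibility in the limit follows from $x(t) - x(0)e^{-t}$ lying on $\{z:Az=b\}$ and having a nonnegative limit), this will suffice to conclude both parts of the theorem.

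To bound $\|\dot{x}(t)\|$, recall $\dot{x}_i = \tfrac{x_i}{c_i}(a_i^\top p - c_i)$. First I would show, mimicking the argument in Lemma~\ref{lemma:bap2} and using Lemma~\ref{lemma:key}, that $\|A^\top p(t)\|_\infty$ is uniformly bounded by some constant $M$ over all $t\in[0,\infty)$; this uses the fact that by Lemma~\ref{lemma:supp_pos} and an arbitrary feasible $y$, we have $\varepsilon y \leq x(t)$ on any finite interval, combined with an asymptotic bound coming from Lemma~\ref{lemma:conv0} showing that the relevant coordinates of $x(t)$ are bounded away from $0$ for large $t$ as well. Partition $[n]$ into $N \defeq \{i : x_i(t) = O(e^{-\alpha t}) \text{ for some } \alpha>0\}$ and $J \defeq [n]\setminus N$. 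For $i\in N$, the boundedness of $a_i^\top p - c_i$ immediately gives $|\dot{x}_i(t)| = O(e^{-\alpha t})$, which handles the ``easy'' coordinates.

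The harder task is to bound $|\dot{x}_i(t)|$ for $i\in J$. The plan is to first identify $J$ as the union of supports of optimal solutions to~\eqref{lp}. One direction follows from Lemma~\ref{lemma:conv0}: for any nonoptimal vertex $v$ or any recession ray $r$, some coordinate in $\supp(v)$ or $\supp(r)$ decays exponentially, so those coordinates lie in $N$. Conversely, by combining Lemma~\ref{lemma:feas} with the exponential convergence of the nonoptimal weights $\lambda_v(t)$ and $\mu_r(t)$ established in the proof of Theorem~\ref{thm:exp}, any $i$ belonging to $\supp(x^\star)$ for some optimal $x^\star$ must satisfy $x_i(t)\geq \delta$ for some $\delta>0$ eventually. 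From this characterization I construct a single optimal solution $f$ with $\supp(f)=J$ (by averaging finitely many optimal vertices whose supports cover $J$) and deduce a uniform lower bound $\varepsilon f \leq x(t)$ on $[0,\infty)$.

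The main obstacle is then proving the key continuity-type estimate $|a_i^\top p(t) - c_i| = O(e^{-\varepsilon t})$ for $i\in J$, which is exactly the role of Lemma~\ref{lemma:continuity} mentioned in the overview. The intuition is that if $f$ is an optimal solution with $\supp(f)=J$, then by LP optimality and complementary slackness the dual constraint $a_i^\top p = c_i$ must hold for $i\in J$ at any dual optimum, and the Physarum ``dual'' $p(t) = L(t)^{-1}b$ should converge to such a dual optimum at the same exponential rate at which $c^\top x(t) - c^\top x^\star \to 0$. To make this rigorous I would analyze the potential $\sum_i c_i f_i \ln x_i(t)$: its derivative equals $b^\top p(t) - c^\top f$, which by LP duality and the uniform lower bound on $x(t)$ coordinates in $J$ can be shown to be $O(e^{-\varepsilon t})$; integrating and using that the potential is bounded above then pins down the deviation of $a_i^\top p(t)$ from $c_i$ for $i\in J$. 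Combining this with the uniform lower bound $x_i(t)\geq \varepsilon f_i$ and upper bound from Corollary~\ref{cor:xbound} yields $|\dot{x}_i(t)| = O(e^{-\varepsilon t})$ for $i\in J$, completing the integrability of $\|\dot{x}\|$ and hence the proof.
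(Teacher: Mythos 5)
Your overall architecture matches the paper's: reduce to $\int_0^\infty\norm{\dot{x}(t)}\,dt<\infty$, split $[n]$ into the optimal partition $J\cup N$, dispose of the $N$-coordinates via a uniform bound on $\norm{A^\top p(t)}_\infty$, and then establish $|a_i^\top p(t)-c_i|=O(e^{-\eps t})$ for $i\in J$. The gap is in the last, hardest step. You propose to obtain the coordinate-wise estimate by analyzing the potential $\sum_i c_i f_i\ln x_i(t)$, whose derivative is $b^\top p(t)-c^\top f=\sum_{i\in J}f_i\bigl(a_i^\top p(t)-c_i\bigr)$. Even if you could show this scalar is exponentially small pointwise (which itself does not follow from LP duality alone --- the terms $a_i^\top p - c_i$ have no fixed sign, and from the boundedness of the potential one only gets integral, not pointwise, control of its derivative), a single weighted sum being small cannot pin down the $|J|$ individual quantities $a_i^\top p(t)-c_i$: they can be large and cancel. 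This is precisely why the paper needs Lemma~\ref{lemma:continuity}, which is a linear-algebraic argument: write $(AWA^\top)p=b$ as $A_J W_J A_J^\top p + A_N W_N A_N^\top p = b$, use complementary slackness to identify $A_J^\top\bar{p}=c_J$ for a dual optimum $\bar{p}$, deduce $\norm{A_J\Wb A_J^\top(p-\bar{p})}=O(\eps)$, and then invert this using a lower bound on the smallest \emph{positive} eigenvalue of $A_J\Wb A_J^\top$ (restricted to the orthogonal complement of its kernel, which coincides with that of $A_J^\top$ because $y_J>0$). Some such inversion step, converting smallness of $K(p-\bar p)$ into smallness of $A_J^\top(p-\bar p)$, is unavoidable; the scalar potential cannot substitute for it.

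A secondary gap: you assert that exponential proximity of $x(t)$ to the optimal set (Lemma~\ref{lemma:close_opt}) yields a uniform lower bound $x_j(t)\geq\delta$ for $j\in J$. It does not by itself --- the nearby optimal point $x^\star(t)$ may sit near a face of the optimal set on which the $j$-th coordinate vanishes, so $x_j(t)$ could still drift to $0$ while staying close to $\conv(V_O)$. The paper's Lemma~\ref{lemma:xJ} closes this by observing that for any two optimal vertices $u,v\in V_O$ the potentials $f_v(t)=\sum_i c_i v_i\ln x_i(t)$ satisfy $\frac{d}{dt}(f_v-f_u)=0$, so all of them differ by constants along the trajectory and a lower bound on one (obtained from $\delta\cdot v(t)\leq x(t)$ for \emph{some} $v(t)\in V_O$ at each large $t$) propagates to all. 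You need this, or an equivalent device, before you can legitimately write $\eps f\leq x(t)$ uniformly in $t$.
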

In this subsection the norm symbol $\norm{\cdot}$ should be understood as the infinity norm $\norm{\cdot}_\infty$. Unlike in the previous proofs we will not keep track of constants and hide them under the big O notation. By a constant we mean any quantity which solely depends on $A,b,c,n,m$ and the fixed starting point $x(0)$ under consideration.

Let us start by giving a simple lemma which provides a sufficient condition for existence of a limit.
\begin{lemma}\label{lemma:limit}
Suppose $x:[0,\infty)\to \Omega$ is a differentiable function. If the integral:
$$\int_{0}^\infty \norm{\dot{x}(t)}dt$$
is finite, then there exists a limit $x^\infty = \lim_{t\to \infty} x(t)$.
\end{lemma}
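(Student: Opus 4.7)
The plan is to verify the Cauchy criterion at infinity: it suffices to show that for every $\eps>0$ there exists $T>0$ such that $\norm{x(t_2)-x(t_1)}<\eps$ whenever $t_1,t_2\geq T$. Once this is established, completeness of $\R^n$ delivers the desired limit $x^\infty$.

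The key identity is the fundamental theorem of calculus applied coordinatewise: for $t_1\leq t_2$,
$$x(t_2)-x(t_1)=\int_{t_1}^{t_2}\dot{x}(s)\,ds,$$
so by the triangle inequality for vector-valued integrals one immediately gets
$$\norm{x(t_2)-x(t_1)}\leq \int_{t_1}^{t_2}\norm{\dot{x}(s)}\,ds.$$
Now define $F(T)\defeq \int_0^T \norm{\dot{x}(s)}\,ds$. By hypothesis $F$ is nondecreasing and bounded above, hence it converges to a finite limit as $T\to\infty$. Consequently $F$ satisfies the Cauchy criterion: for every $\eps>0$ there exists $T$ such that $F(t_2)-F(t_1)<\eps$ for all $t_2\geq t_1\geq T$. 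Combining this with the previous inequality yields $\norm{x(t_2)-x(t_1)}<\eps$, which is exactly the Cauchy condition for $x$ at infinity.

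Since this is a standard consequence of the absolute integrability of $\dot{x}$, there is no genuine obstacle; the only minor point of care is that $\dot{x}$ takes values in $\R^n$ rather than $\R$, so one should note that the norm-integral inequality above follows either by applying the scalar version componentwise and then combining, or by a direct Riemann-sum argument in $\R^n$. With that remark in place, the proof is complete in a few lines.
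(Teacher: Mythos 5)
Your proof is correct and follows essentially the same route as the paper: the paper writes $x(t)=x(0)+\int_0^t\dot{x}(s)\,ds$ and appeals to absolute convergence of the improper integral, which is exactly the Cauchy-criterion argument you spell out explicitly. No issues.
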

\begin{proof}
Observe that $x(t) = x(0)+\int_{0}^t \dot{x}(s) ds$. By going with $t\to \infty$, we obtain:
$$\lim_{t\to \infty} x(t) = x(0)+\int_{0}^\infty \dot{x}(s)ds.$$
Since the integral on the right-hand side exists (because it is absolutely convergent by the assumption), we conclude existence of the limit.
\end{proof}
In our case $\dot{x}=q-x$. Our proof strategy is to show that $\norm{q(t)-x(t)}=O(e^{-\eps t})$ for some constant $\eps>0$, then the assumption of Lemma~\ref{lemma:limit} is satisfied.

Let us now review what do we know from the proof of convergence to the optimal value. The following fact can be extracted from the proof of Theorem~\ref{thm:exp}:
\begin{lemma}\label{lemma:close_opt}
For every $t\geq 0$ there exists $x^\star(t)$ -- an optimal solution to~\eqref{lp} such that $\norm{x(t) - x^\star(t)}=O(e^{-\eps t})$, for some $\eps>0$.
\end{lemma}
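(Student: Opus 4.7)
The plan is to reuse the decomposition machinery from the proof of Theorem~\ref{thm:exp} and show that, after throwing away the non-optimal vertices and all rays, what remains is already a near-convex combination of optimal vertices, hence close to an optimal point.

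First I would recall from the proof of Theorem~\ref{thm:exp} that we constructed $y(t)\in P$ with $\norm{x(t)-y(t)}_\infty = O(e^{-t})$, together with a decomposition
\[
y(t) \;=\; \sum_{v\in V_O} \lambda_v(t)\, v \;+\; \sum_{v\in V_N} \lambda_v(t)\, v \;+\; \sum_{r\in R} \mu_r(t)\, r,
\]
with $\lambda_v(t),\mu_r(t)\ge 0$ and $\sum_{v\in V}\lambda_v(t)=1$, for which the proof showed
\[
\sum_{v\in V_N} \lambda_v(t) \;=\; O(e^{-\nu t}) \qquad\text{and}\qquad \sum_{r\in R} \mu_r(t)\, \norm{r}_\infty \;=\; O(e^{-\nu t})
\]
for some constant $\nu>0$. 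Let $\Lambda(t)\defeq \sum_{v\in V_O}\lambda_v(t)=1-\sum_{v\in V_N}\lambda_v(t)$, so that $\Lambda(t)\to 1$ exponentially fast and in particular $\Lambda(t)\ge 1/2$ for all sufficiently large $t$.

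Then, for such large $t$, I would define
\[
x^\star(t) \;\defeq\; \frac{1}{\Lambda(t)}\sum_{v\in V_O} \lambda_v(t)\, v.
\]
Since $V_O$ is precisely the set of optimal vertices and the optimal face of~\eqref{lp} is their convex hull (no ray in $R$ can be optimal because $c>0$ and $r\ge 0$, $\sum r_i = 1$ force $c^\top r>0$), $x^\star(t)$ is an optimal solution. A direct computation gives
\[
y(t)-x^\star(t) \;=\; (\Lambda(t)-1)\,x^\star(t) \;+\; \sum_{v\in V_N}\lambda_v(t)\,v \;+\; \sum_{r\in R}\mu_r(t)\,r,
\]
and since $\norm{x^\star(t)}_\infty$ is bounded by $\mdet\cdot\norm{b}_1$ via Lemma~\ref{lemma:vertices}, each of the three summands is $O(e^{-\nu t})$. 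Combining with $\norm{x(t)-y(t)}_\infty=O(e^{-t})$ via the triangle inequality yields $\norm{x(t)-x^\star(t)}_\infty=O(e^{-\eps t})$ with $\eps=\min(1,\nu)$.

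For the remaining bounded interval of small $t$ (where $\Lambda(t)$ may be too close to zero), one simply picks any fixed optimal vertex as $x^\star(t)$; by Corollary~\ref{cor:xbound} and Lemma~\ref{lemma:vertices} both $x(t)$ and the optimal vertex are bounded, so the distance is $O(1)$, which fits the $O(e^{-\eps t})$ bound after enlarging the hidden constant. The main thing to watch out for is that the decomposition~\eqref{eq:dec} is not unique, so one must fix in advance the choice made in the proof of Theorem~\ref{thm:exp} (the sparse decomposition afforded by Carath\'eodory) and use the already-established exponential decay of its non-optimal coefficients; no further dynamical input is needed.
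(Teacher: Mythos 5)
Your proposal is correct and is essentially the argument the paper has in mind: the paper states Lemma~\ref{lemma:close_opt} as being ``extracted from the proof of Theorem~\ref{thm:exp}'', and your renormalization of the optimal-vertex part of the decomposition of $y(t)$, together with the already-established exponential decay of $\lambda_v(t)$ for $v\in V_N$ and of $\mu_r(t)$ for $r\in R$, is the natural way to carry out that extraction. The handling of the initial bounded interval by enlarging the constant is also fine.
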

\noindent
Note that the above lemma does not imply that $x(t)$ in fact has a limit. It could potentially happen that $x(t)$ oscillates over the optimal region without approaching a single point. To prove it we need to understand the behavior of Physarum with respect to the structure of the optimal set. 

From the general theory of linear programming (see e.g.~\cite{Wright97}) we know that $[n]$ can be decomposed into $[n]=J\cup N$ with $J\cap N=\emptyset$ such that every optimal solution to~\eqref{lp} is supported on $J$ and every feasible solution supported on $J$ is optimal. We will denote by $x_J$ the part of a vector $x$ corresponding to indices $j\in J$, similarly $x_N$. Let us now establish two important lemmas, which we can prove using techniques developed so far.
\begin{lemma}\label{lemma:xN}
If $N$ is the above define set of indices then $\norm{x_N(t)}=O(e^{-\eps t})$ for some $\eps>0$.
\end{lemma}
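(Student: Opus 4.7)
The plan is to derive this lemma as an immediate corollary of Lemma~\ref{lemma:close_opt} together with the very definition of the set $N$. By the LP-theoretic decomposition $[n]=J\cup N$ quoted just before the statement, \emph{every} optimal solution $x^\star$ of \eqref{lp} satisfies $\supp(x^\star)\subseteq J$, i.e.\ $x^\star_i=0$ for all $i\in N$. This is the only structural fact about $N$ that we need.

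Now apply Lemma~\ref{lemma:close_opt}: for each $t\geq 0$ there exists an optimal solution $x^\star(t)$ of \eqref{lp} with $\norm{x(t)-x^\star(t)}_\infty = O(e^{-\eps t})$ for some $\eps>0$ (depending only on $A,b,c,x(0)$). Since $x^\star(t)$ is optimal, $x^\star(t)_N=0$, and therefore for every $i\in N$
\[
|x_i(t)| \;=\; |x_i(t)-x^\star(t)_i| \;\leq\; \norm{x(t)-x^\star(t)}_\infty \;=\; O(e^{-\eps t}),
\]
which gives $\norm{x_N(t)}_\infty = O(e^{-\eps t})$, as claimed.

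There is essentially no obstacle in this argument; all of the work has already been done in Theorem~\ref{thm:exp} and its proof, from which Lemma~\ref{lemma:close_opt} is extracted. The only subtlety worth double-checking is that the optimal approximant $x^\star(t)$ supplied by that lemma is genuinely optimal (and not merely close to optimal): inspecting the proof of Theorem~\ref{thm:exp}, $x^\star(t)$ can be taken as a convex combination of vertices in $V_O$ (after renormalizing the coefficients $\lambda_v(t)$ for $v\in V_O$, whose sum tends to $1$), and convex combinations of optimal vertices are optimal since the optimal face of \eqref{lp} is convex. Once this is observed, the identity $x^\star(t)_N=0$ is automatic, and the lemma follows in a single line.
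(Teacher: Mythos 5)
Your proof is correct and is essentially identical to the paper's: both invoke Lemma~\ref{lemma:close_opt} and observe that $x^\star(t)_N=0$ since every optimal solution is supported on $J$. Your extra remark verifying that the approximant from Lemma~\ref{lemma:close_opt} is genuinely optimal is a reasonable sanity check but adds nothing beyond the paper's one-line argument.
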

\begin{proof}
It follows immediately from~\ref{lemma:close_opt}. For every $t$, $x^\star_N(t)=0$, because $x^\star(t)$ is optimal, hence the claim.
\end{proof}
\begin{lemma}\label{lemma:xJ}
There exists a constant $\eps>0$ such that $x_j(t)>\eps$ for every $j\in J$ and $t\in [0,\infty)$.
\end{lemma}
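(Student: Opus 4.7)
The plan is to construct an entropy-type barrier $U(t) = -\sum_{j\in J} c_j f_j \ln x_j(t)$ for a cleverly chosen optimal $f$, and show that $U(t)$ is uniformly bounded above on $[0,\infty)$. Since Corollary~\ref{cor:xbound} gives a uniform upper bound $x_j(t) \leq M$, each summand $-c_j f_j \ln x_j(t)$ is bounded below by the constant $-c_j f_j \ln M$; an upper bound on the total then forces each individual summand to be bounded above as well, yielding $x_j(t) \geq \eps_j > 0$ uniformly and hence the lemma.

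To choose $f$, I appeal to standard LP duality: take $f$ to be (for instance) the centroid of the finitely many vertices of the optimal face $\{x\in P : c^\top x = c^\top x^\star\}$. Every $j\in J$ lies in the support of at least one optimal vertex, so $\supp(f) = J$ and each $f_j > 0$ for $j\in J$, with $f_j$ bounded below by a constant depending only on $\mdet$ and $n$ via Lemma~\ref{lemma:vertices}. In particular $Af = b$ and $c^\top f = c^\top x^\star$.

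Differentiating $U$ using $\dot x_i/x_i = a_i^\top p/c_i - 1$ together with $Af = b$ gives $\dot U = c^\top f - b^\top p$. To integrate this into controllable quantities, I combine the identities $c^\top q = (Ax)^\top p = b^\top p + (Ax-b)^\top p$ and $c^\top \dot x = c^\top q - c^\top x$ to obtain
$$U(t) - U(0) \;=\; \int_0^t \bigl(c^\top f - c^\top x(s)\bigr)\,ds \;-\; \bigl(c^\top x(t) - c^\top x(0)\bigr) \;+\; \int_0^t (Ax(s)-b)^\top p(s)\,ds.$$
The first term is bounded in absolute value by $R/\nu$ using Theorem~\ref{thm:exp}, and the second is bounded using Corollary~\ref{cor:xbound}.

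The main obstacle is bounding $\int_0^t (Ax-b)^\top p(s)\,ds$ from above: naive pointwise bounds on $\|p(t)\|$ blow up as any coordinate of $x$ approaches the boundary of $\Omega$, so the infeasibility correction cannot be controlled termwise. The key is an energy-dissipation identity. Since $Aq = b$, we have $A\dot x = A(q-x) = -(Ax - b)$, so $(Ax-b)^\top p = -\dot x^\top A^\top p$; substituting $a_i^\top p = c_i(1 + \dot x_i/x_i)$ collapses this into
$$(Ax(t)-b)^\top p(t) \;=\; -\,c^\top \dot x(t) \;-\; \sum_{i=1}^n c_i\,\frac{\dot x_i(t)^2}{x_i(t)}.$$
The second summand is pointwise nonnegative, and the first integrates to $-(c^\top x(t) - c^\top x(0))$, which is bounded by Corollary~\ref{cor:xbound}. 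Hence $\int_0^t (Ax-b)^\top p\,ds \leq c^\top x(0) - c^\top x(t)$ is bounded above by a constant depending only on $A,b,c,x(0)$. Plugging this back gives $U(t) \leq K$ for an explicit constant $K$, and the lemma follows by extracting the lower bound on each $x_j(t)$ one index at a time as in the first paragraph.
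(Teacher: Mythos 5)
Your proof is correct, and it takes a genuinely different route from the paper's. The paper works with the whole family of potentials $f_v(t)=\sum_i c_i v_i\ln x_i(t)$ indexed by optimal vertices $v\in V_O$, observes that $\frac{d}{dt}(f_v-f_u)=0$ for $u,v\in V_O$ (so all of them differ by fixed constants), and then anchors the family from below via Lemma~\ref{lemma:close_opt}: since $x(t)$ is exponentially close in $\ell^\infty$ to $\conv(V_O)$, at each large $t$ some vertex $v(t)$ satisfies $\delta\, v(t)\le x(t)$, giving $f_{v(t)}(t)=\Omega(1)$ and hence a uniform lower bound on every $f_v$. That pairing trick makes the problematic $b^\top p$ term cancel outright, so the paper never integrates it. You instead take a single potential built from the centroid $f$ of the optimal face (which is legitimate: since $c>0$ the optimal face has trivial recession cone, so it equals $\conv(V_O)$ and $\supp(f)=J$ with $f_j$ bounded below via Lemma~\ref{lemma:vertices}), and you integrate its derivative $c^\top f-b^\top p$ directly. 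This forces you to control $\int_0^t b^\top p\,ds$, which you do through the identity $b^\top p=q^\top W^{-1}q=c^\top x+2c^\top\dot x+\sum_i c_i\dot x_i^2/x_i$ --- equivalently your dissipation formula for $(Ax-b)^\top p$ --- thereby correctly avoiding the circularity of invoking Lemma~\ref{lemma:pbound}, whose proof relies on the present lemma. What your route buys: it uses only convergence of the objective value (Theorem~\ref{thm:exp}) rather than the stronger $\ell^\infty$-closeness to the optimal set, and it exposes the mechanism explicitly (the entropy of the optimal face can decrease only by the total cost drop plus the accumulated suboptimality, both finite). What the paper's route buys: it dispenses with the infeasibility/energy bookkeeping altogether. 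Your identity is essentially the continuous-time version of Fact~\ref{fact:energy}, so it fits naturally into the paper's toolkit even though the paper does not use it here.
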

\begin{proof}
Recall that the set of optimal vertices is denoted by $V_O$. Note that for every $j\in J$, there is a vertex $v\in V_O$ such that $v_j = \Omega(1)$. Hence, it is enough to show that for every $v\in V_O$, the function:
$$f_v(t) = \sum_{i=1}^n c_i v_i \ln x_i(t)$$
is bounded from below by a universal constant (this is true because of the fact that $\norm{x(t)}$ is uniformly bounded, by Corollary~\ref{cor:xbound}). 
Note that, by a calculation analogous to that performed in the proof of~\ref{lemma:conv0}, we obtain:
$$\frac{d}{dt}(f_v(t) - f_u(t))=0$$
for every $v,u\in V_O$, which implies that all $f_v$ differ only by a constant. Hence either all of them are lower-bounded by a universal constant or none of them. 

By Lemma~\ref{lemma:close_opt} we know that for every $t$, $x(t)$ is exponentially close to some $x^\star(t)\in \conv(V_O)$. This implies in particular that for big enough $t$ and for some universal constant $\delta>0$ there always exists a $v(t)\in V_O$ such that $\delta \cdot v(t) \leq x(t)$. Hence:
$$f_{v(t)}(t) \geq \sum_{i=1}^n c_i v_i(t) \ln (\delta \cdot v_i(t)) = \Omega(1).$$
Note that the set $V_O$ is finite, hence we get the bound $\max_{v\in V_O} f_v(t)=\Omega(1)$, which finishes the proof.
\end{proof}

The last preparatory lemma, which can be deduced from previous considerations is the following:
\begin{lemma}\label{lemma:pbound}
Suppose $p(t)$ is the vector $L^{-1}b$, computed with respect to $x(t)$. Then $\norm{A^\top p(t)}$ is uniformly bounded over $t\in [0,\infty)$.
\end{lemma}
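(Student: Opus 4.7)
The plan is to bound $\norm{A^\top p(t)}_\infty$ by writing $b$ as a conic combination of columns of $A$ coming from an optimal solution supported in $J$, and then apply the key inequality from Lemma~\ref{lemma:key} together with the uniform lower bound on $x_j(t)$ for $j\in J$ established in Lemma~\ref{lemma:xJ}.

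Concretely, fix any optimal basic feasible solution $x^\star$ of the linear program~\eqref{lp}. Since every optimal solution is supported on $J$, we have $\supp(x^\star)\subseteq J$, and
\[
b \;=\; A x^\star \;=\; \sum_{j\in \supp(x^\star)} x^\star_j\, a_j .
\]
Applying $A^\top L(t)^{-1}$ to both sides yields
\[
A^\top p(t) \;=\; A^\top L(t)^{-1} b \;=\; \sum_{j\in \supp(x^\star)} x^\star_j\, A^\top L(t)^{-1} a_j .
\]
Taking the infinity norm and invoking Lemma~\ref{lemma:key} with $w_j(t)=x_j(t)/c_j$, we get
\[
\norm{A^\top p(t)}_\infty \;\le\; \sum_{j\in \supp(x^\star)} x^\star_j\cdot \frac{\mdet}{w_j(t)} \;=\; \sum_{j\in \supp(x^\star)} \frac{\mdet\, c_j\, x^\star_j}{x_j(t)} .
\]

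Now I use the structural input from the asymptotic analysis: by Lemma~\ref{lemma:xJ} there is a constant $\varepsilon>0$ such that $x_j(t)\ge \varepsilon$ for every $j\in J$ and every $t\in [0,\infty)$. Since $\supp(x^\star)\subseteq J$, this yields
\[
\norm{A^\top p(t)}_\infty \;\le\; \frac{\mdet}{\varepsilon}\sum_{j\in \supp(x^\star)} c_j\, x^\star_j \;\le\; \frac{\mdet\, C_s\, \norm{x^\star}_\infty}{\varepsilon},
\]
which is a constant independent of $t$, establishing the claim.

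I do not anticipate any real obstacle here: the whole argument is essentially an application of Lemma~\ref{lemma:key} (which gives the pointwise bound $\norm{A^\top L^{-1}a_j}_\infty\le \mdet/w_j$) combined with the already-proven uniform positivity of the $J$-coordinates of $x(t)$. The only subtle point worth flagging is that one must choose the feasible witness $y$ used to expand $b=Ay$ to be supported in $J$; an arbitrary feasible $y$ would not suffice, because coordinates $x_j(t)$ with $j\in N$ can decay to zero by Lemma~\ref{lemma:xN}, and then $y_j/x_j(t)$ would blow up. Picking $y=x^\star$ (optimal, hence supported in $J$) is exactly what makes the denominators $x_j(t)$ bounded away from zero uniformly in $t$.
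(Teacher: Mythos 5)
Your proof is correct and follows essentially the same route as the paper: the paper also expands $b=Av$ for an optimal vertex $v$ (supported in $J$), applies Lemma~\ref{lemma:key} termwise, and uses the uniform lower bound $\delta v\le x(t)$ coming from Lemma~\ref{lemma:xJ}. Your remark about why the witness must be supported in $J$ is exactly the right subtlety.
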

\begin{proof}
By a reasoning as in the proof of Lemma~\ref{lemma:xJ} we get that for some $v\in V$ and a universal constant $\delta>0$ we have $\delta\cdot v \leq x(t)$ for every $t\in [0,\infty)$. Hence we get:
$$\norm{A^\top p(t)} = \norm{A^\top L^{-1} b} = \norm{A^\top L^{-1}\inparen{\sum_{i=1}^n a_i v_i}}\leq \sum_{i=1}^n v_i \norm{A^\top L^{-1} a_i}.$$
By Lemma~\ref{lemma:key}, we have that $\norm{A^\top L^{-1} a_i} \leq \frac{\alpha c_i}{x_i}$, for some universal constant $\alpha$. Hence, we get:
$$\sum_{i=1}^n v_i \norm{A^\top L^{-1} a_i} \leq \sum_{i=1}^n v_i \frac{\alpha c_i}{x_i} \leq  \sum_{i=1}^n (\delta^{-1} x_i) \frac{\alpha c_i}{x_i} = \alpha \delta^{-1} \sum_{i=1}^n c_i =  O(1).$$
The lemma follows.
\end{proof}

We now present the main technical lemma required for proving existence of a limit. Intuitively it proves that the subvector of $A^\top p$ corresponding to $J$ behaves continuously assuming $x$ is close to the optimal set.

\begin{lemma}\label{lemma:continuity}
Suppose $\eps>0$ is small enough and $M, d>0$. Pick any $x>0$ such that $\norm{x_N}<\eps$ and $\norm{x_J - y_J}<\eps$ for some optimal solution $y$, such that $y_j>d$ for all $j\in J$. Assume $\norm{x}\leq M$ and $\norm{A^\top p}\leq M$. There exists a constant $N$ depending only on $A,b,c,d,n$ such that $\norm{q-x}<N\cdot \eps$.
\end{lemma}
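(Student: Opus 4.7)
The plan is to split $\norm{q-x}_\infty$ along the decomposition $[n]=J\cup N$ and handle the two pieces separately. For $i\in N$, the bound $|q_i-x_i|\le |q_i|+|x_i|$ together with $|x_i|<\eps$ and $|q_i|=|x_i\, a_i^\top p/c_i|\le \eps M/c_i$ gives $|q_i-x_i|=O(\eps)$ immediately, with implicit constants depending only on $c$ and $M$. For $i\in J$ I would rewrite $q_i-x_i=(x_i/c_i)(a_i^\top p-c_i)$; since $\norm{y}_\infty=O(1)$ (being controlled by $\norm{x}\le M$ plus $\eps$), the prefactor $x_i/c_i$ is $O(1)$, so the whole task reduces to proving
\[\max_{j\in J}\,|a_j^\top p-c_j|=O(\eps).\]

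To establish this, I would invoke the Goldman--Tucker strict complementarity theorem to produce a fixed dual optimal $\pi^\star$ (depending only on $A,b,c$) with $A_J^\top\pi^\star=c_J$ and $A_N^\top\pi^\star<c_N$ strictly; its existence follows from $J$ being precisely the union of supports of optimal primal solutions. Setting $u:=p-\pi^\star$, a direct calculation using $b=A_Jy_J$ and $Wc=x$ yields
\[Lu=Lp-L\pi^\star=A_J(y_J-x_J)-A_NW_NA_N^\top\pi^\star,\]
and both terms on the right are $O(\eps)$ in norm, since $\norm{x_J-y_J}_\infty,\norm{W_N}_\infty=O(\eps)$ while $\pi^\star$ is a constant.

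The main obstacle, and the heart of the proof, is upgrading $Lu=O(\eps)$ to $A_J^\top u=O(\eps)$. As $x\to y$ the matrix $L$ tends to $L_0:=A_JW_J^{(y)}A_J^\top$, which is singular whenever $A_J$ lacks full row rank, so in general $\|L^{-1}\|=\Theta(1/\eps)$ and the naive use of Lemma~\ref{lemma:key} only delivers $A_J^\top u=O(1)$. I would get around this with the orthogonal decomposition $\R^m=U\oplus V$, where $U=\mathrm{col}(A_J)$ and $V=U^\perp=\ker(A_J^\top)$, and write $u=u_U+u_V$. The crucial observation is $A_J^\top u_V=0$, so $A_J^\top u=A_J^\top u_U$ and it suffices to control the $U$-component of $u$.

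In block form with respect to $U\oplus V$, the matrix $L$ has diagonal block $L_{UU}=L_0|_U+O(\eps)$, where $L_0|_U\colon U\to U$ is invertible with uniformly bounded inverse (using $W_J^{(y)}\succeq (d/\max_i c_i)I$ and the smallest nonzero singular value of $A_J$, all constants depending only on $A,c,d$); the off-diagonal blocks $L_{UV},L_{VU}$ and the block $L_{VV}=P_VA_NW_NA_N^\top P_V$ are all $O(\eps)$. Nevertheless $L_{VV}$ is invertible on $V$ with $\|L_{VV}^{-1}\|=O(1/\eps)$, because $\mathrm{col}(A)=\R^m$ forces $P_VA_N$ to be surjective onto $V$. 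Writing $Lu=\alpha+\beta$ with $\alpha\in U$, $\beta\in V$ and $\norm{\alpha},\norm{\beta}=O(\eps)$, the Schur complement of $L_{VV}$ gives $(L_{UU}-L_{UV}L_{VV}^{-1}L_{VU})u_U=\alpha-L_{UV}L_{VV}^{-1}\beta$; both the perturbation $L_{UV}L_{VV}^{-1}L_{VU}$ of $L_{UU}$ and the right-hand side correction $L_{UV}L_{VV}^{-1}\beta$ are $O(\eps)\cdot O(1/\eps)\cdot O(\eps)=O(\eps)$, so a standard Neumann-series argument yields $u_U=O(\eps)$ (while $u_V=O(1)$, which is harmless). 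Therefore $A_J^\top u=A_J^\top u_U=O(\eps)$, which closes the reduction and proves the lemma.
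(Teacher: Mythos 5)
Your overall strategy --- reduce the $J$-block to bounding $\max_{j\in J}|a_j^\top p-c_j|$, introduce a dual optimal $\pi^\star$ with $A_J^\top\pi^\star=c_J$, and control the component of $u=p-\pi^\star$ lying in $U=\mathrm{col}(A_J)=\ker(A_J^\top)^\perp$ --- is sound, and it is essentially the paper's strategy: the paper's vector $v$ is exactly your $u_U$, and the paper likewise lower-bounds the relevant quadratic form on $U$ via $d'A_JA_J^\top\preceq A_J\Wb A_J^\top$. Your treatment of the $N$-block and the identity $Lu=A_J(y_J-x_J)-A_NW_NA_N^\top\pi^\star=O(\eps)$ are correct.

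The gap is in the Schur-complement step, specifically the claim $\norm{L_{VV}^{-1}}=O(1/\eps)$. Surjectivity of $P_VA_N$ onto $V$ gives invertibility of $L_{VV}=P_VA_NW_NA_N^\top P_V$ on $V$, but the smallest eigenvalue of this block is governed by $\min_{i\in N}x_i/c_i$, and the hypothesis supplies only the upper bound $\norm{x_N}<\eps$: there is no lower bound on the individual entries of $x_N$ (in the application to Theorem~\ref{thm:limit} the coordinates in $N$ decay at different exponential rates, so some can be far smaller than $\eps$). Hence $\norm{L_{VV}^{-1}}$ can be arbitrarily large compared with $1/\eps$, and the bookkeeping $O(\eps)\cdot O(1/\eps)\cdot O(\eps)$ does not show that the perturbation $L_{UV}L_{VV}^{-1}L_{VU}$ or the correction $L_{UV}L_{VV}^{-1}\beta$ is $O(\eps)$. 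The step can be repaired by using positive semidefiniteness instead of operator-norm products: with $B=A_NW_N^{1/2}$ one has $L_{UV}L_{VV}^{-1}L_{VU}=P_UB\,\Pi\,B^\top P_U\preceq P_UA_NW_NA_N^\top P_U=O(\eps)$, where $\Pi=B^\top P_V(P_VBB^\top P_V)^{-1}P_VB$ is an orthogonal projection, and since $\beta=-P_VB\,(W_N^{1/2}A_N^\top\pi^\star)$ one gets $L_{UV}L_{VV}^{-1}\beta=-P_UB\,\Pi\,(W_N^{1/2}A_N^\top\pi^\star)$, which is $O(\sqrt{\eps})\cdot O(\sqrt{\eps})=O(\eps)$. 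The paper's proof avoids the issue entirely: it moves the whole term $A_NW_NA_N^\top p$ (which is $O(\eps)$ because $\norm{A_N^\top p}\le M$ and $\norm{W_N}=O(\eps)$) to the right-hand side \emph{before} inverting anything, replaces $W_J$ by $\Wb=\diag{C_J^{-1}y_J}$ at a further $O(\eps)$ cost, and then only needs the smallest positive eigenvalue of $K=A_J\Wb A_J^\top$ on $\ker(K)^\perp=U$; no inverse of a vanishing block ever appears.
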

\begin{proof}

We will denote the columns of $A$ corresponding to $J$ by $A_J$, similarly for $A_N$. We have $(AWA^\top )p = b$, which can be rewritten as:
$$A_JW_J A_J^\top p + A_N W_N A_N^\top p = b.$$ 
where $W_J=\diag{w_J}$ and $W_N=\diag{w_N}$.
Since $\norm{A_N^\top p}=O(1)$ and $\norm{A_NW_N}=\norm{A_N C^{-1} X_N}=O(\eps)$ we have $\norm{A_N W_N A_N^\top p}=O(\eps)$. Moreover we can write $x_J = y_J + \tau_J$, where $\norm{\tau_J}<\eps$. This gives a decomposition of $W_J$ into $\Wb=\diag{C_J^{-1} y_J}$ and $\Wt=\diag{C_J^{-1} \tau_J}$ such that $\norm{A_J \Wt A_J^\top p}=O(\eps)$. In consequence:
$$\norm{A_J \Wb A_J^\top p - b}\leq \norm{A_J W_J A_J^\top p - b}+\norm{\Wt A_J^\top p}=O(\eps).$$
Pick $\bar{p}$: any optimal solution to the dual of the linear program~\eqref{lp}. From complementary slackness we have $Y  (A^\top \bar{p} - c)=0$. In particular $A_J^\top \bar{p}=c_J$. We obtain:
$$A_J \Wb A_J^\top \bar{p} = A_J \Wb  c_J = A_J \diag{C_J^{-1} y_J }c_J = A_Jy_J = b .$$
Hence:
$$\norm{A_J \Wb A_J^\top (p- \bar{p})}=\norm{A_J \Wb A_J^\top p - A_J \Wb A_J^\top \bar{p}}=\norm{A_J \Wb A_J^\top p - b}=O(\eps).$$
We want to show that $\norm{A_J^\top (p- \bar{p})}=O( \eps)$. 

\noindent Denote $K=A_J \Wb A_J^\top$, it is a symmetric PSD matrix. Moreover the kernels of $A_J^\top$ and $K$ coincide, since $y_J>0$. 

\noindent Pick a vector $v\in \R^m$ so that $A_J^\top (p- \bar{p}) = A_J^\top v$ and $v$ is orthogonal to the kernel of $A_J^\top$. (In other words $v$ is the orthogonal projection of $p-\bar{p}$ onto the orthogonal complement of the kernel of $A_J^\top$.) 

\noindent Using the fact that $\frac{y_i}{c_i}>\frac{d}{c_i}\geq d'$ for $i\in J$ and some absolute positive constant $d'$, we get:
$$d' A_J  A_J^\top=\sum_{j\in J}d' a_j a_j^\top \preceq \sum_{j\in J}\frac{y_j}{c_j} a_j a_j^\top =A_J \Wb A_J^\top = K$$ 
where $\preceq$ denotes the PSD ordering. This implies in particular that $d'\lambda^+ \leq \lambda_K^+$, where $\lambda^+, \lambda_K^+$ are the smallest positive eigenvalues of $A_J  A_J^\top$ and $K$ respectively. 
Since $v$ is orthogonal to the kernel of $K$ and $K$ is PSD we have:
$$\norm{Kv}_2 \geq \lambda_K^+ \norm{v}_2 \geq d'\lambda^+ \norm{v}_2$$
Observe that $\lambda^+$ depends solely on $A,b,c$, hence:
$$\norm{v} \leq \norm{v}_2 \leq \frac{1}{d'\lambda^+} \norm{Kv}_2 = \frac{1}{d'\lambda^+} \norm{K(p-\bar{p})}_2=O(\eps).$$
In consequence $\norm{A^\top _J (p-\bar{p}) }= \norm{A^\top _J v}=O(\eps)$ and hence $\norm{\Wb A^\top _J (p-\bar{p}) }=O(\eps)$, because $\norm{y}=O(1)$ (the set of optimal solution is bounded). Finally:
$$\norm{W_J A_J^\top p - \Wb A_J^\top \bar{p}}\leq \norm{W_J A_J^\top p - \Wb A_J^\top p}+\norm{ \Wb A_J^\top p -  \Wb A_J^\top \bar{p}}=O(\eps)+O(\eps) = O(\eps).$$
But  $W_J A_J^\top p = q_J$ and $\Wb A_J^\top \bar{p}=y_J$, which means:
$$\norm{q_J - x_J} \leq \norm{q_J -y_J}  + \norm{y_J - x_J} =O(\eps)+O(\eps) = O(\eps).$$
To conclude, note that:
$$\norm{q_N-x_N}\leq \norm{x_N}+\norm{q_N}= \norm{x_N}+\norm{W_N A_N^\top p }= O(\eps)$$
and hence trivially: 
$$\norm{q-x} = \max(\norm{q_N - x_N}, \norm{q_J - x_J}) = O(\eps).$$
\end{proof}
\noindent
Let us now conclude Theorem~\ref{thm:limit} from Lemma~\ref{lemma:continuity}.
\begin{proofof}{of Theorem~\ref{thm:limit}}
By Lemma~\ref{lemma:limit}, it remains to show that $\norm{q(t)-x(t)} =O(e^{-\delta t})$ for some $\delta>0$. By Corollary~\ref{cor:xbound} and Lemma~\ref{lemma:pbound} we obtain a uniform bound $M$ on both $\norm{x(t)}$ and $\norm{A^\top p(t)}$ over $t\in [0,\infty)$. From Lemma~\ref{lemma:xJ} we obtain some $d>0 $ such that $x_j(t)>2d$ for $j\in J$ and $t\in [0,\infty)$. 

Let us now pick a large $t$. We take $x=x(t)$ and $y=x^\star(t)$ from Lemma~\ref{lemma:close_opt}. We know that $x_j>2d$ for $j\in J$, $y$ is exponentially close to $x$, hence $y_j>d$ for $j\in J$. Moreover, we can pick $\eps = O(e^{-\delta t})$ such that $\norm{x_N}<\eps$ (by Lemma~\ref{lemma:xN}) and $\norm{y-x}<\eps$. Hence the assumptions of Lemma~\ref{lemma:continuity} are satisfied and we may conclude that $\norm{q(t)-x(t)}<N\eps = O(e^{-\delta t})$. The theorem follows.
\end{proofof}

\section{Discretization}\label{section:discretization}
In this section we study the efficiency of the natural discretization of the Physarum dynamics. It is defined with respect to the step length $0<h<1$ and a starting point $x(0)>0$. In this section, we will always assume that the starting point was chosen from the feasible region, i.e. $Ax(0)=b$. The discrete dynamics is as follows:
\begin{equation}\label{eq:discrete}
x(k+1) = (1-h)x(k)+hq(k)
\end{equation}
with the usual meaning of $q(k)$. For the above dynamics to be well defined we need to make sure that $x(k)$ stays positive for all $k\geq 0$. Note that:
$$x_i(k+1) =x_i(k)\inparen{1-h\inparen{\frac{a_i^\top p(k)}{c_i} -1}}$$
hence we need to make sure that $\inparen{\frac{a_i^\top p(k)}{c_i} -1}<\frac{1}{h}$. Corollary~\ref{cor:pot_diff} tells us that:
$$\px \defeq \max\inbraces{\abs{\frac{a_i^\top p}{c_i}-1}:Ax=b, x\geq 0}\leq C_s \mdet+1.$$
This means that it is enough to set $h< \px^{-1}$ (and we do so in the following theorem).
\begin{theorem}\label{theorem:discrete_conv}
Let $0<\eps<\nfrac{1}{2}$, $h\leq \px^{-2} \cdot \nfrac{\eps}{6}$. Suppose we initialize the Physarum algorithm~\eqref{eq:discrete} with $x(0)$, s.t. $Ax(0)=b$ and $\maxx^{-1} \leq x_i(0)\leq \maxx$ for every $i\in [n]$ and some $\maxx \geq 1$. Assume additionally that $c^\top x(0) \leq M\cdot \opt$. Then after $k\defeq O\inparen{\frac{\ln M}{\eps^2 h^2} + \frac{\ln \maxx}{\eps h}}$ steps $x(k)$ is a feasible solution with: $\opt \leq c^\top x(k) \leq (1+\eps)\opt$.
\end{theorem}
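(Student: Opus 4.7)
Let me write $\delta_i(k) \defeq (a_i^\top p(k) - c_i)/c_i$, fix an optimal basic solution $x^\star$, and set $D(k)\defeq c^\top x(k)-\opt$, $\gamma_k\defeq b^\top p(k)-\opt$, and the barrier $\barr(k)\defeq \sum_i c_i x_i^\star \ln x_i(k)$. First I would verify that iterates remain well-defined. $Aq(k)=b$ together with the convex-combination form of~\eqref{eq:discrete} propagates $Ax(k)=b$, and positivity follows from $|h\delta_i(k)|\leq h\px\leq \eps/(6\px)<\tfrac12$; in particular each coordinate update reads $x_i(k+1)=x_i(k)(1+h\delta_i(k))$.

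Next I would establish two per-step estimates. A direct computation using $q_i=x_i(1+\delta_i)$ and $\sum_i c_i x_i\delta_i=b^\top p-c^\top x$ yields the cost identity
$$c^\top x(k)-c^\top x(k+1)=h\sum_i c_i x_i(k)\delta_i(k)^2=h(D(k)-\gamma_k)\geq 0,$$
so $c^\top x(\cdot)$ is non-increasing and lies in $[\opt,M\opt]$. For the barrier, applying $\ln(1+u)\geq u-u^2$ termwise (valid since $|h\delta_i|<1/2$) together with $\sum_i c_i x_i^\star\delta_i=\gamma_k$ and $\sum_i c_i x_i^\star\delta_i^2\leq \px^2\opt$ gives
$$\barr(k+1)-\barr(k)\geq h\gamma_k-h^2\px^2\opt.$$
Combining these, the potential $\phi(k)\defeq\barr(k)-c^\top x(k)$ satisfies $\phi(k+1)-\phi(k)\geq hD(k)-h^2\px^2\opt$, and the step-size choice $h\px^2\leq \eps/6$ turns this into $\phi(k+1)-\phi(k)\geq \tfrac{5}{6}h\eps\opt$ whenever $D(k)\geq \eps\opt$.

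To finish I would bound $\phi$ using $x_i(k)\leq c^\top x(k)/c_i\leq M\opt$ and $x_i(0)\geq \maxx^{-1}$ (so $\barr(k)\leq \opt\ln(M\opt)$ and $\barr(0)\geq -\opt\ln\maxx$) and perform a two-phase accounting of the iterations with $D(k)\geq \eps\opt$. In the large-gap phase $D(k)\geq\opt$ I would split each step via the dichotomy $\gamma_k\leq D(k)/2$ versus $\gamma_k>D(k)/2$: the first produces multiplicative decay $D(k+1)\leq (1-h/2)D(k)$, hence $O(\ln M/h)$ such steps suffice to reach $c^\top x\leq 2\opt$, while the second yields a barrier increase of $\Omega(h\opt)$ which is limited by the $\barr$ range, together contributing the $\ln M/(\eps^2 h^2)$ term; an analogous dichotomy in the refinement phase $\eps\opt\leq D(k)<\opt$ produces the $\ln\maxx/(\eps h)$ term. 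The main obstacle is precisely this conversion of the naive additive budget (where the monotone cost decrease would give $\phi(T)-\phi(0)=O(M\opt)$ and hence $O(M/(h\eps))$ iterations) into a logarithmic dependence on $M$: one must separate out the large-gap regime and exploit the affine relation $D(k+1)=(1-h)D(k)+h\gamma_k$, using the dichotomy above to trade cost decrease against barrier growth at every step.
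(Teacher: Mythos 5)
Your proposal is correct, and its per-step ingredients are the same as the paper's: the identity $c^\top x(k+1)=(1-h)c^\top x(k)+h\,b^\top p(k)$ (the paper's Fact~\ref{fact:energy} with $\ene(k)=b^\top p(k)$), the barrier $\barr(k)=\sum_i c_i x_i^\star\ln x_i(k)$ estimated via $\ln(1+u)\ge u-u^2$, and a dichotomy between ``cost decreases multiplicatively'' and ``barrier increases.'' Where you genuinely diverge is in the bookkeeping. The paper folds everything into the single potential $\phi(k)=4\ln\pot(k)-\frac{\eps h}{\opt}\barr(k)$ and shows a uniform drop of $h^2\eps^2/6$ whenever $\pot(k)>(1+\eps)\opt$ (Lemma~\ref{lemma:potential_drop}, via Facts~\ref{fact:big_gap} and~\ref{fact:small_gap}); the logarithm of the cost is precisely its device for converting the additive $M\cdot\opt$ budget into $\ln M$ — the obstacle you correctly identify. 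You instead keep the two quantities separate and exploit the affine recursion $D(k+1)=(1-h)D(k)+h\gamma_k$ for the \emph{gap}, so that steps with $\gamma_k\le D(k)/2$ contract $D$ by a factor $(1-h/2)$; this buys you a sharper $M$-dependence, $O(\ln M/(\eps h))$ rather than the paper's $O(\ln M/(\eps^2h^2))$, since a constant-factor reduction of the gap costs only $O(1/h)$ contraction steps instead of $O(1/(\eps h))$. Two details you should make explicit when writing this up: (i) the barrier can \emph{decrease} during contraction steps, but only by at most $2h\cdot\opt$ each (since $\gamma_k\ge-\opt$ because $b^\top p=p^\top Lp\ge 0$), and there are only $O(\ln(M/\eps)/h)$ such steps, so the barrier budget for the growth steps is still $O(\opt\cdot(\ln(M\opt)+\ln\maxx))$; and (ii) your upper bound $\barr(k)\le\opt\ln(M\opt)$ introduces an additive $\ln\opt$ that does not cancel the way it does in the paper's ratio $\ln\pot(0)-\ln\pot(k)\le\ln M$ — this is harmless (an instance-dependent constant absorbed by the $O(\cdot)$, just as the paper absorbs $\px$ into $h$), but worth a remark.
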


\noindent To prove Theorem~\ref{theorem:discrete_conv} we analyze the following potential function. It is analogous to that used in the paper~\cite{SV15}.
\begin{equation}\label{eq:phi_potential}
	\phi(k) \defeq4 \ln \pot(k) - \frac{\eps h }{\opt}\barr(k).
\end{equation}
The intuitive meaning of $\phi$ is as follows: we know that $\pot(k)$ is non-increasing with $k$, so $\ln \pot(k)$ is non-increasing as well, however we may not guarantee a big drop of $\pot(k)$ in every step, this is why we have the second term; $\barr(k)$ gets bigger at such steps. The crucial lemma we would like to show asserts that $\phi$ drops significantly at every step:

\begin{lemma}[Potential drop]\label{lemma:potential_drop}
For every k with $\pot(k)>(1+\eps)\opt$ we have $\Delta \phi(k) \leq -\frac{h^2\eps^2}{6}$.
\end{lemma}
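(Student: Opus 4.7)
The plan is to expand $\Delta \phi(k) = 4\Delta \ln \pot(k) - \frac{\eps h}{\opt}\Delta \barr(k)$ as a function of $b^\top p(k)$, $\pot(k)$ and $\opt$, and then split into two cases according to how close $b^\top p(k)$ is to $\pot(k)$. Since the starting point is feasible and $Aq = Lp = b$, the discrete update preserves feasibility: $Ax(k)=b$ for every $k$. Writing $\delta_i \defeq a_i^\top p(k)/c_i - 1$, the iteration reads $x_i(k+1) = x_i(k)(1+h\delta_i)$ with $|\delta_i|\leq \px$, and the choice $h\leq \eps/(6\px^2)$ ensures $|h\delta_i|\leq 1/2$. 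Two identities are immediate from $Ax=Ax^\star=b$: $\pot(k+1) = \pot(k) + h(b^\top p - \pot)$ and $\sum_i c_i x_i^\star \delta_i = b^\top p - \opt$.

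For the first piece, I would combine $\ln(1+z)\leq z$ with the Cauchy--Schwarz inequality
\[
(b^\top p)^2 = \Bigl(\sum_i x_i \, a_i^\top p\Bigr)^2 \leq \Bigl(\sum_i c_i x_i\Bigr)\Bigl(\sum_i \tfrac{x_i}{c_i}(a_i^\top p)^2\Bigr) = \pot \cdot p^\top L p = \pot \cdot b^\top p,
\]
which gives $0 \leq b^\top p \leq \pot$. Setting $r \defeq b^\top p/\pot \in [0,1]$, this yields $\Delta \ln \pot \leq h(r-1) \leq 0$. For the barrier, the bound $\ln(1+z)\geq z-z^2$ on $|z|\leq 1/2$ gives
\[
\Delta \barr \geq h(b^\top p-\opt) - h^2 \sum_i c_i x_i^\star \delta_i^2 \geq h(b^\top p-\opt) - h^2 \px^2\opt.
\]
Combining, with $s\defeq b^\top p/\opt$,
\[
\Delta \phi(k) \leq 4h(r-1) - \eps h^2(s-1) + \eps h^3 \px^2,
\]
and the last term is at most $\eps^2 h^2/6$ by our choice of $h$.

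The conclusion follows by splitting on $r$. In Case A, $r\leq 1-\eps/4$: then $4h(r-1)\leq -\eps h$, which dominates $-\eps h^2(s-1)\leq \eps h^2$ (since $s\geq 0$), and a short calculation shows $\Delta \phi \leq -\eps^2 h^2/6$ for $h$ small enough. In Case B, $r>1-\eps/4$: the hypothesis $\pot>(1+\eps)\opt$ forces $s = r\pot/\opt > (1-\eps/4)(1+\eps) \geq 1+\eps/2$ (valid for $\eps\leq 1/2$), hence $-\eps h^2(s-1) < -\eps^2 h^2/2$; combined with $4h(r-1)\leq 0$, we obtain $\Delta \phi \leq -\eps^2 h^2/3$.

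The main obstacle I anticipate is the Cauchy--Schwarz bound $b^\top p \leq \pot$ (equivalently, monotonicity of $\pot(k)$), which is what makes the $4\ln\pot$ term always non-positive and allows the case split to close. The intuition encoded in $\phi$ is a two-sided trade-off: if $\pot$ drops linearly in $h$ ($r$ far from $1$), the cost term alone provides the required decrease; if instead $\pot$ barely drops ($r$ close to $1$), then $\pot > (1+\eps)\opt$ forces $b^\top p > (1+\Omega(\eps))\opt$, so at leading order in $h$ the barrier grows by $\Omega(\eps h\opt)$ and the term $-\frac{\eps h}{\opt}\barr$ pushes $\phi$ down by $\Omega(\eps^2 h^2)$.
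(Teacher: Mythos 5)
Your proposal is correct and follows essentially the same route as the paper: the paper's quantity $\ene(k)=q^\top W^{-1}q$ equals your $b^\top p(k)$ (Fact~\ref{fact:energy}), and its two Facts~\ref{fact:big_gap} and~\ref{fact:small_gap} are exactly your Cases A and B, using the same logarithm bounds on the two terms of $\phi$. A minor plus of your write-up is the explicit Cauchy--Schwarz argument giving $b^\top p\leq \pot$, which the paper only asserts implicitly as monotonicity of $\pot(k)$.
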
 
\noindent Before proving the lemma we first conclude Theorem~\ref{theorem:discrete_conv} from it.

\begin{proofof}{of Theorem~\ref{theorem:discrete_conv}  assuming Lemma~\ref{lemma:potential_drop}}
We will first estimate the value of $\phi(0)$. Observe that
$$\ln(\pot(0)) = \ln (c^\top x(0)) \leq \ln (M\cdot \opt).$$
The value $\barr(0)$ can be bounded as follows:
$$\sum_{i=1}^n x_i^\star c_i \ln x_i(0) \geq x_i^\star c_i \ln \maxx^{-1} = \opt \cdot \ln \maxx^{-1}.$$
Hence we obtain:
$$\phi(0)\leq 4\ln (M\cdot \opt) + \eps h \ln \maxx.$$
\noindent
Let us now fix $k$ and provide a lower bound on the value of $\phi(k)$. We know that $\pot(k)\geq \opt$ for every $k$, for $\barr$ we have:
$$\barr(k) = \sum_{i=1}^n x_i^\star c_i \ln x_i(k) \leq \ln (\maxx) \sum_{i=1}^n x_i^\star c_i = \opt \cdot \ln (\maxx), $$
and thus:
$$\phi(k) \geq 4\ln (\opt) - \eps h \ln (\maxx).$$
Putting together the above estimates on $\phi(0), \phi(k)$ and Lemma~\ref{lemma:potential_drop}, we obtain that:
$$k \frac{h^2 \eps^2}{6}\leq  \phi(0)-\phi(k)\leq 4 \ln M + \eps h( \ln (\maxx)+\ln (\maxx)).$$
Simplifying:
$$k = O\inparen{\frac{\ln M}{\eps^2 h^2} + \frac{\ln \maxx}{\eps h}}.$$
\end{proofof}

We split the Lemma~\ref{lemma:potential_drop} into two cases and deal with them separately.  They are expressed in the following facts. We use below $\ene(k)$ to denote $q(k)^\top W^{-1} q(k)$.

\begin{fact}\label{fact:big_gap}
If $\frac{\ene(k)}{\pot(k)}<(1-\eps/3)$ then $\Delta \phi(k) \leq -\frac{h^2\eps^2}{6}$.
\end{fact}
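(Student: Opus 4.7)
The plan is to control $\Delta \phi(k) = \phi(k+1) - \phi(k)$ by bounding each of its two summands separately, then showing that in the ``big gap'' regime $\ene(k)/\pot(k) < 1 - \eps/3$ the first-order negative drop coming from $4\Delta\ln\pot$ dominates every higher-order positive contribution.

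First I would derive the clean recursion $\pot(k+1) = (1-h)\pot(k) + h\,\ene(k)$. Since $W = \diag(x_i/c_i)$ satisfies $c^\top W = x^\top$, and feasibility $Ax(k) = b$ is preserved by the update (as $Aq = AWA^\top p = Lp = b$), one obtains the key identity
\[
c^\top q \;=\; x^\top A^\top p \;=\; b^\top p \;=\; p^\top L p \;=\; q^\top W^{-1} q \;=\; \ene,
\]
which is in particular nonnegative since $L \succeq 0$. The recursion is then immediate from the update rule, and applying $\ln(1+y) \leq y$ together with the hypothesis yields
\[
4\,\Delta \ln \pot(k) \;\leq\; -4h\Bigl(1 - \tfrac{\ene(k)}{\pot(k)}\Bigr) \;<\; -\tfrac{4h\eps}{3}.
\]

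Second, I would handle $\Delta \barr(k) = \sum_i x_i^\star c_i \ln(1 + y_i)$ with $y_i = h(a_i^\top p/c_i - 1)$. The definition of $\px$ and the step-size bound $h \leq \px^{-2}\eps/6$ give $|y_i| \leq h\px \leq 1/2$, so I can use $\ln(1+y_i) \geq y_i - y_i^2$. Summing, and using $Ax^\star = b$ (hence $\sum_i x_i^\star a_i^\top p = \ene$) together with $c^\top x^\star = \opt$,
\[
\Delta \barr(k) \;\geq\; h(\ene - \opt) - h^2 \sum_i x_i^\star c_i (a_i^\top p/c_i - 1)^2 \;\geq\; h(\ene - \opt) - h^2 \px^2\,\opt.
\]
Multiplying by $-\eps h/\opt$, invoking the nonnegativity $\ene \geq 0$ (so $\opt - \ene \leq \opt$), and using $h\px^2 \leq \eps/6$ to control the second-order error,
\[
-\tfrac{\eps h}{\opt}\,\Delta \barr(k) \;\leq\; \tfrac{\eps h^2(\opt-\ene)}{\opt} + \eps h^3\px^2 \;\leq\; \eps h^2 + \tfrac{\eps^2 h^2}{6}.
\]

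Combining the two estimates,
\[
\Delta \phi(k) \;\leq\; -\tfrac{4h\eps}{3} + \eps h^2 + \tfrac{\eps^2 h^2}{6},
\]
and the desired inequality $\Delta \phi(k) \leq -\tfrac{h^2\eps^2}{6}$ reduces, after dividing by $h\eps$, to $h(1 + \eps/3) \leq 4/3$, which is comfortably satisfied since $h \leq \px^{-2}\eps/6 < 1$. The main subtlety worth flagging is that a priori $c^\top q$ need not lie above $\opt$ (because $q$ may have negative coordinates), so $\Delta \barr$ can be positive and does not itself help; what rescues the argument is the energy-form nonnegativity $\ene = p^\top L p \geq 0$, together with the carefully tuned step size that absorbs the second-order log-expansion error into the target $\eps^2 h^2/6$.
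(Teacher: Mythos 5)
Your argument is correct and follows essentially the same route as the paper: the same recursion $\pot(k+1)=(1-h)\pot(k)+h\,\ene(k)$ via the energy identity, the bound $4\,\Delta\ln\pot(k)\leq -\tfrac{4h\eps}{3}$ from $\ln(1+\alpha)\leq\alpha$, and a separate control of $\Delta\barr(k)$. The only (harmless) deviation is in the second half, where you use the sharper expansion $\ln(1+y)\geq y-y^2$ together with $\ene(k)\geq 0$, whereas the paper applies the cruder $\ln(1-\alpha)\geq-2\alpha$ to get $\Delta\barr(k)\geq -\opt$; both yield the claimed potential drop.
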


\begin{fact}\label{fact:small_gap}
If $\ene(k)>(1+\eps/3)\opt$ then $\Delta \phi(k) \leq -\frac{h^2\eps^2}{6}$.
\end{fact}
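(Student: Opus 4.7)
\begin{proofof}{sketch of Fact~\ref{fact:small_gap}}
The plan is to bound $\Delta \phi(k) = 4\inparen{\ln \pot(k+1) - \ln \pot(k)} - \frac{\eps h}{\opt}\Delta\barr(k)$ by separately showing that the first term is non-positive and the second term contributes a drop of at least $\frac{h^2\eps^2}{6}$.

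First I will exploit the fact that feasibility is preserved: if $Ax(k)=b$ then $Aq(k) = AWA^\top p(k) = Lp(k) = b$, so also $Ax(k+1)=b$. This immediately gives the identity $c^\top q(k) = x(k)^\top A^\top p(k) = b^\top p(k) = \ene(k)$, hence $\pot(k+1) = (1-h)\pot(k) + h\ene(k)$. A Cauchy--Schwarz estimate
$$\ene(k)^2 = (c^\top q)^2 = \inparen{\sum_i \sqrt{c_i x_i}\cdot q_i\sqrt{c_i/x_i}}^2 \leq \inparen{\sum_i c_i x_i}\inparen{\sum_i q_i^2 c_i/x_i} = \pot(k)\cdot \ene(k)$$
then yields $\ene(k)\leq \pot(k)$, so $\pot(k+1)\leq \pot(k)$ and $4\inparen{\ln\pot(k+1)-\ln\pot(k)}\leq 0$.

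Next I will analyze $\Delta\barr(k)$ via a Taylor expansion. Writing $x_i(k+1) = x_i(k)(1+t_i)$ with $t_i = h\inparen{a_i^\top p(k)/c_i - 1}$, the step length bound $h\leq \px^{-2}\eps/6$ together with $\eps<1/2$ and $\px\geq 1$ gives $|t_i|\leq h\px\leq 1/12$, so that $\ln(1+t_i)\geq t_i - t_i^2$. Summing with weights $x_i^\star c_i$, the linear term collapses using $Ax^\star=b$ and $c^\top x^\star=\opt$:
$$h\sum_i x_i^\star(a_i^\top p(k) - c_i) = h\inparen{b^\top p(k) - \opt} = h(\ene(k) - \opt),$$
while the quadratic error is bounded by $h^2\sum_i x_i^\star c_i (a_i^\top p/c_i - 1)^2 \leq h^2\px^2\opt$. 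Hence
$$\Delta\barr(k) \geq h(\ene(k)-\opt) - h^2\px^2\opt.$$

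Under the hypothesis $\ene(k) > (1+\eps/3)\opt$ and the choice $h\px^2 \leq \eps/6$, this gives $\Delta\barr(k) \geq h\opt(\eps/3 - \eps/6) = h\opt\eps/6$. Combining,
$$\Delta\phi(k) \leq 0 - \frac{\eps h}{\opt}\cdot \frac{h\opt\eps}{6} = -\frac{h^2\eps^2}{6}.$$
The main technical points to double-check are that feasibility of $x(k)$ is preserved under the iteration (so that the identity $c^\top q = \ene$ is available) and that $|t_i|$ is small enough for the second-order Taylor bound on $\ln(1+t_i)$; both are immediate from the hypotheses on $h$ and $\px$.
\end{proofof}
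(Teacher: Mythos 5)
Your proof is correct and follows essentially the same route as the paper: bound $\Delta\barr(k)$ from below via the second-order logarithm inequality, collapse the linear term to $h(\ene(k)-\opt)$ using $Ax^\star=b$, and absorb the quadratic error with $h\px^2\le\eps/6$. The only (welcome) addition is your Cauchy--Schwarz argument showing $\ene(k)\le\pot(k)$, which justifies the monotonicity of $\pot$ that the paper merely asserts.
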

\noindent
Before we proceed with the proof of facts, let us state three simple inequalities, which we are going to use:
\begin{align}
\label{eq:log_upper}
\ln(1+\alpha) &\leq \alpha  &\mbox{ for every }& \alpha \in \R \\
\label{eq:log_lower}
\ln(1+\alpha) & \geq \alpha - \alpha^2 &\mbox{ for every }& \alpha \in \insquare{-\frac{1}{2},\frac{1}{2}}\\
\label{eq:log_negative}
\ln(1-\alpha) & \geq -2\alpha  &\mbox{ for every }& \alpha \in \insquare{0,\frac{1}{2}}
\end{align}
All can be proved by simple calculus. Another useful fact is the following identity:
\begin{fact}\label{fact:energy}
Let $x\in \R_{>0}^n$ and $p=(AWA^\top)^{-1}b$ be given. Suppose $y\in \R^n$ satisfies $Ay=b$, then
$$\sum_{i=1}^n y_i a_i^{\top} p = q^\top W^{-1} q$$
where $q=WA^\top p$. 
\end{fact}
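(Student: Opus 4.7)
The plan is to show both sides of the claimed identity equal $b^\top p$, via two short linear-algebra manipulations.

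First I would rewrite the left-hand side by observing that $\sum_{i=1}^n y_i a_i^\top p = (Ay)^\top p$, since $Ay = \sum_i y_i a_i$. Using the hypothesis $Ay = b$, this immediately becomes $b^\top p$. Note that this step uses nothing about $y$ beyond feasibility; in particular no nonnegativity is required.

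Next I would handle the right-hand side. Substituting $q = WA^\top p$, we get $q^\top W^{-1} q = p^\top A W W^{-1} W A^\top p = p^\top (AWA^\top) p = p^\top L p$. By the definition $p = L^{-1} b$, i.e.\ $Lp = b$, this simplifies to $p^\top b = b^\top p$.

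Comparing the two computations gives $\sum_{i=1}^n y_i a_i^\top p = b^\top p = q^\top W^{-1} q$, which is the claim. There is no real obstacle here; the only thing to be careful about is that $L = AWA^\top$ is invertible on the relevant subspace so that $p$ is well-defined, but this is part of the standing setup (full row rank of $A$ combined with $x > 0$, so $W$ is positive definite). The proof is essentially two lines once one notices that both expressions reduce to the scalar $b^\top p$.
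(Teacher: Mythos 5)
Your proof is correct and is the standard two-line computation (both sides reduce to $b^\top p$ via $Ay=b$ and $Lp=b$); the paper in fact states this identity without proof, and your argument is exactly the one it implicitly relies on.
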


\begin{proofof}{of Fact~\ref{fact:big_gap}}
We will show that $\ln \pot$ drops and $\barr$ may increase only a little. Let us first look at $\ln \pot(k+1)-\ln \pot(k)$. We have:
$$\pot(k+1) = \sum_{i=1}^nc^\top x(k+1) = (1-h)c^\top x(k) + h\sum_{i=1}^n x(k)a_i^\top p(k) = (1-h)\pot(k)+h \ene(k).$$
The last inequality follows from~\ref{fact:energy}. Further:
\begin{align*}
\frac{\pot(k+1)}{\pot(k)}&= \inparen{1+h\inparen{\frac{\ene(k)}{\pot(k)}-1} }\\
&<  \inparen{1+h\inparen{\inparen{1-\frac{\eps}{3}}-1} }\\
&= 1-h\frac{\eps}{3} .
\end{align*}
We obtain:
$$\ln \pot(k+1)-\ln \pot(k)=\ln \frac{\pot(k+1)}{\pot(k)}\stackrel{ \eqref{eq:log_upper}}{\leq} -\frac{h\eps}{3}.$$
Consider now $\barr(k+1)-\barr(k)$:
\begin{align*}
\barr(k+1)-\barr(k)&=\sum_{i=1}^n x_i^\star c_i \ln \frac{x_i(k+1)}{x_i(k)}\\
&=\sum_{i=1}^n x_i^\star c_i \ln \inparen{1+h\inparen{\frac{a_i^\top p(k)}{c_i}-1}}\\
&\geq  \sum_{i=1}^n x_i^\star c_i \ln \inparen{1-h\px}\\
&\stackrel{ \eqref{eq:log_negative}}{\geq} -2h(n^2C+1)\sum_{i=1}^n x_e^\star c_e\\
&= -2h\px\cdot \opt\\
&\geq- \opt .
\end{align*}
The last inequality follows from the definition of $h$.
Putting these two pieces together yields:
$$\phi(k+1)-\phi(k)\leq -\frac{4h\eps}{3} + h \eps =-\frac{h\eps}{3}\leq -\frac{h^2\eps^2}{6}.$$
\end{proofof}

\begin{proofof}{of Fact~\ref{fact:small_gap}}
We know that $\pot(k)$ is non-increasing with $k$, it remains to show that $\barr$ will increase by a considerable amount. As in the proof of Fact~\ref{fact:big_gap} we obtain:
$$\barr(k+1)-\barr(k)=\sum_{i=1}^n x_i^\star c_i\ln \inparen{1+h\inparen{\frac{a_i^\top p(k)}{c_i}-1}}.$$
By the choice of $h$:
$$\abs{h\inparen{\frac{q_e(k)-x_e(k)}{x_e(k)}}}=\abs{h}\abs{\frac{a_i^\top p(k)}{c_e}-1}\leq h\cdot \px \leq \frac{1}{2}.$$
Hence we can apply the inequality~\ref{eq:log_lower}, by which we get:
\begin{align*}
\barr(k+1)-\barr(k)&\stackrel{\eqref{eq:log_lower}}{\geq}\sum_{i=1}^n x_i^\star c_ih\inparen{\frac{a_i^\top p(k)}{c_i}-1}-\sum_{i=1}^n x_i^\star c_ih^2\inparen{\frac{a_i^\top p(k)}{c_i}-1}^2\\
&\geq h\sum_{i=1}^n x_i^\star a_i^\top p(k) - h\cdot \opt -\sum_{i=1}^n x_i^\star c_i\inparen{h\px}^2\\
&\stackrel{\mathrm{Fact } \ref{fact:energy}}{=} h\ene(k)- h\cdot \opt-h^2\opt \px^2\\
&\geq  h\ene(k) - h\cdot \opt -\frac{1}{6}h\eps \opt \\
&\geq  h\cdot \opt\inparen{1+\frac{\eps}{3}} - h\cdot \opt -\frac{1}{6}h\eps \opt \\
&\geq-\frac{1}{6} h\cdot \eps\cdot  \opt .
\end{align*}
This implies the following drop of $\phi$:
$$\phi(k+1) - \phi(k) \leq - \frac{h\eps}{\opt}\inparen{\barr(k+1)-\barr(k)}\leq -\frac{h^2\eps^2}{6}.$$
\end{proofof}

\begin{proofof}{of Lemma~\ref{lemma:potential_drop}}
If $\eps$ is small enough and $\pot(k)>(1+\eps)\opt$ then obviously either $\frac{\ene(k)}{\pot(k)}<(1-\eps/3)$ or $\ene(k)>(1+\eps/3)\opt$ and we use Fact~\ref{fact:big_gap} or Fact~\ref{fact:small_gap} respectively.
This concludes our proof.
\end{proofof}

\bibliographystyle{alpha}
\bibliography{references}

\appendix

\section{Dynamical Systems and Global Existence}\label{section:dynamical}
We work in the $n$-dimensional Euclidean space $\R^n$, an open subset $\Omega \subseteq \R^n$ is chosen for the domain of the dynamical system. The equations are of the form:
$$\dot{x}_i(t)=F_i(x_1(t),x_2(t),\ldots,x_n(t))\qquad \mbox{for $i=1,2,\ldots,n$},$$
where $x_1, x_2, \ldots, x_n$ are functions of single variable $t$ regarded as unknowns and $F_1, F_2, \ldots, F_n:\Omega \to \R$ are given. The above system is often denoted shortly as:
$$\dot{x} = F(x)$$
where $F:\Omega \to \R^n$ is just $F=(F_1, F_2, \ldots, F_n)$. Typically we impose some initial condition of the form $x(0)=s$, with $s\in \Omega$. A solution to such a system is a function $x:I\to \Omega$, where $I$ is some non-degenerate interval containing $0$, $x(0)=s$ and $\dot{x}(t)=F(x(t))$ for every $t$ in the interior of $I$. We list some typical questions which arise in the study of a dynamical system. Suppose we are given a dynamical system $S$.
\begin{enumerate}
\item Is there any solution to $S$? What is the maximal interval of existence?
\item Is the solution unique?
\item What are the local properties of $S$ (around point $x(0)$)?
\item What are the global properties of the solution? Does it converge? Does it stay in a bounded region?
\end{enumerate}
We are interested mainly in the last question, in both qualitative and quantitative aspects (whether it does converge and what is the convergence rate). However, to attempt question (4) we must first answer (1) and (2); otherwise we may end up studying objects which do not even exist. Moreover, in many cases existence (and uniqueness\footnote{Note that even if a differential equation has a solution it may be not unique. Consider for example the following: $\dot{x}=3x^{2/3}$ with $x(0)=0$. It has two solutions $x(t)=t^3$ and $x(t)=0$.}) can be very hard to establish.

In our case it is not difficult to obtain a solution on some small interval $(-\eps,\eps)$. However, for applications we need existence on $[0,+\infty)$, (we call it global existence) which is harder to establish. Intuitively it may happen that at some finite time step $T$ the solution $x$ will ``escape'' from the domain $\Omega$. More formally, even if the solution $x$ exists in the interval $[0,T)$, the limit $\lim_{t\to T^{-}} x(t)$ may be outside of $\Omega$ or the limit may not even exist.   Escaping from $\Omega$ is not only a formal problem of defining $\Omega$, it is actually a serious issue, because in our case, the function $F$ is not well defined outside of $\Omega$.

In this section we provide a sufficient condition for a dynamical system to have global solutions. In the main body we prove that in fact the Physarum dynamical system satisfies this condition.

We first state an important general theorem which gives sufficient conditions for existence on some interval around $0$. A proof can be found in any textbook on dynamical systems, e.g.~\cite{perko2001differential}.
\begin{theorem}\label{theorem:existence_perko}
Let $\Omega \subseteq \R^n$ be an open set, $s\in \Omega$ and $F:\Omega \to \R^n$ be a function of class $\mathcal{C}^1$. Consider the following dynamical system: $\dot{x} = F(x)$ with initial condition $x(0)=s$. There exists a unique solution $x:(-\eps, \eps)\to \Omega$ to the system, for some $\eps>0$.
\end{theorem}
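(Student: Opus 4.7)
The plan is to prove Theorem~\ref{theorem:existence_perko} by the classical Picard--Lindel\"of argument: convert the differential equation into an equivalent integral equation, and then produce the solution as the unique fixed point of a contraction mapping via the Banach fixed-point theorem. This is the standard route, so my proposal focuses on the geometric setup and the choice of parameters rather than novel ideas.

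First I would exploit openness of $\Omega$ to choose a closed ball $\bar{B}(s,r) \subset \Omega$. Since $F$ is of class $\mathcal{C}^1$, both $F$ and its Jacobian are continuous, hence bounded on the compact set $\bar{B}(s,r)$. Set $M \defeq \sup_{x\in\bar{B}(s,r)} \|F(x)\|$ and, using the mean value inequality applied to the bounded Jacobian, extract a Lipschitz constant $L$ such that $\|F(x)-F(y)\| \leq L\|x-y\|$ for all $x,y\in\bar{B}(s,r)$. Now choose $\eps>0$ small enough so that simultaneously $\eps M \leq r$ and $\eps L \leq \tfrac{1}{2}$.

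Next I would work in the complete metric space $X \defeq C([-\eps,\eps], \bar{B}(s,r))$ equipped with the sup norm, and define the Picard operator $T: X \to X$ by
\[
T(x)(t) \;=\; s + \int_0^t F(x(\tau))\, d\tau.
\]
The bound $\|T(x)(t) - s\| \leq \eps M \leq r$ shows that $T$ does map $X$ into itself. For the contraction property, for any $x,y\in X$ and $t\in[-\eps,\eps]$,
\[
\|T(x)(t) - T(y)(t)\| \;\leq\; \Big|\int_0^t L\|x(\tau)-y(\tau)\|\, d\tau\Big| \;\leq\; \eps L \cdot \sup_{\tau}\|x(\tau)-y(\tau)\|,
\]
and by our choice of $\eps$ this is at most $\tfrac{1}{2}\sup_\tau\|x(\tau)-y(\tau)\|$. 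The Banach fixed-point theorem then yields a unique $x\in X$ with $x = T(x)$; because the right-hand side of the defining equation is a $\mathcal{C}^1$ function of $t$ (the integrand $F(x(\tau))$ is continuous in $\tau$), $x$ is automatically $\mathcal{C}^1$, and differentiating the fixed-point identity recovers $\dot{x}(t) = F(x(t))$ with $x(0)=s$. For uniqueness on arbitrary intervals (not merely within $X$), one supposes two solutions $x_1, x_2$ agree at $0$, considers the set where they coincide, and uses the local Lipschitz bound together with a Gronwall-type estimate on $\|x_1(t)-x_2(t)\|$ to show that any point of agreement lies in the interior of the agreement set, hence the set is open; since it is also closed and nonempty, it equals the whole interval.

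The only genuinely delicate point is interlocking the three constants $r, M, L$ so that the same $\eps$ works for both the invariance $T(X)\subseteq X$ and the contraction property; all three depend on $\bar{B}(s,r)$, so one must fix $r$ first and then shrink $\eps$ afterwards. I do not expect any serious obstacle beyond bookkeeping of these constants, since this is a standard result.
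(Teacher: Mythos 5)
Your proposal is correct: this is the classical Picard--Lindel\"of argument (restriction to a closed ball inside the open set $\Omega$, local Lipschitz constant from the bounded Jacobian, Picard operator as a contraction on $C([-\eps,\eps],\bar{B}(s,r))$, and the open-and-closed argument for uniqueness), and the constants $r$, $M$, $L$, $\eps$ are interlocked in the right order. The paper does not prove this theorem itself but defers to a standard textbook (\cite{perko2001differential}), and your argument is precisely the proof found there, so there is nothing to reconcile.
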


\noindent We conclude this subsection by proving a theorem that a certain simple condition implies existence on the whole half-line.
\begin{theorem}\label{theorem:bap_existence}
Let $\Omega \subseteq \R^n$ be an open set, $s\in \Omega$ and $F:\Omega \to \R^n$ be a function of class $\mathcal{C}^1$. Consider the following dynamical system: $ \dot{x}= F(x)$ with initial condition $x(0)=s$. Suppose it satisfies the following conditions for every solution $x:[0,T)\to \Omega$ with $T\in (0,\infty)$:
\begin{itemize}
\item the limit $\lim_{t\to T^{-}} x(t)$ exists,
\item if $x(T)=\lim_{t\to T^{-}} x(t)$ then $x(T)\in \Omega .$
\end{itemize}
Then the there exists a global solution $x:[0,\infty)\to \Omega$.
\end{theorem}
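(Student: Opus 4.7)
The plan is to define the maximal interval of existence and show it must be $[0,\infty)$ by a contradiction argument leveraging the two hypotheses together with the local existence theorem.

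First, Theorem~\ref{theorem:existence_perko} applied at the initial condition $s$ gives a solution on some interval $(-\epsilon,\epsilon)$, and in particular on $[0,\epsilon)$. Let $\mathcal{T}$ denote the set of all $T>0$ such that some solution $x:[0,T)\to \Omega$ with $x(0)=s$ exists. Set $T^\star \defeq \sup \mathcal{T} \in (0,\infty]$. Standard uniqueness (which follows from the $\mathcal{C}^1$ hypothesis on $F$ via Picard--Lindel\"of) guarantees that any two such solutions agree on the intersection of their domains, so by gluing we obtain a single maximal solution $x:[0,T^\star)\to \Omega$. The goal is to prove $T^\star=\infty$.

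Suppose for contradiction that $T^\star<\infty$. By the first hypothesis of the theorem, the limit $x(T^\star)\defeq \lim_{t\to T^{\star-}} x(t)$ exists. By the second hypothesis, the limit point lies in $\Omega$. Now apply Theorem~\ref{theorem:existence_perko} to the dynamical system $\dot{y}=F(y)$ with initial condition $y(0)=x(T^\star)\in \Omega$: this gives a local solution $y:(-\delta,\delta)\to \Omega$ for some $\delta>0$. Define an extended function $\widetilde{x}:[0,T^\star+\delta)\to \Omega$ by setting $\widetilde{x}(t)=x(t)$ for $t\in [0,T^\star)$, $\widetilde{x}(T^\star)=x(T^\star)$, and $\widetilde{x}(t)=y(t-T^\star)$ for $t\in [T^\star, T^\star+\delta)$.

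It remains to verify that $\widetilde{x}$ is indeed a solution on $[0,T^\star+\delta)$. Continuity at $T^\star$ follows from the existence of the limit $x(T^\star)$ and from $y(0)=x(T^\star)$. To verify differentiability at $T^\star$, note that the left derivative equals $\lim_{t\to T^{\star-}} F(x(t)) = F(x(T^\star))$ by continuity of $F$ and of $x$, while the right derivative equals $\dot{y}(0)=F(y(0))=F(x(T^\star))$; hence $\dot{\widetilde{x}}(T^\star)=F(\widetilde{x}(T^\star))$ exists, and $\widetilde{x}$ satisfies the ODE on the whole interval $[0,T^\star+\delta)$. This contradicts the maximality of $T^\star$, and therefore $T^\star=\infty$, proving global existence. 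The one subtlety that deserves care is the verification that $\widetilde{x}$ is differentiable at the glue point $T^\star$ and not merely continuous; this is the only place where the smoothness of $F$ is really being used beyond what Theorem~\ref{theorem:existence_perko} already provides, and it is the step I would write out most carefully.
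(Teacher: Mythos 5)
Your proof is correct and follows essentially the same route as the paper: define the maximal existence time, assume it is finite, use the two hypotheses to place the limit point in $\Omega$, and invoke the local existence theorem there to extend the solution, contradicting maximality. Your explicit verification of differentiability at the glue point is a detail the paper leaves implicit, but the argument is otherwise identical.
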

\begin{proof}
Let $T_0$ be the maximal $T$ such that a solution $x:[0,T)\to \Omega$ exists (we allow $T_0=\infty$). One should argue why does such a $T_0$ exists. This is a consequence of uniqueness: whenever there are two solutions $x:[0,T_1)\to \Omega$ and $y:[0,T_2)\to \Omega$ with $T_1\leq T_2$, then they agree on $[0,T_1)$.

From Theorem~\ref{theorem:existence_perko} we know that $T_0>0$. We want to show that $T_0=\infty$. For the sake of contradiction assume that $T_0$ is a finite number. Denote by $x(T_0)$ the limit $\lim_{t\to T_0^{-}}x(t)$. By assumption $x(T_0)\in \Omega$, thus we can use Theorem~\ref{theorem:existence_perko} to extend the solution from $[0,T_0)$ to $[0,T_0+\eps)$ for some $\eps>0$. 
This gives us a contradiction with the definition of $T_0$.
\end{proof}

\noindent An example of a system which does not satisfy the the assumption of Theorem~\ref{theorem:bap_existence}.
\begin{example}
Consider a one-dimensional system $\dot{x}=\frac{1}{1-x}$, with $x(0)=0$. The natural choice for the domain is $\Omega = \R \setminus \{1\}$ (only on this set the function $x\mapsto \frac{1}{1-x}$ makes sense). Let $T=\frac{1}{2}$, then there is a solution $x:[0,T) \to \Omega$ given by $x(t) = 1-\sqrt{1-2t}$. However:
$$\lim_{t\to T^{-}} x(t) = 1.$$
But $1\notin \Omega$, hence this system does not satisfy the condition from Theorem~\ref{theorem:bap_existence}. Note also that actually there is no solution to this system on the whole half-line $[0,\infty)$.
\end{example}

\end{document}